\pgfplotsset{/pgf/number format/use comma,compat=newest}
\newcommand\iid{\mathrel{\stackrel{\makebox[0pt]{\mbox{\normalfont\tiny iid}}}{\sim}}}
\theoremstyle{plain} 
\newtheorem{theorem}{Theorem}
\newtheorem{lemma}[theorem]{Lemma} 
\newtheorem{proposition}[theorem]{Proposition} 
\theoremstyle{definition} 
\newtheorem{definition}{Definition}
\theoremstyle{remark} 
\newtheorem{remark}{Remark}
\begin{document}

\title[Article Title]{On the phase diagram of the multiscale mean-field spin-glass}

\author*[1]{\fnm{Francesco} \sur{Camilli}}\email{fcamilli@ictp.it}

\author[2]{\fnm{Pierluigi} \sur{Contucci}}\email{pierluigi.contucci@unibo.it}

\author[2]{\fnm{Emanuele} \sur{Mingione}}\email{emanuele.mingione2@unibo.it}

\author[2]{\fnm{Daniele} \sur{Tantari}}\email{daniele.tantari@unibo.it}
\equalcont{Authors are ordered in alphabetical order and they all contributed equally.}

\affil[1]{\orgdiv{Quantitative Life Sciences}, \orgname{International Centre for Theoretical Physics}, \orgaddress{\street{Str. Costiera 11}, \city{Trieste}, \postcode{34151}, \country{Italy}}}

\affil[2]{\orgdiv{Dipartimento di Matematica}, \orgname{Alma Mater Studiorum - Università di Bologna}, \orgaddress{\street{Via Zamboni 33}, \city{Bologna}, \postcode{40126}, \country{Italy}}}

\abstract{In this paper we study the phase diagram of a Sherrington-Kirkpatrick (SK) model where the couplings are forced to thermalize at different time scales. Besides being a challenging generalization of the SK model, such settings may arise naturally in physics whenever part of the many degrees of freedom of a system relaxes to equilibrium considerably faster than the others. For this model we
compute the asymptotic value of the second moment of the overlap distribution.
Furthermore, we provide a rigorous sufficient condition for an annealed solution to hold, identifying a high temperature, {or weak coupling}, region. In addition, we also prove that for sufficiently {strong couplings} the solution must present a number of replica symmetry breaking levels at least equal to the number of time scales already present in the multiscale model. Finally, we give a sufficient condition for the existence of gaps in the support of the functional order parameters.
}

\keywords{Multiscale Spin-Glasses, Replica Symmetry Breaking, Disordered Systems, Statistical Mechanics}


\maketitle

\section{Introduction}\label{sec:Intro}

The Sherrington-Kirkpatrick (SK) model \cite{SKoriginal,MPV,Tala_vol1,Tala_vol2,Panchenko2013,Parisi80} is a spin system with random interactions characterized by two main features. Firstly the interaction is of mean field nature. Secondly the randomness is \textit{quenched}, i.e.\ the interaction couplings are random parameters and not thermodynamic degrees of freedom. This means that the thermodynamics is described by the so called \textit{quenched measure}: given a realization of the disorder the spins thermalize according to the Boltzmann-Gibbs distribution,  and the interaction randomness is treated with a successive statistical average. 

In the \textit{multiscale} SK model \cite{contucci-mingione} part of the random interactions become thermodynamic degrees of freedom and the quenched measure is replaced by a \textit{multiscale measure}, obtained by the following procedure. The interactions are divided in a finite number of families equilibrating in a hierarchical succession. Recursively, each family thermalizes at increasing temperature  according to a Boltzmann-Gibbs distribution, where the effective Hamiltonian is given by the free energy associated to the equilibrium of the previous  family. The above recursive construction is deeply related to Derrida-Ruelle probability cascades \cite{Derrida,Ruelle1987,Bolthausen1998,Bovier04}. A formal definition of the multiscale measure  will be  given in Section \ref{sec:defs}. 

It is worth to stress  that a  multiscale measure does not describe a standard thermodynamic equilibrium but rather an out of equilibrium scenario. 
{Indeed, it can be viewed as the stationary measure of a dynamical system in which different degrees of freedom, say $(x_\ell)_{1\leq\ell\leq r}$, are coupled to thermal baths at different temperatures and evolve on widely separated timescales, $\tau_1 \gg \ldots \gg \tau_r$.
At a given time $t \sim \tau_\ell$, the dynamics can be effectively described on the variable $x_\ell$, where the faster degrees of freedom $(x_{\ell'})_{\ell' > \ell}$ have already equilibrated, and the slower ones $(x{_{\ell'}})_{\ell' < \ell}$ are effectively frozen.
The stationary measure of this effective dynamics is then obtained through the same recursive procedure outlined above.
A concrete example of such dynamics is discussed in Remark \ref{Dynamics}. For a thorough discussion of the physical aspects, we refer to \cite{Penney_1993,Allahverdyan_2000pre,Coolen_1993,Contucci_2019,Contucci_2021}, and for a mathematical analysis to \cite{alberici2024convergence}.
} In a broader sense the multiscale measure can be used to describe out of equilibrium systems in the limit of small entropy production \cite{1999cond.mat.11086C}: here the different temperatures are defined as the ratio between  correlation and response functions, generalizing the classical fluctuation-dissipation relation \cite{bouchaud1998out,Cugliandolo_2011,PMPF}. 

The multiscale measure spontaneously emerges in the \textit{Replica Symmetry Breaking} (RSB) solution of SK model. More specifically, the Parisi formula \cite{Parisi80} can be obtained from the free energy of a system with non-interacting spins subject to multiscale Gaussian random external fields with a special correlation structure \cite{guerra2003broken,TalagrandPArisi,Panchenko2013}. The Parisi formula implies that at low temperature RSB occurs \cite{AuChen13,Au2017,Jaga2015} and the physical behavior of the model becomes, to some extent, close to that of out of equilibrium systems \cite{MPV,Leuzzibook}. We also mention that mean-field models where the external field is governed by a multiscale measure have been recently considered in a series of works \cite{BarraGino,Mourrat-Panchenko,Bimourrat,dominguez2024statistical} leading to a formulation of the Parisi solution in terms of Hamilton-Jacobi equations.

Multiscale models also arise in high dimensional inference when multiple steps of inference procedures are concatenated. We mention as examples: expectation-maximization approaches, reconstruction tasks where a part of the hidden signal has been revealed \cite{cui2021large,cui2022large}, or matrix inference models \cite{Barbier_Camilli2024}.  Similar structures also emerge in the teacher-student analysis where the multibath property is related to a possible mismatch between inference and generating temperatures \cite{nishimori2001statistical,alemanno2023hopfield, theriault2024dense,theriault2024modelling}.  Multiscale multibath hierarchical procedures are also common in the context of machine learning, especially in the area of transfer learning and fine tuning techniques analysis \cite{gerace2022probing,li2023statistical,ingrosso2024statistical}.

In \cite{contucci-mingione} the multiscale SK model was shown to obey  a Parisi-like variational principle. In this work we study its solution and the related thermodynamic properties. In particular we 
describe how the multiscale structure affects the distribution of the order parameter.  Contrary to what happens in the SK model, here the system's behavior can be characterized through the overlap distribution at all the different time scales. In physical terms the overlap at the $\ell$-th scale is obtained by measuring the scalar product between the spin configurations of two system's replicas evolved up to the $\ell$-th equilibrating time from the same initialization. In this situation part of the interactions has evolved while the rest is still frozen. 

Based on the above physical picture the overlap is expected to decrease at increasing equilibrating time, since the greater the differences in interactions between the two replicas, the more distinctly their spin configurations will evolve. Furthermore, according to the conventional ultrametric picture of the states, at low temperature different timescales are expected to correspond to different ranges of overlap values, which are related to the corresponding different levels of the ultrametric tree \cite{MPV, Panchenko2013,Jagannath}. It is also reasonable to expect that, at sufficiently high temperature, all the $\ell$-averages of the overlap vanish and the system is completely annealed.
Interestingly, this model can include a partial annealing mechanism in which the overlap is  different from zero up to a certain timescale, beyond which it then becomes zero.

This heuristic picture is confirmed and specified by the results of this work. We start by showing how the $\ell$-th scale average of the overlap is related to the minimizer of the Parisi formula by the synchronization mechanism \cite{panchenko_multi-SK, franz1992ultrametricity,genovese2017overlap} and in Theorem~\ref{thm:overlap_moments} we compute it. Following that, in Theorem~\ref{thm:annealed} we characterize the high-temperature region of the model where in absence of external magnetic fields the free energy is yielded by an annealed computation, namely by considering all interactions as thermodynamic degrees of freedom at the same time scale of the spins. 
In Theorem~\ref{thm:minRSB} we give a sufficient low temperature type condition  to have  at least as many RSB levels as the time scales originally present in the model. Finally in Theorem \ref{thm:plateau} we characterize  the overlap distribution across the various scales providing a sufficient condition for the existence of gaps in its support.

The paper is organized as follows. In Section \ref{sec:defs} the SK multiscale measure is introduced and the Parisi like variational principle for the multiscale pressure is recalled. In Section \ref{sec:results} the main results are presented and proved in  Section \ref{sec:proof}. In  Section \ref{sec:conclusions} we present a synthetic picture of our finding together with   conclusions and perspectives.

\section{Definitions}\label{sec:defs}
Let us first define the pressure of the multiscale Sherrington-Kirkpatrick (SK) model \cite{contucci-mingione}. Consider an integer $r\geq 1$ and two sequences $\zeta=(\zeta_{\ell})_{\ell\leq r}$ and $\gamma=(\gamma_{\ell})_{\ell\leq r}$ such that
\begin{align}
    \label{eq:sequences1}
    &0=\zeta_{-1}<\zeta_0<\dots<\zeta_r=1\\
    \label{eq:sequences2}
    &0=\gamma_0<\gamma_1<\dots<\gamma_r<\infty\,,
\end{align}
and a set of $N$ spins $\sigma\equiv(\sigma_i)_{i\leq N}\in\{-1,1\}^N$ interacting via a random Hamiltonian
\begin{align}
    H_N(\sigma)=\sum_{\ell=1}^r H^{(\ell)}_{N}(\sigma)\,.
\end{align}
Analogously to the standard SK model, we take each of the Hamiltonians $H_N^{(\ell)}$ to be a $2^N$-dimensional Gaussian process indexed by the spin configurations, identified by the following covariance
\begin{align}\label{eq:hami}
    \mathbb{E}\, H^{(\ell)}_N(\sigma)H^{(\ell')}_N(\tau)=N\delta_{\ell\ell'}(\gamma_{\ell}^2-\gamma_{\ell-1}^2)q_N^2(\sigma,\tau)
\end{align}for any $\sigma,\tau\in\{-1,1\}^N$, where we have introduced the overlap
\begin{align}\label{def:overlap}
    q_N(\sigma,\tau)=\frac{1}{N}\sum_{i=1}^N\sigma_i\tau_i\,.
\end{align}
The same model can be equivalently represented in terms of independent and identically distributed standard Gaussians $g_{ij}^{(\ell)}$:
\begin{align}\label{def:hamiell}
    H_N^{(\ell)}(\sigma)=\sqrt{\gamma_{\ell}^2-\gamma_{\ell-1}^2}\sum_{i,j=1}^N\frac{g_{ij}^{(\ell)}}{\sqrt{N}}\sigma_i\sigma_j\,.
\end{align}

The thermodynamic pressure of the multiscale SK model is defined by recursive integration of the randomness in the different Hamiltonians $H_N^{(\ell)}$ at different  scales. More specifically, define the backwards recursion
\begin{align}\label{eq:recursion_main}
    Z_{\ell-1,N}^{\zeta_{\ell-1}}=\mathbb{E}_{\ell-1}Z_{\ell,N}^{\zeta_{\ell-1}}\,,\quad \mathbb{E}_{\ell-1}=\mathbb{E}_{(g^{(\ell)}_{ij})_{i,j\leq N}}
\end{align}for any $\ell=1,\dots,r$, with starting point
\begin{align}
    Z_{r,N}=\sum_{\sigma\in\{-1,1\}^N}e^{-H_N(\sigma)-\sum_{i=1}^Nh_i\sigma_i}\,,
\end{align}where $h_i\sim P_h$ are i.i.d.\ copies of a compactly supported quenched random variable.
We stress that even if \eqref{eq:hami} looks like a sum of independent Gaussians, the recursion \eqref{eq:recursion_main} induces non trivial dependencies, as later illustrated.

Given a realization of $h$, $Z_{r,N}$ depends on all the randomness, i.e. $(g_{ij}^{(\ell)})_{i,j\leq N}^{\ell\leq r}$, while $Z_{\ell,N}$ depends only on the randomness up to level $\ell$, that is $(g_{ij}^{(p)})_{i,j\leq N}^{p\leq\ell}$. We are now ready to define the main object under investigation:

\vspace{3pt}
\begin{definition}[Pressure per particle]
    The pressure per particle of the multiscale SK model is \begin{align}\label{eq:pressure_per_particle}
    p_N=\frac{1}{N}\EE_h\log Z_{0,N}\,.
\end{align}
\end{definition}

\vspace{3pt}
\begin{remark}
For  $r=1$ and $\zeta_0>0$ the model was already studied in \cite{TalASS,BAG}, while  for $\zeta_0\to0$ the quantity $p_N$ reduces to the quenched pressure of an SK model at inverse temperature $\beta=\gamma_r$.
\end{remark}

\vspace{3pt}
\begin{remark}
One can also define the $\ell$-level pressure  $P_{\ell,N}=\log Z_{\ell,N}$, then the recursion \eqref{eq:recursion_main} rewrites as 

\begin{equation}e^{\zeta_{\ell-1}\,P_{\ell-1,N}}=\mathbb{E}_{\ell-1}e^{\zeta_{\ell-1}\,P_{\ell,N}}.
\end{equation}

The above relation is common  in renormalization group approach to field theory (see \cite{Gallavotti}).
\end{remark}

\vspace{3pt}
\begin{remark}\label{RPC-repres}
The  pressure \eqref{eq:pressure_per_particle} can be also written as the  quenched pressure of  an auxiliary system with configuration space $\{1,-1\}^N\times \mathbb{N}^r$  

\begin{equation}\label{eq:pressureRPC}
p_N=\frac{1}{N}\,\E\,\log\sum_{\alpha\in\mathbb{N}^r}\nu_{\alpha}\sum_{\sigma\in\{-1,1\}^N} e^{\,H_N(\sigma,\alpha)}
\end{equation}
where $H_N(\sigma,\alpha)$ is a suitable centered Gaussian process $(\nu_{\alpha})_{\alpha\in\mathbb{N}^r}$ are random probabilistic weights  associated to a  Ruelle Probability Cascade \cite{Panchenko2013,contucci-mingione}. 
\end{remark}

\subsection{The multiscale measure}\label{sec:multiscale_measure}
The pressure \eqref{eq:pressure_per_particle} is the generating functional of the Hamiltonians $(H_N^{(\ell)})_{\ell\leq r}$ with respect to the \textit{multiscale measure}. In particular, the recursion \eqref{eq:recursion_main} implies that the average w.r.t. the multiscale  measure   is obtained by a sequence of Boltzmann-Gibbs averages of the different degrees of freedom, each of them performed at a proper temperature and effective  potential. In the current setting the degrees of freedom are the spins $\sigma\in\{-1,1\}^N$ and the collection of couplings $(g^{(\ell)})_{\ell\leq r}$ where $g^{(\ell)}=(g^{(\ell)}_{ij})_{i,j\leq N}$. The spins are the fastest variables (i.e.\ first to thermalize) and  if $\ell>\ell'$ $g^{(\ell)}$ is faster than $g^{(\ell')}$. Finally the external field $(h_i)_{i\leq N}$ is completely quenched, namely one has to average it at the end of the thermalization procedure. 

The above picture is formally defined as follows. The spins thermalize according to the standard Boltzmann-Gibbs measure given a realization of $g=(g^{(\ell)})_{\ell\leq r}$ and $h$, namely for any function $A(\sigma)$ we set 
\be\label{eq:BGdistribution}
\left\langle\,A\,\right\rangle_N=\sum_{\sigma\in\{-1,1\}^N}\dfrac{e^{-H_N(\sigma)+\sum_{i\leq N}h_i\sigma_i}}{Z_{r,N}} \,A(\sigma)\,.
\ee
The average of the remaining degrees of freedom is taken using a suitable collection of probability weights.  For any $\ell\leq r$ we set
\begin{align}\label{eq:tilts_ell}
    f_{\ell,N}:=\frac{Z_{\ell,N}^{\zeta_{\ell-1}}}{\EE_{\ell-1}Z_{\ell,N}^{\zeta_{\ell-1}}}\,,\quad 1\leq \ell\leq r\,.
\end{align} Notice that $(f_{\ell})_{\ell\leq r}$ are random probabilistic weights, in particular $f_\ell$ depends on the families $(g^{(\ell')})_{\ell'\leq\ell}$ and $h$. 

\vspace{3pt}
\begin{definition}\label{def:l-scales-average}
Given $\ell\in\{0,\ldots,r\}$  and a measurable function $A(\sigma,(g^{(\ell')})_{\ell<\ell'\leq r}))$ we define its
\textit{$\ell$-th scale average}  as
\begin{align}\label{eq:level-bracket0}
\langle\,A\,\rangle_N^{(\ell)}=\EE_{\ell}\EE_{\ell+1}\dots\EE_{r-1} \,f_{\ell+1,N}\dots f_{r,N}\langle\,A \,\rangle_N
\end{align}
with the convention  $\langle\,\cdot\,\rangle_N^{(r)}\equiv \langle\,\cdot\,\rangle_N$. 
\end{definition}
The above definition can be equivalently rewritten in terms of the recursion
\begin{align}
    \label{eq:recursion_bracket_original}
    \langle\cdot\rangle^{(\ell-1)}_N=\EE_{\ell-1}f_{\ell,N}\langle\cdot\rangle^{(\ell)}_N\,.
\end{align}

$\langle\,A\,\rangle_N^{(\ell)}$ is random through $(g^{(\ell')})_{\ell'\leq\ell}$ and $h$; from a probabilistic point of view it can be viewed as the conditional expectation given $(g^{(\ell')})_{\ell'\leq\ell}$ and $h$.\\

\begin{remark}\label{Dynamics}
{
The recursion \eqref{eq:recursion_bracket_original} can be interpreted as defining a multiscale measure induced by a generic Hamiltonian $H$ of a system with $r+1$ families of degrees of freedom. Let us denote these families by $x=(x_{\ell})_{\ell \leq r+1}$ and the average w.r.t.\ the randomness of $x_{\ell}$ by $\E_{\ell-1}$. The multiscale measure induced by $H$ is obtained by \eqref{eq:recursion_bracket_original} starting from
\be
\langle \cdot \rangle^{(r)}=\frac{1}{\E_{r} e^{-\beta H(x)}}\E_{r}e^{\beta H(x)}(\cdot)\,.
\ee

As mentioned in the introduction, there is a dynamical interpretation of the above construction. Let us assume to deal with continuous degrees of freedom  and consider the following system of Langevin equations:
\be
\tau_{\ell} \, \mathrm{d} x_{\ell} = -\partial_{\ell} H \, \mathrm{d}t + \sqrt{\frac{2\tau_{\ell}}{\beta_{\ell}}} \, \mathrm{d}W_{\ell}, \quad \ell = 1, \ldots, r+1
\ee
where $\tau_{\ell}, \beta_{\ell} > 0$, and $\{W_{\ell}\}_{\ell \leq r+1}$ are independent Wiener processes. Then, under the asymptotic regime $\frac{\tau_{\ell}}{\tau_{\ell+1}} \to \infty$, one can show (see \cite{contucci2021stationarization, alberici2024convergence}) that the stationary measure of the system converges to the recursive multiscale structure given by \eqref{eq:recursion_bracket_original}, with the identifications $\beta_{r+1}=\beta$ and $\zeta_{\ell} = \beta_{\ell+1} / \beta$.}
\end{remark}
\vspace{3pt}

The order parameter of the model turns out to be \cite{contucci-mingione} the distribution of overlap  \eqref{def:overlap}, which is a function of two spin configurations that in physical jargon are called \textit{replicas}. Clearly one has to specify from which distribution   spin configurations are sampled. In fact, in the multiscale setting there are $r+1$ different ways to sample them. 

\begin{definition}
Given $\ell\in\{0,\ldots,r\}$ we denote  by $\mu_N^{(\ell)}$ the measure in \eqref{eq:level-bracket0}. Let $A_{1,\dots,n}\equiv A(\sigma^{(1)},\dots,\sigma^{(n)})$ be a function of $n$ spin configurations that are sampled independently from $\mu^{(\ell)}_N$, the \textit{$\ell$-th level replicated average} is defined as 
\begin{align}\label{eq:replevel-bracket}
    \langle\,A_{1,\ldots,n}\,\rangle_N^{(\ell)}=\int \prod_{a=1}^n\mu^{(\ell)}_N(d\sigma^{(a)}) A(\sigma^{(1)},\dots,\sigma^{(n)})\,.
\end{align}

\end{definition}

Some remarks are in order. Recall that, as for the single replica average \eqref{eq:level-bracket0}, $\langle\,A\,\rangle_N^{(\ell)}$ is random through $(g^{(\ell')})_{\ell'\leq\ell}$ and $h$. Second, each replica $\sigma^{(a)}$ shares the same \enquote{outer} disorder $(g^{(\ell')})_{\ell'\leq\ell}$, but they all come with their own replicas of the couplings for the levels $(g^{(\ell')})_{\ell'>\ell}$. In this sense, the multiscale model presents a fundamental difference with the standard SK.

We are mostly interested in the case $n=2$ and $A_{12}\equiv q_N(\sigma^1,\sigma^2)$ where $q_N$ is  overlap \eqref{def:overlap}.  Hence, for any $\ell\in\{0,\ldots,r\}$ and  bounded  function $\psi$, we have the  expectation
\begin{align}\label{overlap average}
    \EE\langle \langle \psi(q_{1,2})\rangle^{(\ell)}_N\rangle^{(0)}=\EE_h\,\prod_{p=1}^{\ell} f_{p,N}\int \mu_N^{(\ell)}(d\sigma^1)\mu_N^{(\ell)}(d\sigma^2) \psi(q_{N}(\sigma^1,\sigma^2))\,.  
\end{align}
The average \eqref{overlap average} can be divided into three steps. Given $(g^{(\ell')})_{\ell'\leq \ell}$ we first take the average value of the overlap between two spin configuration sampled from $\mu_N^{(\ell)}$. The result is a random function of $(g^{(\ell')})_{\ell'\leq \ell}$ which is integrated using the tilted measure $\langle\, \cdot\,\rangle^{(0)}_N$. Finally we take the expectation $\E_h$  to average out the  quenched external field. 

The connection between the $\ell$-scale average \eqref{eq:level-bracket0} and the multiscale pressure is essentially due to the recursion  \eqref{eq:recursion_main} (see \cite{guerra2003broken}). 
As an example one can compute the various contributions to the  \textit{internal energy} coming from $(H^{(\ell)}_N)_{\ell\leq r}$. 
Setting $\beta_{\ell}=\sqrt{\gamma^2_\ell-\gamma^2_{\ell-1}}$  one obtains

{
\be\label{internell}
\frac{\partial p_N}{\partial\beta_{\ell}} \,=\,-\,\mathbb{E}\prod_{s=1}^r f_{s,N}\left\langle \,\dfrac{H_N^{(\ell)}}{\beta_{\ell}N}\, \right\rangle_N\,=\,-\,\mathbb{E}_h \left\langle \dfrac{H_N^{(\ell)}}{\beta_{\ell}N}\right\rangle^{(0)}_N.
\ee}
Since $H^{(\ell)}_N$ is a Gaussian one can apply integration by parts to rewrite the r.h.s. of \eqref{internell} as
\begin{equation}\label{eq:internderiv}
\frac{\partial p_N}{\partial\beta_{\ell}} =\beta_{\ell}\Big(1-\sum_{p=\ell}^r (\zeta_{p}-\zeta_{p-1})\, \E_h \Big \langle \,\langle q^2_{12}\rangle^{(p)}_N\Big\rangle^{(0)}_N \Big)
\end{equation}
where $\langle\,\cdot\,\rangle^{(p)}_N$ is the $p$-level replicated average defined in \eqref{eq:replevel-bracket}.
Moreover  the quantity $\frac{\partial p_N}{\partial\gamma_{\ell}}$ gives the contribution to the internal energy  due to  the $\ell$-average. From \eqref{eq:internderiv} one obtains
\begin{equation}\label{eq:gamma_der_pN}
\frac{\partial p_N}{\partial\gamma_{\ell}}\,=\, \begin{cases}-\gamma_{\ell}(\zeta_{\ell}-\zeta_{{\ell}-1})\,\E_h \Big \langle \,\langle q^2_{12}\rangle^{(\ell)}_N\Big\rangle^{(0)}_N\,,\quad \mathrm{if}\quad \ell=1,\ldots, r-1 \\ \gamma_r\Big(1-(1-\zeta_{r-1})\, \E_h \Big \langle \,\langle q^2_{12}\rangle^{(r)}_N\Big\rangle^{(0)}_N \Big)\,\quad \mathrm{if}\quad \ell=r.
\end{cases} 
\end{equation}

\subsection{The variational formula}
The limiting value as $N\to\infty$ of the pressure  \eqref{eq:pressure_per_particle} exists and  it can be represented as the  solution of an infinite dimensional variational problem \cite{contucci-mingione}. Let us introduce it first. Consider an integer $k\geq r$ and two sequences
\begin{align}\label{eq:solution_sequences}
    &0=\xi_{-1}\leq\xi_0\leq \dots\leq\xi_k=\xi_{k+1}=1\\
    \label{eq:solution_sequences2}
    &0=x_0\leq x_1\leq \dots\leq x_k\leq x_{k+1}=1\,,
\end{align}
where $\xi=(\xi_j)_{j=0,\dots,k+1}\supseteq\zeta=(\zeta_\ell)_{\ell=0,\dots,r}$. Given $(\gamma_\ell)_{\ell\leq r}$ in \eqref{eq:sequences2} we define 
\begin{align}\label{eq:tilde_gamma}
    \tilde\gamma_j=\gamma_\ell\quad\text{if}\quad j\in K_{\ell}=\{j:\zeta_{\ell-1}<\xi_j\leq\zeta_{\ell}\}\quad\text{for some}\,\,\ell\in\{0,\ldots,r\}
\end{align}
 and the one-body Hamiltonian
\begin{align}
    -\tilde H(\sigma)=\sigma\Big(\sqrt{2}\sum_{j=1}^{k+1}\eta_j\sqrt{\tilde\gamma^2_{j}x_j-\tilde\gamma^2_{j-1}x_{j-1}}+h\Big){=\sigma\Big(\sqrt{2}\sum_{\ell=0}^{r}\gamma_\ell\sum_{j\in K_\ell}\eta_j\sqrt{x_j-x_{j-1}}+h\Big)}\,,
\end{align}where $\sigma=\pm1$, $\eta_j\iid\mathcal{N}(0,1)$, and $h\sim P_h$. Starting from
\begin{align}
    \tilde Z_{k+1}:=\sum_{\sigma=\pm1} e^{-\tilde H(\sigma)} =2\cosh\big(\sqrt{2}\sum_{j=1}^{k+1}\eta_j\sqrt{\tilde\gamma^2_{j}x_j-\tilde\gamma^2_{j-1}x_{j-1}}+h\big)
\end{align} we define again the backwards recursion
\begin{align}\label{recurtilde}
    \tilde Z_{j-1}^{\xi_{j-1}}=\mathbb{E}_{j-1}\tilde Z_{j}^{\xi_{j-1}}\,,\quad \mathbb{E}_{j-1}=\mathbb{E}_{\eta_j}\,,
\end{align}for $j=1,\dots,k+1$. This finally allows to introduce the \emph{Parisi functional} for the multiscale model as
\begin{align}
    \label{eq:Parisi_functional}
    \mathcal{P}(x,\xi)=\log \tilde Z_{0}-\frac{1}{2}\sum_{j=0}^{k}\xi_{j}\big((\tilde\gamma_{j+1}x_{j+1})^2-(\tilde\gamma_jx_j)^2\big){
    =\log \tilde Z_{0}-\frac{1}{2}\sum_{\ell=0}^r\gamma_\ell^2\sum_{j\in K_\ell} \xi_{j}(x_{j+1}^2-x_j)
    }\,.
\end{align}

Out of convenience we introduce the following sets of allowed sequences
\begin{align}
\mathcal{M}=\bigcup_{k\geq r}\mathcal{M}_k\,,\quad\mathcal{M}_k:=\{
    (x,\xi)\text{ verifying \eqref{eq:solution_sequences}-\eqref{eq:solution_sequences2}},\,\xi\supseteq\zeta,\, \text{card}(x)=k+2
    \}.
\end{align} We can now state the main result of \cite{contucci-mingione}, as it plays a central role in our analysis.
\vspace{3pt}

\begin{theorem}\label{them:pressure}The thermodynamic limit of the quenched pressure density of the Multiscale SK (MSK) model \eqref{eq:pressure_per_particle} exists and is given by the infinite dimensional variational principle
\begin{align}\label{eq:var_principle}
\lim_{N\to\infty}p_N=\inf_{(x,\xi)\in\mathcal{M}}\mathcal{P}(x,\xi)\,.
\end{align}
\end{theorem}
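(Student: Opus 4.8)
\emph{Proof plan.} The plan is to obtain \eqref{eq:var_principle} from two matching bounds, in the spirit of the Guerra--Talagrand--Panchenko proof of the Parisi formula, but carried out at the level of the recursive multiscale structure \eqref{eq:recursion_main}. Concretely, I would show that for every admissible pair $(x,\xi)\in\mathcal{M}$ one has $\limsup_{N}p_N\le\mathcal{P}(x,\xi)$ (a Guerra-type upper bound), and conversely $\liminf_{N}p_N\ge\inf_{\mathcal{M}}\mathcal{P}$ (an Aizenman--Sims--Starr lower bound). Together with the trivial inequality $\limsup\ge\liminf$, this yields simultaneously the existence of the limit in \eqref{eq:pressure_per_particle} and the identity \eqref{eq:var_principle}.

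For the upper bound I would run a Guerra interpolation adapted to the hierarchy. Fix $(x,\xi)\in\mathcal{M}_k$ and attach to the spins the Ruelle probability cascade whose weights are governed by $\xi$ (the cascade of Remark~\ref{RPC-repres}, now refining $\zeta$ through $\xi\supseteq\zeta$). Introduce an interpolating family of centered Gaussian Hamiltonians $H_{N,t}$, $t\in[0,1]$, in which a fraction $t$ of each scale Hamiltonian $H_N^{(\ell)}$, carrying its covariance $N(\gamma_\ell^2-\gamma_{\ell-1}^2)q_N^2$ from \eqref{eq:hami}, is traded against a fraction $1-t$ of one-body cavity fields whose strengths match the increments $\tilde\gamma_j^2x_j-\tilde\gamma_{j-1}^2x_{j-1}$ of \eqref{eq:tilde_gamma}; crucially the interpolation is performed \emph{inside} each expectation block $\mathbb{E}_\ell$ of the recursion \eqref{eq:recursion_main}, so that the nested, tilted structure is preserved throughout. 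At $t=1$ the interpolating pressure equals $p_N$, while at $t=0$ the system fully decouples and it equals $\mathcal{P}(x,\xi)$, the quadratic subtraction in \eqref{eq:Parisi_functional} being exactly the term needed to absorb the ``diagonal'' part of the $t$-derivative. Differentiating in $t$ and integrating by parts against the Gaussian disorder, the error reduces to a cascade-weighted average of $(q_N(\sigma^1,\sigma^2)-\bar q)^2$; since $q\mapsto q^2$ is convex and the cascade weights are those of \eqref{eq:Parisi_functional}, this has a definite sign, so $\frac{d}{dt}\phi_N(t)\le0$ and hence $p_N\le\mathcal{P}(x,\xi)$. Taking the infimum over $\mathcal{M}$ gives $\limsup_N p_N\le\inf_{\mathcal{M}}\mathcal{P}$.

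For the lower bound I would use the Aizenman--Sims--Starr cavity scheme together with the structural theory of the asymptotic Gibbs measures. Writing $p_N$ as a telescoping sum of one-spin increments $\tfrac1N\mathbb{E}\log(Z_{0,N+1}/Z_{0,N})$ and passing to a subsequence, the limiting free energy is expressed through a random asymptotic measure on the sphere which, by the Dovbysh--Sudakov representation, is encoded by overlap arrays --- one per scale $\ell=0,\dots,r$, corresponding to the measures $\mu_N^{(\ell)}$. Adding the standard Ghirlanda--Guerra perturbation to \eqref{eq:pressure_per_particle}, which leaves the limit unchanged, forces each of these $r+1$ arrays to be ultrametric and, more importantly, forces their \emph{synchronization} across scales: the overlap at scale $\ell$ becomes a non-decreasing function of a single master overlap, exactly as in the multi-species analysis of \cite{panchenko_multi-SK,franz1992ultrametricity,genovese2017overlap}. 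This collapses the trial object to a single Ruelle cascade carrying the block-constant couplings $\tilde\gamma_j$ of \eqref{eq:tilde_gamma}, with the original scales $\zeta$ necessarily embedded as a subsequence of the cascade's levels, i.e.\ $\xi\supseteq\zeta$; evaluating the cavity functional on it returns $\mathcal{P}(x,\xi)$ for some $(x,\xi)\in\mathcal{M}$, whence $\liminf_N p_N\ge\inf_{\mathcal{M}}\mathcal{P}$.

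The main obstacle is the lower bound, and within it the synchronization step: in the single-scale SK model there is nothing to synchronize, whereas here one must control the joint law of the $r+1$ nested overlap arrays and prove that they are simultaneously ultrametric and comonotone. This requires extending Panchenko's synchronization theorem from the multi-species setting to the hierarchically coupled families $\mu_N^{(0)},\dots,\mu_N^{(r)}$, exploiting that the couplings $g^{(\ell)}$ are shared by replicas only up to scale $\ell$. A secondary technical point is verifying that the constraints defining $\mathcal{M}$ --- the rigidity $\xi\supseteq\zeta$ and the block-constancy $\tilde\gamma_j=\gamma_\ell$ on $K_\ell$ --- are precisely what emerges on both sides, so that the two bounds meet; routine continuity and monotonicity arguments in $k$ then pass from the finite-dimensional $\mathcal{M}_k$ to $\mathcal{M}$.
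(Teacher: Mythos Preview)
Your proposal is correct and matches the approach the paper describes: the theorem is quoted from \cite{contucci-mingione}, and the paper summarizes its proof as a Guerra-type RSB upper bound adapted to the multiscale structure together with a lower bound obtained via the synchronization property for Ruelle probability cascades, which is precisely the two-sided strategy you outline. You have also correctly identified the synchronization step in the lower bound as the main technical obstacle.
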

\vspace{3pt}

The proof of this theorem requires a non-trivial extension of the Guerra replica symmetry breaking upper bound \cite{guerra2003broken} and the use of the synchronization property \cite{panchenko_multi-SK} for Ruelle Probability Cascades (RPCs) for the lower bound. We refer the reader to \cite{contucci-mingione} for the details. Here we aim at assessing some fundamental properties of the optimization over $\mathcal{M}$. The problem of uniqueness of the solution is not addressed here. However, since  Theorem \ref{them:pressure} has many analogies with the Parisi formula for the SK model where uniqueness holds \cite{AuChen13, Jatoga}, it is reasonable to expect the same here.

\section{Results}\label{sec:results}

\subsection{The order parameter and synchronization}\label{sec:synchro}
In this section we analyze the continuity properties of $\mathcal{P}$ seen as a function over $\mathcal{M}$. We shall see that, contrary to the plain SK model, in this multiscale version $\mathcal{P}$ is not only a function of the probability distribution associated to the sequences $(x,\xi)$, but it preserves some memory of the scales originally present, identified by the sequences $(\gamma,\zeta)$, in a sense that will be rigorously specified.

Let us denote by $\mathrm{Pr}$ the space of probability measures supported on $[0,1]$. Define $\mathrm{Pr}_k$, for $k$ integer, the set of probability measures supported on $k+1$ points on $[0,1)$. Any $\mu\in\mathrm{Pr}_{k}$ can be identified with a pair of strictly increasing sequences $(y,m)$ such that
\begin{align}
0=m_{-1}<m_0<\dots<m_{k}=1\\
0=y_0<y_1<\dots...<y_{k}<1\,.
\end{align}The mapping to such distribution is realized by the following
\be
\mu(y_i)=
m_{i}-m_{i-1}\,,\,\quad i=0,\ldots,k\,.
\ee

\begin{definition}[Quantile]
Let $\rho$ be a probability measure supported on $[0,a]$ for some $a>0$. The quantile function associated to $\rho$ is 
\be\label{def quantile}
\rho^{-1}(p):=\inf\{s\in[0,a]:\rho([0,s])\geq p \}\,,\quad p\in[0,1].
\ee
\end{definition}

\vspace{3pt}

Consider a $k'$ integer and $\mu\equiv (y,m)\in\mathrm{Pr}_{k'}$. Set $k:=|m\cup\zeta|\geq k'$. Note that the sequence $m$ can already contain some of the $\zeta$'s, or even all of them. We define a pair $(x^{\mu},\xi^{\mu})\in\mathcal{M}_k$ associated to $\mu$ by 
\begin{equation}\label{map}
\begin{cases}
\begin{aligned}
\xi^{\mu}\quad  &\,\text{take $m\cup\zeta$  increasingly ordered and set $\xi^{\mu}_{k+1}=1$}\\
x^{\mu}_j\,=&\mu^{-1}({\xi_{j}^{\mu}})\,, \quad \mathrm{for}\,\,j=0\ldots,k\,\,\text{and set $x^{\mu}_{k+1}=1$}\,.
\end{aligned}
\end{cases}
\end{equation}
\begin{figure}[h!!!]
    \centering
    \begin{tikzpicture}
\begin{axis}[
    xlabel={x},
    ylabel={$F_\mu$},
    xmin=0, xmax=1,
    ymin=0, ymax=1,
    scale=1
]
\addplot[color=blue,very thick,domain=0:0.3, 
    samples=100] {.2} node[above,pos=.5] {};
\addplot[style=dashed,domain=0:1, 
    samples=100] {.2} node[above,pos=.05] {$m_0$};

\addplot[color=blue,very thick,domain=0.3:0.5, 
    samples=100] {.6} node[above,pos=.2] {};
\addplot[style=dashed,domain=0:1, 
    samples=100] {.6} node[above,pos=.05] {$m_1$};

\addplot[color=blue,very thick,domain=0.5:0.8, 
    samples=100] {.7} node[above,pos=.2] {};
\addplot[domain=0:1,style=dashed, 
    samples=100] {.7} node[above,pos=.1] {$m_2={\color{red}\zeta_2}$};
 
\addplot[color=blue,ultra thick,domain=0.8:1, 
    samples=100] {1} node[below,pos=.2] {};

\addplot[domain=0:1,style=dashed, 
    samples=100,color=red] {.3} node[above,pos=.05] {$\zeta_0$};

\addplot[domain=0:1,style=dashed, 
    samples=100,color=red] {.5} node[above,pos=.05] {$\zeta_1$};

\addplot[domain=0:1,style=dashed, 
    samples=100,color=red] {.85} node[above,pos=.05] {$\zeta_3$};

\addplot[domain=0:1,style=dashed, 
    samples=100,color=red] {.7} node[above,pos=.05] {};

\node[above] at (0.08,0) {\small $x_0=0$};
\filldraw[blue] (0,0.2) circle (3pt);

\filldraw[red] (0.3,0.3) circle (3pt);

\filldraw[red] (0.3,0.5) circle (3pt);

\filldraw[blue] (0.3,0.6) circle (3pt);

\filldraw[blue] (0.5,0.7) circle (3pt);

\filldraw[red] (0.8,0.85) circle (3pt);

\filldraw[blue] (0.8,1) circle (3pt);

\end{axis}
\end{tikzpicture}
\caption{Example of construction of the sequence $(x^\mu,\xi^\mu)$ associated to a probabilty measure $\mu$ with CDF $F_{\mu}$ plotted in blue. In this example we have  $k=7$, $m_0<\zeta_0,\zeta_1<m_1$, $\zeta_2=m_2$, and $m_2<\zeta_3<1$. It results in the sequence $x^\mu=\{0,y_1,y_1,y_1,y_2,y_3,y_3\}$, that corresponds to the horizontal coordinates of the colored circles in the figure. Blue circles are due to the elements already present in the couple of sequences $(m,y)$, whereas red circles come from the newly introduced elements which form $\xi$. As apparent from this figure, repetitions occur all the times we are adding a $\zeta$ to the sequence $m$ which was not already contained in the latter.
}\label{fig:ximu_xmu}
\end{figure}
The fact that $x^\mu$ is defined through the quantile function $\mu^{-1}$ implies that the elements of the sequence $x^\mu$ are those of the sequence $y$ with possibly some repetitions. There are no repetitions in the $x^\mu$'s iff $m\supseteq\zeta$. On the other hand, if $m$ does not contain, say, $\zeta_\ell=\xi^\mu_{j_\ell}$ for a given $\ell=0,\dots,r$, then by definition of the quantile {$\mu^{-1}(\xi^\mu_{j_\ell})=x^\mu_{j_\ell}=x^\mu_{j_\ell+1}$}. In other words, if there are repetitions, they occur in correspondence to those $\zeta_\ell$'s missing from the sequence $m$ (see Fig.\ref{fig:ximu_xmu} for an example).

This reasoning is necessary precisely because the Parisi functional $\mathcal{P}$ does not depend only on the probability distribution identified by the pair $(x,\xi)$, but also on which of the $x$'s repeat. As we shall clarify later, if $x_{j+1}$ and $x_{j}$ collapse on one another, but $\xi_{j}\not\in\zeta$, then the Parisi functional becomes independent on $\xi_j$. It is by all means as if we had reduced the sequences by one element. This however is no longer true if $\xi_{j}\in\zeta$. Following definition \eqref{eq:Parisi_functional} it is indeed not difficult to see that all the $\zeta$'s appear anyway in the functional, as they should.

That being said, \eqref{map} defines a map $\bigcup_{k\geq1}\mathrm{Pr}_k\ni\mu\to (x^{\mu},\xi^{\mu}) \in\mathcal{M}$. We set 
\be
\bar{\mathcal{P}}(\mu):=\mathcal{P}(x^{\mu},\xi^{\mu})\,\,,\;\quad\quad\mu\in\bigcup_{k\geq1}\mathrm{Pr}_k\,.
\ee
We can now state a proposition concerning the Lipschitz continuity of $\bar{\mathcal{P}}$, and relating the latter to the Parisi functional. Its proof is deferred to Section \ref{sec:Lipschitz_proof}.

\vspace{3pt}
\begin{proposition}\label{continuosex}
Given  $\mu_1\in \mathrm{Pr}_{k_1}$ and $\mu_2\in \mathrm{Pr}_{k_2}$  one has
\begin{equation}\label{Lipschtiz}
\mid \bar{\mathcal{P}}(\mu_1)-\bar{\mathcal{P}}(\mu_2)\mid \,\leq\, L W_1(\mu_1,\mu_2)
\end{equation}
for some constant $L$ independent from $\mu_1$ and $\mu_2$ and
\begin{align}
    W_1(\mu_1,\mu_2):=\int^1_0 \mid \mu_1^{-1}(p)-\mu_2^{-1}(p)\mid dp\,.
\end{align} Therefore one can extend continuously $\bar{\mathcal{P}}$ to all $\mu\in\mathrm{Pr}$ and
\be\label{infequality}
\inf_{ (x,\xi)\in\mathcal{M}}\mathcal{P}(x,\xi)=\inf_{\mu\in\mathrm{Pr}}\bar{\mathcal{P}}(\mu). 
\ee

\end{proposition}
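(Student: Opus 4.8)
The plan is to prove \eqref{Lipschtiz} first for finitely supported $\mu_1,\mu_2$, by transporting both measures onto a common grid and differentiating $\mathcal{P}$ along the linear interpolation of the associated quantile vectors; then to extend $\bar{\mathcal{P}}$ to all of $\mathrm{Pr}$ by density; and finally to deduce \eqref{infequality} by relating the values of $\mathcal{P}$ on $\mathcal{M}$ to those of $\bar{\mathcal{P}}$ on $\bigcup_k\mathrm{Pr}_k$. The preliminary ingredient is an \emph{invariance lemma}: if $(x,\xi)\in\mathcal{M}_k$ has $x_j=x_{j+1}$ for some index $j$ with $\xi_j\notin\zeta$, then $\mathcal{P}(x,\xi)=\mathcal{P}(x',\xi')$, where $(x',\xi')\in\mathcal{M}_{k-1}$ is obtained by deleting $x_j$ and $\xi_j$. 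Indeed, since $\xi\supseteq\zeta$ and $\xi_j\notin\zeta$ no $\zeta_\ell$ lies in $(\xi_j,\xi_{j+1}]$, so \eqref{eq:tilde_gamma} gives $\tilde\gamma_j=\tilde\gamma_{j+1}$; hence the variance $\tilde\gamma_{j+1}^2x_{j+1}-\tilde\gamma_j^2x_j$ of $\eta_{j+1}$ vanishes, the step \eqref{recurtilde} at that level collapses to $\tilde Z_j=\tilde Z_{j+1}$, and the $j$-th summand of \eqref{eq:Parisi_functional} is zero. Iterating removes every $x$-repetition sitting at an index with $\xi_j\notin\zeta$, and one checks directly that the pairs whose only $x$-repetitions occur at indices with $\xi_j\in\zeta$ are exactly the images $(x^\mu,\xi^\mu)$ of \eqref{map}. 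In particular, after adjoining to both $\xi^{\mu_1}$ and $\xi^{\mu_2}$ the finitely many extra mass points of $m_1\cup m_2$ (which only creates removable repetitions), we may assume $\mu_1,\mu_2$ are carried by a common sequence $\xi=(\xi_j)_{j=0}^{k+1}\supseteq\zeta$ with $\xi_0\le\zeta_0$ — so that $\tilde\gamma_0=\gamma_0=0$ — and $x^{(i)}_j=\mu_i^{-1}(\xi_j)$ for $i=1,2$.

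Next I would compute, with $\xi$ held fixed, the derivatives $\partial\mathcal{P}/\partial x_j$ for $1\le j\le k$. The explicit quadratic term of \eqref{eq:Parisi_functional} contributes $\tilde\gamma_j^2x_j(\xi_j-\xi_{j-1})$, while the contribution of $\log\tilde Z_0$ is obtained by differentiating the recursion \eqref{recurtilde} from the top and integrating by parts in each Gaussian $\eta_{j'}$, exactly as in the classical analysis of the SK Parisi functional. A direct computation shows that the two contributions recombine into an identity of the form
\begin{align}
\frac{\partial\mathcal{P}}{\partial x_j}=\tilde\gamma_j^2\,(\xi_j-\xi_{j-1})\,\big(x_j-\omega_j\big)\,,
\end{align}
where $\omega_j$ is an average of a squared overlap for the Ruelle probability cascade at the relevant level, hence $\omega_j\in[0,1]$; moreover $\partial\mathcal{P}/\partial x_0=0$ since $\tilde\gamma_0=0$. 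Therefore $|\partial\mathcal{P}/\partial x_j|\le\gamma_r^2(\xi_j-\xi_{j-1})$ for every $j$. Integrating along the segment $x(t)=(1-t)x^{(1)}+t x^{(2)}$, which stays in $\mathcal{M}_k$ because a convex combination of nondecreasing $[0,1]$-valued sequences is again of this type, yields
\begin{align}
|\bar{\mathcal{P}}(\mu_1)-\bar{\mathcal{P}}(\mu_2)|\le\gamma_r^2\sum_{j=0}^{k}(\xi_j-\xi_{j-1})\,|x^{(1)}_j-x^{(2)}_j|\,.
\end{align}
Since all mass points of $\mu_1,\mu_2$ were put into $\xi$, the quantile $\mu_i^{-1}$ is constant and equal to $x^{(i)}_j$ on $(\xi_{j-1},\xi_j]$, so the right-hand side equals $\gamma_r^2\int_0^1|\mu_1^{-1}(p)-\mu_2^{-1}(p)|\,dp=\gamma_r^2\,W_1(\mu_1,\mu_2)$; this is \eqref{Lipschtiz} with $L=\gamma_r^2$.

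It remains to extend and to identify the infima. Finitely supported measures are dense in the complete metric space $(\mathrm{Pr},W_1)$, so the $L$-Lipschitz map $\bar{\mathcal{P}}$ on $\bigcup_k\mathrm{Pr}_k$ extends uniquely to an $L$-Lipschitz function on $\mathrm{Pr}$, and by density $\inf_{\bigcup_k\mathrm{Pr}_k}\bar{\mathcal{P}}=\inf_{\mathrm{Pr}}\bar{\mathcal{P}}$. On the other hand $\inf_{\mathcal{M}}\mathcal{P}\le\inf_{\bigcup_k\mathrm{Pr}_k}\bar{\mathcal{P}}$ since the images of \eqref{map} lie in $\mathcal{M}$, while for the reverse inequality the invariance lemma reduces any $(x,\xi)\in\mathcal{M}$ to an image of \eqref{map} — or, in the degenerate case $x_k=1$, exhibits it as a limit of such images by continuity of $\mathcal{P}$ in its arguments — so that $\mathcal{P}(x,\xi)\ge\inf_{\bigcup_k\mathrm{Pr}_k}\bar{\mathcal{P}}$. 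Combining these gives \eqref{infequality}.

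The principal difficulty is the derivative computation of the second step: carrying Gaussian integration by parts through the iterated recursion \eqref{recurtilde} and checking that the pieces coming from $\log\tilde Z_0$ recombine with the explicit quadratic term to leave precisely the factor $(\xi_j-\xi_{j-1})$ multiplied by a quantity in $[0,1]$. This is the technically heaviest part; it follows the well-established template for the SK Parisi functional, the only genuinely new feature being the position-dependent coefficients $\tilde\gamma_j$, which cause no trouble because $\tilde\gamma_j\le\gamma_r$ and $\tilde\gamma_0=0$. A secondary, purely bookkeeping, point is managing the index shifts in the first step while preserving the constraint $\xi\supseteq\zeta$.
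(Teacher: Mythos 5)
Your proposal is correct and follows essentially the same route as the paper: the same reduction lemma for removable repetitions (identifying minimal pairs with images of the map \eqref{map}), the same common refinement of the two grids, the same linear interpolation of the quantile vectors bounded via the gradient formula of Lemma~\ref{lem:derivative_Parisi_x}, and the same identification of the value sets for \eqref{infequality}. The only differences are cosmetic — you obtain the slightly sharper constant $L=\gamma_r^2$ (the paper uses $|x_j-\omega_j|\leq 2$ and gets $2\gamma_r^2$) and you explicitly handle the degenerate case $x_k=1$ by a limiting argument, which the paper leaves implicit.
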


\begin{remark}\label{rem:redundacy_gammatilde_ell0}
Notice also that 
by definition \eqref{eq:tilde_gamma} one has $\tilde{\gamma}_j=0$ for all $j$ such that $0<\xi_j\leq\zeta_0$, hence these $\xi_j$ don't play any role in $\mathcal{P}(x,\xi)$ and then, without loss of generality, one can assume that $\xi_0=\zeta_0$. In other words one say that 
\be
\inf_{\mu\in\mathrm{Pr}}\bar{\mathcal{P}}(\mu)\,=\inf_{\mu\in\mathrm{Pr}^{(0)}}\bar{\mathcal{P}}(\mu)\quad\text{where} 
\quad
\mathrm{Pr}^{(0)}=\{\mu\in\mathrm{Pr}:\lim_{x\to 0^+}\mu([0,x])\geq\zeta_0\}\,.
\ee
\end{remark}

The sequences $(x^{\mu},\xi^{\mu})$ and the associated $\tilde{\gamma}$ defined in \eqref{eq:tilde_gamma} can be related to a pair of \textit{synchronized} random variables using a  construction  similar to  \cite{Mourrat-Panchenko,MourMonotone}.
We start identifying  the sequences $(\gamma,\zeta)$ in \eqref{eq:sequences1} and \eqref{eq:sequences2} with a r.v. $\Gamma$ setting 
{
\begin{align*}\label{eq:mu_Gamma_Def}
    \mu_\Gamma(\gamma_\ell):=\mathbb{P}(\Gamma=\gamma_{\ell})=\zeta_{\ell}-\zeta_{\ell-1}\,,\;\forall\; 0\leq\ell\leq r .
\end{align*}
}

\vspace{3pt}
\begin{definition}
Let $U\sim\mathrm{Unif}[0,1]$. Given a random variable $Y$ supported on $[0,1)$ with law $\mu$, we denote by  $\rho_{\mu}$ the probability measure $[0,1]\times[0,\gamma_r]$  identified by
\be\label{def:varC}
\rho_{\mu}=\mathrm{Law}(C)\quad \text{where}\quad 
C\equiv\left(\mu^{-1}(U) ,\mu^{-1}_{\Gamma}(U) \right)\ee
or equivalently in terms of CDF 
\be
F_C(c_1,c_2):=\min\Big(\mu([0,c_1]),\mu_\Gamma([0,c_2])\Big )\,,\quad(c_1,c_2)\in[0,1]\times[0,\gamma_r].
\ee
For any $\ell\leq r$ we denote by $\rho_{\mu,\ell}$ the conditional probability  
\be\label{eq:conditional_l_level}
\rho_{\mu,\ell}(A):=\int_A \rho_{\mu}(dx\mid\Gamma=\gamma_\ell)=\frac{1}{\zeta_\ell-\zeta_{\ell-1}}\int_A \rho_{\mu}(dx\,,\,\Gamma=\gamma_\ell).
\ee
\end{definition}
From the above definition, follows that the marginals of $C$ are $\Gamma$ and $Y$, but its components are strongly correlated random variables: they are deterministic functions of the same uniform random variable $U$. This construction  is known as \textit {synchronization} \cite{panchenko_multi-SK} or monotone coupling \cite{MourMonotone}. One can check that if $\mu\in\mathrm{Pr}_{k'}$ is a discrete probability measure, then the map \eqref{map} and the associate $\tilde{\gamma}$ coincide with the following definition
\begin{equation}\label{relation11}
\tilde{\gamma_j}=\mu^{-1}_{\Gamma}(\xi^{\mu}_j)\quad\text{and}\quad 
x^{\mu}_j=\mu^{-1}(\xi^{\mu}_j)\,.
\end{equation}
Furthermore, averages w.r.t. $\rho_{\mu,\ell}$ rewrite in terms of $\xi$ and $x$ as follows
\begin{align}
    \int_0^1 f(x)\,
    \rho_\mu(dx\mid\Gamma=\gamma_\ell)&=\frac{1}{\zeta_\ell-\zeta_{\ell-1}}\int_0^1 f(x)\,
    \rho_\mu(dx,\Gamma=\gamma_\ell)\nonumber\\
    &=\frac{1}{\zeta_\ell-\zeta_{\ell-1}}{\EE }f(\mu^{-1}(U))\mathbbm{1}(\mu_\Gamma^{-1}(U)=\gamma_\ell)\,.
\end{align}Notice that $\mathbbm{1}(\mu_\Gamma^{-1}(U)=\gamma_\ell)$ is non-zero only if $U\in(\zeta_{\ell-1},\zeta_{\ell}]$. Said interval can be decomposed as the union of $(\xi_{j-1},\xi_j]$ for $j\in K_\ell$, over which $\mu^{-1}(U)$ is constant and equal to $x_j$. Hence the above turns into
\begin{align}\label{eq:explicit_conditional}
    \int_0^1 f(x)\,
    \rho_\mu(dx\mid\Gamma=\gamma_\ell)&=\frac{1}{\zeta_\ell-\zeta_{\ell-1}}\sum_{j\in K_{\ell}}(\xi_j-\xi_{j-1})f(x_j)\,.
\end{align}

\subsection{Overlap moments}

By Proposition \ref{continuosex} we know that a solution of the variational problem \eqref{eq:var_principle} is a probability measure on $[0,1]$. Here we  link the above solution  to the distribution of  the overlap \eqref{def:overlap} w.r.t. the large $N$ limit of the multiscale measure \eqref{eq:replevel-bracket}.  
Let us recall that in the SK model the following holds (see \cite[Theorem 3.7]{Panchenko2013}, and \cite{diff_parisi}): the optimizer of the Parisi formula can be used to obtain the second moment of the overlap in the large $N$ limit of the quenched measure, which is in turn connected to the internal energy of the system. Our next result contains an analogous statement for the multiscale SK model. We stress that in the multiscale setting the picture is more involved since, as observed (see Section \ref{sec:multiscale_measure}), there are several possible measures for the overlap. One can consider for instance all the $\ell$-th scale averages $\E_h \Big \langle \,\langle q^2_N\rangle^{(\ell)}_N\Big\rangle^{(0)}_N$, $\ell=1,\ldots,r$.

\vspace{3pt}

\begin{theorem}[Internal energy and overlap moments]\label{thm:overlap_moments}
For any $\ell=1,\dots,r$ 
\begin{align}\label{eq:internalenergy}
\lim_{N\to\infty}
\E_h \Big \langle \,\langle q^2_N\rangle^{(\ell)}_N\Big\rangle^{(0)}_N 
=\int x^2\,\rho_{\mu^*}(dx\mid \Gamma=\gamma_\ell)\,.
\end{align}
where $\mu^*$ is some Parisi measure solving \eqref{infequality}.
\end{theorem}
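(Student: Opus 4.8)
The plan is to recognize each limiting overlap moment as a $\beta_\ell$–derivative of the limiting pressure, where $\beta_\ell=\sqrt{\gamma_\ell^2-\gamma_{\ell-1}^2}$ as in \eqref{internell}, and then to evaluate that derivative by applying the envelope theorem to the variational principle of Theorem~\ref{them:pressure}. First I would solve the finite–$N$ identity \eqref{eq:internderiv} for the overlap moments: for $\ell=r$ (where the sum over $p$ reduces to a single term and $\zeta_r=1$),
\[
\E_h\big\langle\,\langle q^2_N\rangle^{(r)}_N\big\rangle^{(0)}_N=\frac{1}{1-\zeta_{r-1}}\Big(1-\beta_r^{-1}\,\partial_{\beta_r}p_N\Big),
\]
and, subtracting the identity \eqref{eq:internderiv} for $\ell+1$ from the one for $\ell$, for $\ell=1,\dots,r-1$,
\[
\E_h\big\langle\,\langle q^2_N\rangle^{(\ell)}_N\big\rangle^{(0)}_N=\frac{1}{\zeta_\ell-\zeta_{\ell-1}}\Big(\beta_{\ell+1}^{-1}\,\partial_{\beta_{\ell+1}}p_N-\beta_\ell^{-1}\,\partial_{\beta_\ell}p_N\Big).
\]
Thus the whole problem reduces to controlling the limit of $\partial_{\beta_\ell}p_N$.

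\textbf{Passing the derivative to the limit and the envelope step.} By the Ruelle Probability Cascade representation of Remark~\ref{RPC-repres}, $p_N$ is the quenched pressure of a Gaussian Hamiltonian depending linearly on each $\beta_\ell$, hence $p_N$ is convex in every $\beta_\ell$; so is the limit $p:=\lim_N p_N=\inf_{\mu\in\mathrm{Pr}}\bar{\mathcal P}(\mu)$ furnished by Theorem~\ref{them:pressure} and \eqref{infequality}. Since derivatives of pointwise–convergent convex functions converge wherever the limit is differentiable, $\partial_{\beta_\ell}p_N\to\partial_{\beta_\ell}p$ at a.e.\ choice of $(\beta_1,\dots,\beta_r)$; at the exceptional points one argues with one–sided derivatives, which stay bracketed by derivatives of $\bar{\mathcal P}$ at minimizers obtained as one–sided limits, and this is precisely the reason the statement refers to \emph{some} Parisi measure. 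A minimizer $\mu^*$ exists because $\bar{\mathcal P}$ is $W_1$–Lipschitz on the weakly compact set $\mathrm{Pr}$ by Proposition~\ref{continuosex}. For fixed $\mu$ the map $\beta\mapsto\bar{\mathcal P}(\mu)=\mathcal P(x^\mu,\xi^\mu)$ is smooth, since $(x^\mu,\xi^\mu)$ in \eqref{map} does not depend on $\gamma$ and $\beta$ enters \eqref{eq:Parisi_functional}--\eqref{recurtilde} only through $\tilde\gamma_j^2=\gamma_{\ell(j)}^2=\sum_{p\le\ell(j)}\beta_p^2$, where $j\in K_{\ell(j)}$. The inequality $p(\beta)\le\bar{\mathcal P}(\mu^*;\beta)$, with equality at the point under consideration, then gives $\partial_{\beta_\ell}p=\partial_{\beta_\ell}\bar{\mathcal P}(\mu^*)$ at differentiability points.

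\textbf{Differentiating the Parisi functional (the technical core).} It remains to prove that for every $\mu$ (treating discrete $\mu$ first and using the Lipschitz extension in general) and every $\ell$,
\[
\beta_\ell^{-1}\,\partial_{\beta_\ell}\bar{\mathcal P}(\mu)=1-\sum_{p=\ell}^r(\zeta_p-\zeta_{p-1})\int x^2\,\rho_\mu(dx\mid\Gamma=\gamma_p).
\]
I would write $\mathcal P=\log\tilde Z_0-\tfrac12\sum_j\xi_j(\hat a_{j+1}-\hat a_j)$ with $\hat a_j=(\tilde\gamma_jx^\mu_j)^2$, and use $\partial_{\beta_\ell}\tilde\gamma_j^2=2\beta_\ell\,\mathbbm 1(\xi^\mu_j>\zeta_{\ell-1})$. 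Abel summation in the explicit correction term (using $\xi_k=1$, $x_{k+1}=1$, $\xi_{-1}=0$, and noting the $j=k+1$ contribution is null) gives $\partial_{\beta_\ell}\big[-\tfrac12\sum_j\xi_j(\hat a_{j+1}-\hat a_j)\big]=-\beta_\ell\big(1-\sum_{j:\,\xi^\mu_j>\zeta_{\ell-1}}(\xi^\mu_j-\xi^\mu_{j-1})(x^\mu_j)^2\big)$. For $\log\tilde Z_0$: since $\beta_\ell$ enters only through $\tilde Z_{k+1}$, differentiating the recursion \eqref{recurtilde} shows $\partial_{\beta_\ell}\log\tilde Z_0$ equals the tilted–cascade average of $\partial_{\beta_\ell}\log\tilde Z_{k+1}$; iterated Gaussian integration by parts in the $\eta_j$'s — the standard computation identifying overlap moments with derivatives of the Parisi functional, cf.\ \cite{Panchenko2013,diff_parisi} — collapses this to $2\beta_\ell\big(1-\sum_{j:\,\xi^\mu_j>\zeta_{\ell-1}}(\xi^\mu_j-\xi^\mu_{j-1})(x^\mu_j)^2\big)$. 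Adding the two contributions yields $\partial_{\beta_\ell}\bar{\mathcal P}(\mu)=\beta_\ell\big(1-\sum_{j:\,\xi^\mu_j>\zeta_{\ell-1}}(\xi^\mu_j-\xi^\mu_{j-1})(x^\mu_j)^2\big)$, and, since $\{j:\xi^\mu_j>\zeta_{\ell-1}\}=\bigcup_{p\ge\ell}K_p$, formula \eqref{eq:explicit_conditional} turns this into the displayed identity.

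\textbf{Conclusion and main obstacle.} Inserting the identity of the previous step into the expressions of the first step, the partial sums $\sum_{p\ge\ell}(\zeta_p-\zeta_{p-1})\int x^2\rho_{\mu^*}(dx\mid\Gamma=\gamma_p)$ telescope, leaving exactly the single term $(\zeta_\ell-\zeta_{\ell-1})\int x^2\rho_{\mu^*}(dx\mid\Gamma=\gamma_\ell)$ (resp.\ $(1-\zeta_{r-1})\int x^2\rho_{\mu^*}(dx\mid\Gamma=\gamma_r)$ for $\ell=r$, using $\zeta_r=1$), which cancels the prefactor and gives \eqref{eq:internalenergy}. I expect the genuine obstacle to be the computation of $\partial_{\beta_\ell}\log\tilde Z_0$ in the third step: one must track carefully how the tilts $e^{\xi_{j-1}\tilde Z_j}$ reintroduce, level by level, precisely the increments $\xi^\mu_j-\xi^\mu_{j-1}$ in front of $(x^\mu_j)^2$; the multiscale decoration $\tilde\gamma_j$ enters only via the indicator $\mathbbm 1(\xi^\mu_j>\zeta_{\ell-1})$ and does not otherwise complicate the argument. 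The differentiability caveat in the second step is a softer, purely convex–analytic point, responsible only for the weaker \enquote{some Parisi measure} formulation.
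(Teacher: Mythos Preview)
Your overall architecture is the same as the paper's --- convexity in $\beta_\ell$, pass the derivative to the limit, compute $\partial_{\beta_\ell}$ of the Parisi functional, match with \eqref{eq:gamma_der_pN} --- but your third step contains a genuine error. The identity
\[
\partial_{\beta_\ell}\bar{\mathcal P}(\mu)=\beta_\ell\Big(1-\sum_{j:\,\xi^\mu_j>\zeta_{\ell-1}}(\xi^\mu_j-\xi^\mu_{j-1})(x^\mu_j)^2\Big)
\]
does \emph{not} hold for every $\mu$. When you carry out the Gaussian integration by parts on $\partial_{\beta_\ell}\log\tilde Z_0$ (exactly as in the proof of Lemma~\ref{lem:derivative_Parisi_x}, see \eqref{xderivative}), what appears is $\EE\prod_{p\le j}f_p\,(\langle\sigma\rangle^{(j)})^2$, not $(x^\mu_j)^2$. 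These two quantities coincide only at stationary points, by Lemma~\ref{lem:KKT_stationary}; this is precisely why the paper states and proves Lemma~\ref{lem:gammader} under the hypothesis that $(x^\mu,\xi^\mu)$ is a $k$-stationary pair. Your Abel-summation computation of the explicit correction term is fine, but the $\log\tilde Z_0$ piece is misidentified.

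This error propagates: once the derivative formula fails for generic $\mu$, the envelope step no longer delivers the conditional second moment $\int x^2\rho_{\mu^*}(dx\mid\Gamma=\gamma_\ell)$ directly. Even if you restrict to the minimizer $\mu^*$, you would still need a stationarity statement for $\mu^*$, which is not immediate when $\mu^*$ has infinite support. The paper circumvents this by working along a sequence of $k$-stationary pairs $\mu_k$ (for which Lemma~\ref{lem:gammader} applies), invoking the subdifferential argument of \cite{diff_parisi} to show that any element of $\partial_\ell\mathcal P$ must equal $\partial_{\beta_\ell}\bar{\mathcal P}(\mu_k)+O(\sqrt{\varepsilon_k})$, and then passing to the limit along a $W_1$-convergent subsequence $\mu_{k_n}\to\mu^*$. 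That approximation-by-finite-stationary-pairs step is the missing ingredient in your outline; without it, the replacement of tilted moments by $(x^\mu_j)^2$ is unjustified.
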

There is a natural ordering in the moments of the overlap induced by the recursion \eqref{eq:recursion_bracket_original} and the fact that $f_{\ell,N}$ are probability tilts (namely $\EE_{\ell-1} f_{\ell,N}=1$). More precisely, we have
\begin{align}
    &\E_h \Big \langle \,\langle q^2_N\rangle^{(\ell)}_N\Big\rangle^{(0)}_N =\frac{1}{N}+{\frac{N-1}{N}}\E_h \Big \langle(\langle \sigma_1\sigma_2 \rangle^{(\ell)}_N)^2\Big\rangle^{(0)}_N=\frac{1}{N}+{\frac{N-1}{N}}\E_h \Big \langle(\EE_\ell f_{\ell+1}\langle \sigma_1\sigma_2 \rangle^{(\ell+1)}_N)^2\Big\rangle^{(0)}_N\nonumber\\
    &\qquad \leq \frac{1}{N}+{\frac{N-1}{N}}\E_h \Big \langle(\langle \sigma_1\sigma_2 \rangle^{(\ell+1)}_N)^2\Big\rangle^{(0)}_N=\E_h \Big \langle \,\langle q^2_N\rangle^{(\ell+1)}_N\Big\rangle^{(0)}_N
\end{align}
where in the last step we have reabsorbed $\EE_{\ell}f_{\ell+1}$ in $\langle\cdot \rangle^{(0)}_N$ and used Jensen's inequality. Hence one should expect that the r.h.s.\ of \eqref{eq:internalenergy} respects the same ordering. This is indeed the case thanks to the fact that $\rho_{\mu}$ is generated through a monotone coupling. Hence if the r.v.\ $\Gamma$ takes higher values, so does the corresponding moment of the overlap.

Note that, whereas the points of the supports of $\rho_{\mu^*}(x\mid\Gamma=\gamma_\ell)$ are disjoint for different $\ell$'s because of the monotone coupling, this does not imply that the finite size multiscale measures $\EE_h\langle\langle\cdot\rangle^{(\ell)}_N\rangle^{(0)}_N$ also have disjoint and ordered supports. That being said, if we stick to the interpretation that $\rho_{\mu^*}(x\mid\Gamma=\gamma_\ell)$ is the asymptotic conditional average for the $\ell$-th scale, Theorem \ref{thm:overlap_moments} is telling us that the overlaps become synchronized with $\Gamma$ and the supports of the asymptotic $\ell$-th scale measures separate.

\subsection{The annealed regime}
Our next result concerns the annealed solution of the model. Indeed, we prove that in absence of external magnetic fields an annealed solution holds iff a high-temperature condition is fulfilled:
\vspace{3pt}

\begin{theorem}[Annealed region]\label{thm:annealed}
The annealed solution is exact whenever $h=0$ almost surely and $\gamma^2_r\leq \frac{1}{2}$, namely
    \begin{align}\label{cond:ann}
        \lim_{N\to\infty} p_N\,=\,\log 2 +\dfrac{\gamma^2_r}{2}\,\text{ if and only if }\,  \gamma^2_r\leq \frac{1}{2}\text{ and }h=0 \text{ a.s.}
    \end{align}
\end{theorem}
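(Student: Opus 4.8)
The plan is to pair the variational formula of Theorem~\ref{them:pressure} with the two Jensen inequalities generated by the recursion \eqref{eq:recursion_main} and with the second-moment analysis of the plain SK model. First I would record the basic bounds on $p_N$. Since $t\mapsto t^{\zeta_{\ell-1}}$ is concave for $\zeta_{\ell-1}\in(0,1)$, iterating $Z_{\ell-1,N}=(\mathbb{E}_{\ell-1}Z_{\ell,N}^{\zeta_{\ell-1}})^{1/\zeta_{\ell-1}}\le\mathbb{E}_{\ell-1}Z_{\ell,N}$ yields $Z_{0,N}\le\mathbb{E}_g Z_{r,N}$, and a Gaussian computation of $\mathbb{E}_g Z_{r,N}$ gives the annealed upper bound $p_N\le\frac{\gamma_r^2}{2}+\mathbb{E}_h\log(2\cosh h)$; the same value is attained by the trivial trial measure $\mu=\delta_0$, for which $\xi^{\delta_0}_k=1$ makes \eqref{recurtilde} collapse at the top and $\bar{\mathcal{P}}(\delta_0)=\mathbb{E}_h\log(2\cosh h)+\frac{\gamma_r^2}{2}$. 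Because $\log(2\cosh x)\ge\log 2$ with equality iff $x=0$, this upper bound equals $\log 2+\frac{\gamma_r^2}{2}$ only if $h=0$ a.s. Dually, concavity of $\log$ in \eqref{eq:recursion_main} gives $\log Z_{0,N}\ge\mathbb{E}_g\log Z_{r,N}$, so $p_N\ge\frac1N\mathbb{E}_{g,h}\log Z_{r,N}$, the quenched pressure of the SK model with covariance $N\gamma_r^2 q_N^2$.

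\textbf{The ``if'' direction.} I would take $h=0$ and $\gamma_r^2\le\frac12$ and simply sandwich: the classical truncated second-moment bound (Aizenman--Lebowitz--Ruelle, Comets--Neveu, Talagrand) shows that this SK model has quenched pressure equal to its annealed value $\log 2+\frac{\gamma_r^2}{2}$ exactly in the regime $\gamma_r^2\le\frac12$, and together with the annealed upper bound this forces $\lim_N p_N=\log 2+\frac{\gamma_r^2}{2}$.

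\textbf{The ``only if'' direction.} I would feed the assumption $\lim_N p_N=\log 2+\frac{\gamma_r^2}{2}$ into the internal-energy identities \eqref{eq:internderiv}--\eqref{eq:gamma_der_pN}. Since $p_N$ is convex in $\beta_r=\sqrt{\gamma_r^2-\gamma_{r-1}^2}$ (Hölder propagates convexity through the recursion), once one knows that $\lim_N p_N$ coincides with $\log 2+\frac{\gamma_r^2}{2}$ on a neighbourhood of the given $\beta_r$ (addressed below), convexity forces convergence of derivatives and hence $\lim_N\mathbb{E}_h\langle\langle q_N^2\rangle^{(r)}_N\rangle^{(0)}_N=0$; by Theorem~\ref{thm:overlap_moments} this means $\int x^2\,\rho_{\mu^*}(dx\mid\Gamma=\gamma_r)=0$, so the Parisi minimizer must be $\mu^*=\delta_0$ and therefore $\lim_N p_N=\bar{\mathcal{P}}(\delta_0)=\mathbb{E}_h\log(2\cosh h)+\frac{\gamma_r^2}{2}$. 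Equating this with $\log 2+\frac{\gamma_r^2}{2}$ gives $\mathbb{E}_h\log\cosh h=0$, i.e.\ $h=0$ a.s. Finally, with $h=0$, the minimality of $\delta_0$ is a real constraint: moving a small mass of $\delta_0$ onto a point $q>0$ while keeping $m\cup\zeta$ intact (so $\xi^{\mu}\supseteq\zeta$), the expansion of $\bar{\mathcal{P}}(\mu)$ in $q$ is $\log 2+\frac{\gamma_r^2}{2}-c\,q^2(2\gamma_r^2-1)+o(q^2)$ with $c>0$ --- the replicon/de Almeida--Thouless instability --- which is $<\log 2+\frac{\gamma_r^2}{2}$ as soon as $\gamma_r^2>\frac12$; so $\delta_0$ can be optimal only when $\gamma_r^2\le\frac12$.

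\textbf{Main obstacle.} The step I expect to be the hard part is propagating the equality $\lim_N p_N=\log 2+\frac{\gamma_r^2}{2}$ from a single value of $\beta_r$ to a neighbourhood: the one-sided annealed bound $p_N\le\frac{\gamma_r^2}{2}+\mathbb{E}_h\log(2\cosh h)$ only controls $\lim_N p_N$ from above, so this demands an extra monotonicity input. A natural such input is that, after replacing $P_h$ by its symmetrization (legitimate since $H_N(\sigma)=H_N(-\sigma)$, whence $p_N$ depends only on the law of $|h|$), $\lim_N p_N$ is convex in the field strength, non-decreasing, flat to first order and strictly convex at zero field, hence strictly increasing once a nonzero field is switched on; this settles the regime $\gamma_r^2\le\frac12$ directly (there $\lim_N p_N$ with field exceeds $\log 2+\frac{\gamma_r^2}{2}$), leaving only the corner $\gamma_r^2>\frac12$ with $h\not\equiv0$, which is where I would expect the genuine technical effort to concentrate.
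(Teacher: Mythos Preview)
Your ``if'' direction is exactly the paper's argument: Jensen upward to the annealed bound, Jensen downward to the SK quenched pressure, and the classical SK high-temperature result to close the sandwich.

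For the ``only if'' direction, you have the right key ingredient --- the AT-type instability of $\delta_0$ --- but you bury it under an unnecessary detour that manufactures your ``main obstacle''. You try to argue: assume $\lim p_N=\log 2+\gamma_r^2/2$, use convexity and Theorem~\ref{thm:overlap_moments} to force $\mu^*=\delta_0$, \emph{then} invoke the instability. The intermediate step is where you get stuck (propagating equality to a neighbourhood), and it is entirely avoidable. The paper's logic is direct and contrapositive: by Theorem~\ref{them:pressure}, $\lim_N p_N=\inf_{(x,\xi)}\mathcal{P}(x,\xi)$, so to rule out $\gamma_r^2>1/2$ it suffices to \emph{exhibit} a trial pair with $\mathcal{P}<\log 2+\gamma_r^2/2$. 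The paper does this by fixing $\xi^*=\zeta$, sending $x_1,\dots,x_{r-1}\to 0$, and studying the one-variable function $f(x_r)$; they compute $f'(0^+)=0$ and $f''(0^+)=(1-\zeta_{r-1})\frac{\gamma_r^2}{2}(1-2\gamma_r^2)<0$ when $\gamma_r^2>1/2$, so the infimum over $x_r$ is strictly below $f(0^+)=\log 2+\gamma_r^2/2$. This is your replicon computation, just run as an upper-bound argument rather than as a stability test of an already-identified minimizer. No appeal to Theorem~\ref{thm:overlap_moments}, no neighbourhood issue.

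Regarding the $h\neq 0$ case: the paper's proof explicitly opens with ``in this proof we assume $h=0$ almost surely'' and never revisits it, so your extra worry about the corner $\gamma_r^2>1/2$, $h\not\equiv 0$ is not something the paper resolves either; it is treated as outside the scope of the argument.
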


As we shall see, a consequence of Theorem \ref{thm:overlap_moments} and \ref{thm:annealed} is that in the annealed region 
\be\label{zerooveralp}
\lim_{N\to\infty}
\E_h \Big \langle \,\langle q^2_N\rangle^{(\ell)}_N\Big\rangle^{(0)}_N =0,\,
\ee
for all $\ell\in\{1,\ldots,r\}$.
The annealed solution is an obvious upper bound for the pressure per particle thanks to Jensen's inequality.  The proof of Theorem \ref{thm:annealed} is reported in Section \ref{sec:proofthm_annealed}. However, proving it is exact requires {a direct use of the high-temperature result for the standard SK model an absence of external magnetic field
\cite{Tala_vol1}}.

\subsection{ Strong coupling regime}
The annealed condition \eqref{cond:ann}, namely when all the couplings $(\gamma_{\ell})_{\ell\leq r }$ are weak, implies that all the $\ell$-averages of the overlap are zero \eqref{zerooveralp}, and thus all the mass of the Parisi measure is concentrated at $0$.  
Conversely, if the interactions are strong enough, this is no longer true. {In particular we provide a sufficient strong coupling condition such that the distribution $\mu^*$, where the infimum of $\mathcal{P}$ is attained, has at least $r$ distinguished atoms.} Specifically, the (finite or infinite) sequence $x^*$, associated to $\mu^*$, must have at least $r$ distinguished values. In this sense, the model must have at least $r$ levels of replica symmetry breaking.

\vspace{3pt}

\begin{theorem}\label{thm:minRSB}
Assume that 
{
\begin{align}\label{eq:low-T-condition}
        1-2\gamma_1^2\zeta_1^2<0
\end{align}or $\EE_hh^2>0$.} Let $\mu^*\in\text{Pr}$ be a Parisi measure solving \eqref{infequality}. If $\mu^*$ is supported on a finite number of points then $\zeta\subseteq\xi^*$ and 
\be\label{positiveoverlap}
\lim_{N\to\infty}
\E_h \Big \langle \,\langle q^2_N\rangle^{(\ell)}_N\Big\rangle^{(0)}_N >0\quad{\forall 1\leq\ell\leq r\,.}
\ee
\end{theorem}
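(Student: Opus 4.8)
\noindent\textit{Proof strategy.}\
The plan is to reduce the statement to a property of the Parisi measure via Theorem~\ref{thm:overlap_moments}, and then to contradict the minimality of $\mu^*$ in \eqref{infequality} by a perturbation that raises the overlap profile at the bottom of its spectrum. First, combining \eqref{eq:internalenergy} with the explicit form \eqref{eq:explicit_conditional} of the conditional laws $\rho_{\mu^*}(\,\cdot\mid\Gamma=\gamma_\ell)$ gives, for every $\ell$,
\begin{equation}\label{aux:redMSK}
\lim_{N\to\infty}\E_h\Big\langle\langle q_N^2\rangle_N^{(\ell)}\Big\rangle_N^{(0)}
=\frac{1}{\zeta_\ell-\zeta_{\ell-1}}\int_{\zeta_{\ell-1}}^{\zeta_\ell}\big(\mu^{*-1}(p)\big)^2\,dp\,,
\end{equation}
which is strictly positive iff $\mu^{*-1}$ is not identically $0$ on $(\zeta_{\ell-1},\zeta_\ell]$; since $\mu^{*-1}$ is non-decreasing, positivity for \emph{all} $\ell$ is equivalent to $\mu^{*-1}(\zeta_1)>0$, that is to $\mu^*(\{0\})<\zeta_1$. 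By Remark~\ref{rem:redundacy_gammatilde_ell0} one may assume $\mu^*\in\mathrm{Pr}^{(0)}$, so that $\mu^*(\{0\})\ge\zeta_0$ and the relation $\zeta\subseteq\xi^*$ is built into the map \eqref{map}. Hence the whole theorem will follow once one shows that, under the hypotheses, a finitely supported minimizer $\mu^*$ of \eqref{infequality} satisfies $\mu^*(\{0\})<\zeta_1$.

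I would treat the case $\E_h h^2\neq0$ first, where it is immediate — exactly as for the SK model in an external field — that the support of a Parisi measure is bounded away from the origin: the lowest self-consistency relation of $\mathcal P$ identifies the bottom overlap with the square of a local magnetization biased by the genuinely random field $h$, hence strictly positive, so $\mu^*(\{0\})=0$.

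For $h=0$ a.s.\ and \eqref{eq:low-T-condition}, I would argue by contradiction, assuming $\mu^*$ finitely supported with $\mu^*(\{0\})\ge\zeta_1$. The test case is $\mu^*=\delta_0$: perturbing to $\mu_\varepsilon=\zeta_0\delta_0+(1-\zeta_0)\delta_\varepsilon$ and propagating it through the Ruelle-cascade recursion \eqref{recurtilde}, the conservation of the variance budget $\sum_j v_j=\gamma_r^2$ cancels the linear term, and keeping the quartic correction $-t^4/12$ of $\log\cosh t$ (the source of the $\gamma_\ell^4$ terms) one gets, after telescoping,
\begin{equation}\label{aux:varMSK}
\bar{\mathcal P}(\mu_\varepsilon)-\bar{\mathcal P}(\delta_0)=\frac{\varepsilon^2}{2}\sum_{\ell=1}^{r}(\zeta_\ell-\zeta_{\ell-1})\,\gamma_\ell^2\,(1-2\gamma_\ell^2)+o(\varepsilon^2)\,,
\end{equation}
which is negative precisely by \eqref{eq:low-T-condition}, so $\delta_0$ cannot be a minimizer. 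For a general finitely supported $\mu^*$ with $\mu^*(\{0\})\ge\zeta_1$, I would set $\ell_0$ to be the largest index with $\mu^*(\{0\})\ge\zeta_{\ell_0}$ (so $\mu^{*-1}$ vanishes on scales $1,\dots,\ell_0$) and apply the analogous perturbation — keep mass $\zeta_0$ at $0$, move the remaining mass of the atom at $0$ to a small $\varepsilon>0$, leave the rest of $\mu^*$ unchanged — obtaining an expansion whose leading coefficient is $\tfrac12\sum_{\ell\le\ell_0}(\zeta_\ell-\zeta_{\ell-1})\gamma_\ell^2(1-2\gamma_\ell^2)$ plus a remainder coming from the coupling with the untouched positive-overlap part of the tree which, at $h=0$, is non-positive; under \eqref{eq:low-T-condition} (which in particular forces $\gamma_r^2>\tfrac12$) this is negative, contradicting minimality. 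As a more robust alternative for the case $\ell_0\ge1$, one can use $\mu^*\neq\delta_0$ together with a ``decimation'' of the trivial scales $1,\dots,\ell_0$ — whose only effect on $\lim_N p_N$ is the additive constant $\gamma_{\ell_0}^2/2$ — and iterate on the reduced multiscale model until $\ell_0$ is pushed to $0$. Either way $\mu^*(\{0\})<\zeta_1$, which with \eqref{aux:redMSK} gives \eqref{positiveoverlap} and, together with $\zeta\subseteq\xi^*$, the theorem.

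\noindent\textit{Main obstacle.}\ The hard part is the second-variation computation \eqref{aux:varMSK} and its analogue for a general $\mu^*$: one must follow the quartic $\log\cosh$ correction through the nested recursion \eqref{recurtilde} and check that, once the reshuffling of the variance budget telescopes, it produces exactly $-\sum_\ell(\zeta_\ell-\zeta_{\ell-1})\gamma_\ell^4$, hence the combination in \eqref{eq:low-T-condition}; one must also control how the convex ``effective leaf function'' built from the positive-overlap part of $\mu^*$ feeds into the perturbed bottom nodes (at $h=0$ this only helps). Making this rigorous for an arbitrary finitely supported $\mu^*$ rather than just for $\delta_0$ is where the bookkeeping, or the decimation-and-bootstrap detour, is really needed.
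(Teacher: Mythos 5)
There is a genuine gap, and it concerns the first assertion of the theorem. You dismiss $\zeta\subseteq\xi^*$ as ``built into the map \eqref{map}'', but that is a misreading: the map always inserts the indices $\zeta_\ell$ into $\xi^\mu$, and the substantive claim is that at the minimizer the associated $x^*$-values do \emph{not} repeat there, i.e.\ $x^*_{j_\ell}<x^*_{j_\ell+1}$ for every $\ell$ (equivalently $m^*\supseteq\zeta$, the CDF of $\mu^*$ genuinely attains each level $\zeta_\ell$). This is precisely the ``at least $r$ levels of RSB'' content announced before the theorem. Your reduction to $\mu^*(\{0\})<\zeta_1$ only yields \eqref{positiveoverlap}: for instance $\mu^*=\delta_{1/2}$ satisfies your criterion and makes every overlap moment equal to $1/4$, yet has a single distinct value of $x^*$ repeated across all scales, so $m^*\not\supseteq\zeta$. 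The paper's proof is devoted entirely to excluding such collapses: assuming $x^*_{j_\ell}=x^*_{j_\ell+1}$, it detaches $x_{j_\ell+1}$ and shows via Lemma~\ref{lem:derivative_Parisi_x} and \eqref{eq:bracket-recursion} that $\partial_{x_{j_\ell+1}}\mathcal{P}$ at the collapsed point equals minus a conditional variance of $\langle\sigma\rangle^{(j_\ell+1)}$ under $\EE_{j_\ell}f_{j_\ell+1}$, which is strictly negative because $\gamma_{\ell+1}>\gamma_\ell$ keeps $\eta_{j_\ell+1}$ in the argument of the $\tanh$ --- \emph{provided} $x^*_{j_\ell}>0$. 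The hypotheses \eqref{eq:low-T-condition} or $\EE_h h^2\neq0$ enter only to guarantee that positivity. Your proposal contains no analogue of this variance argument, so the main claim is unproved.

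On the part you do address, your reduction via \eqref{eq:explicit_conditional} and your second-order expansion around $\delta_0$ reproduce the paper's computation ($\partial^2_{\bar x}f(0,\xi^*)=\sum_\ell(\zeta_\ell-\zeta_{\ell-1})(1-2\gamma_\ell^2)\gamma_\ell^2$), and the $h\neq0$ case is sketched in the same spirit as the paper (which proves strict monotonicity of $\langle\sigma\rangle^{(j_\ell+1)}$ in $h$). However, your treatment of a general minimizer with $\zeta_0<\mu^*(\{0\})$ does not close: the leading coefficient you obtain is the \emph{partial} sum $\sum_{\ell\le\ell_0}(\zeta_\ell-\zeta_{\ell-1})\gamma_\ell^2(1-2\gamma_\ell^2)$, which \eqref{eq:low-T-condition} does not force to be negative (it may well be positive when only the largest $\gamma_\ell$'s exceed $1/\sqrt2$), and the ``decimation'' detour faces the same issue since the reduced model's low-temperature condition is again a partial sum. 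So even \eqref{positiveoverlap} is not fully established by your argument in that regime.
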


{
Condition \eqref{eq:low-T-condition} is actually the strongest among the set of conditions $\big(\gamma_\ell^2>\frac{1}{2\zeta_\ell^2}\big)_{\ell\leq r}$ involved in Lemma \ref{lem:partial_annealing--} later proved in Section \ref{sec:proofthm_minRSB}. Informally, what Lemma \ref{lem:partial_annealing--} is saying is that if $\gamma_\ell^2>\frac{1}{2\zeta_\ell^2}$ (in absence of an external magnetic field) and the measure $\mu^*$ is finitely supported, then it must have an atom detached from the origin, and it is the one in correspondence of the element $\zeta_\ell$ fo the sequence $\zeta$. This fact will be also crucial to assert that $\zeta_\ell\in\xi^*$.

By comparing \eqref{zerooveralp} and \eqref{positiveoverlap}, one may wonder if there exists a region of the parameters where only a subset of the $\ell$-averages are non-zero, indicating a mechanism of \textit{partial annealing}. Theorem \ref{thm:minRSB}, and in particular Lemma \ref{lem:partial_annealing--} are indeed leaving this possibility open. Strictly speaking, if $\gamma_\ell^2>\frac{1}{2\zeta_\ell^2}$ the only thing we can conclude is that $\lim_{N\to\infty}
\E_h \Big \langle \,\langle q^2_N\rangle^{(\ell)}_N\Big\rangle^{(0)}_N >0$, but we were not able to prove also the converse statement, as is instead possible for the total annealing dealt with in Theorem \ref{thm:annealed}. We finally notice that Theorem \ref{thm:minRSB} does not rule out the possibility of full RSB, namely that $\mu^*$ is supported on non-discrete set.}

\vspace{3pt}

\subsection{Lower bound on gaps sizes}
The final result is a sufficient condition for which, at fixed $\ell$, the CDF associated to $\mu^*$ exhibits a plateau at height $\zeta_{\ell}$. This can be identified as a discontinuity in the associated quantile function ${\mu}^{*-1}$. Recalling that the quantile is a non-decreasing and left-continuous function, we define for any $\nu \in Pr$ the quantity
\be
\Delta_{\ell}(\nu):={\nu}^{-1}(\zeta_{\ell}^+)-{\nu}^{-1}(\zeta_{\ell})
\ee 
where ${\nu}^{-1}(\zeta^+_{\ell}):= \lim_{p\to\zeta_{\ell}^+}\nu^{-1}(p)$. If $\Delta_{\ell}(\nu)\geq C$ for some positive $C>0$ then the CDF of $\nu$ has a plateau at height $\zeta_{\ell}$  of length at least $C$ and hence a gap in its support. This possibility is stated in the following theorem, where $\mu^*$ is assumed to be the $W_1$-limit of a sequence of $k$-stationary pairs, {introduced in Definition \ref{def:stationary_pair} below}.

\vskip 0.3cm
\begin{theorem}\label{thm:plateau}
Given  $\ell=1,\ldots,r-1$, if $\zeta_{\ell}\,\gamma^2_{\ell+1}\,< 1/2$ then 
 \be\label{iceplatthem}
\Delta_{\ell}(\mu^*)\geq 2\,\left(\gamma^2_{\ell+1}-\gamma^2_{\ell}\right) \mu^{*-1}(\zeta_{\ell})\, \Big(\int^1_ {\mu^{*-1}(\zeta^+_{\ell})}\mu^*([0,x])\,dx \Big)^2 \,.
\ee
\end{theorem}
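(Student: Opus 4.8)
The plan is to work with a sequence of $k$-stationary pairs $(x^{(n)},\xi^{(n)})$ whose associated measures converge to $\mu^*$ in $W_1$, and to extract a lower bound on the gap $\Delta_\ell$ from the stationarity (first-order optimality) conditions of the Parisi functional $\mathcal{P}$ with respect to the variable $x_j$ for an index $j=j_\ell$ with $\xi_{j_\ell}=\zeta_\ell$. Concretely, I would first write down $\partial \mathcal{P}/\partial x_j$: by \eqref{eq:Parisi_functional} this splits into a ``bulk'' term coming from the explicit quadratic $-\tfrac12\sum_j \xi_j((\tilde\gamma_{j+1}x_{j+1})^2-(\tilde\gamma_j x_j)^2)$, which contributes something proportional to $(\xi_j-\xi_{j-1})\tilde\gamma_j^2 x_j$, and a term $\partial \log\tilde Z_0/\partial x_j$ coming from the recursion \eqref{recurtilde}. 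The latter is the delicate object: by the usual tower/martingale structure of the Parisi recursion (differentiating $\tilde Z_{j-1}^{\xi_{j-1}}=\mathbb{E}_{j-1}\tilde Z_j^{\xi_{j-1}}$ and propagating), $\partial \log\tilde Z_0/\partial x_j$ can be expressed as a chain of nested expectations of the local magnetization, and its size is controlled by $\tilde\gamma_j^2-\tilde\gamma_{j-1}^2$ times a product of ``mass'' factors. Crossing the level $\ell$ means $\tilde\gamma_{j_\ell+1}=\gamma_{\ell+1}>\gamma_\ell=\tilde\gamma_{j_\ell}$, so the jump in $\tilde\gamma^2$ at $j_\ell$ is exactly $\gamma_{\ell+1}^2-\gamma_\ell^2$, which is where that prefactor in \eqref{iceplatthem} will originate.

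Second, I would combine the optimality conditions at the two indices straddling the plateau. Because the CDF of $\mu^*$ has a plateau at height $\zeta_\ell$ precisely when $x^\mu_{j_\ell}<x^\mu_{j_\ell+1}$ with $\xi_{j_\ell}=\zeta_\ell$, the gap $\Delta_\ell(\mu^*)=\mu_*^{-1}(\zeta_\ell^+)-\mu_*^{-1}(\zeta_\ell)=x^\mu_{j_\ell+1}-x^\mu_{j_\ell}$ can be read off as the difference of the two adjacent $x$-values, and I would bound this difference from below using the stationarity equations at $j_\ell$ and $j_\ell+1$ together with the monotonicity $x_{j_\ell}\le x_{j_\ell+1}$. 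The hypothesis $\zeta_\ell\,\gamma_{\ell+1}^2<1/2$ should enter as a convexity/contraction estimate guaranteeing that the ``local field'' contribution $\partial\log\tilde Z_0/\partial x_j$ is small enough (dominated by the linear bulk term) on the relevant range, so that the net derivative is strictly controlled — analogous to how $\gamma_r^2\le 1/2$ controls the annealed regime in Theorem \ref{thm:annealed}. The factor $\big(\int_{\mu_*^{-1}(\zeta_\ell^+)}^1 \mu^*([0,x])\,dx\big)^2$ I expect to arise as the square of the ``total remaining overlap mass above the plateau'' coming from the two nested-expectation factors (one for each of the two differentiations being compared), each of which reduces, via an integration-by-parts / Gaussian-interpolation identity for the Parisi recursion, to an integral of $\mu^*([0,x])=\xi$-as-a-function-of-$x$ over $[x^\mu_{j_\ell+1},1]$.

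Third, I would pass to the limit: establish the bound \eqref{iceplatthem} for each $k$-stationary pair with constants uniform in $k$, then use Proposition \ref{continuosex} (Lipschitz continuity of $\bar{\mathcal P}$ in $W_1$, hence continuity of the relevant functionals $\mu\mapsto \mu^{-1}(\zeta_\ell)$, $\mu\mapsto\int \mu([0,x])dx$ at the limit point — with care taken that the quantile evaluated at the fixed point $\zeta_\ell\in\zeta$ is actually continuous along the approximating sequence, which is ensured because $\zeta_\ell$ is always a point of the $\xi$-sequence) to take $n\to\infty$ and recover the statement for $\mu^*$. The main obstacle I anticipate is making the second step rigorous: controlling $\partial\log\tilde Z_0/\partial x_j$ and showing that, under $\zeta_\ell\gamma_{\ell+1}^2<1/2$, the stationarity condition forces a genuine (not merely non-negative) separation of $x^\mu_{j_\ell}$ from $x^\mu_{j_\ell+1}$ with the precise quantitative constant displayed. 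This requires the right cavity/interpolation representation of the $x$-derivative of the Parisi functional and a careful bookkeeping of which $\tilde\gamma$'s and which $\xi$-increments appear — essentially a sharpened, level-resolved version of the computation behind \eqref{eq:internderiv}, where the $\gamma_{\ell+1}^2-\gamma_\ell^2$ jump and the $\zeta_\ell$-weighted truncation must be tracked exactly rather than merely bounded.
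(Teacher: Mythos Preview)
Your architecture is right --- work with $k$-stationary pairs, read the gap as $x_{j_\ell+1}-x_{j_\ell}$ where $\xi_{j_\ell}=\zeta_\ell$, extract the bound from stationarity, then pass to the $W_1$-limit --- and you correctly anticipate that $\gamma_{\ell+1}^2-\gamma_\ell^2$ comes from the jump in $\tilde\gamma^2$ at level $j_\ell$ and that the integral factor is (after Abel summation) $\sum_{p>j_\ell}\xi_{p-1}(x_p-x_{p-1})$. But the central mechanism is missing, and your own diagnosis of ``the main obstacle'' is exactly where the proof lives.

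Subtracting the two stationarity equations from Lemma~\ref{lem:KKT_stationary} gives
\[
x_{j_\ell+1}-x_{j_\ell}=\EE\prod_{p\le j_\ell}f_p\;\mathrm{Var}_{j_\ell+1}\!\big(\langle\sigma\rangle^{(j_\ell+1)}\big),
\]
so what you need is a \emph{lower} bound on a variance, not an upper bound on $\partial\log\tilde Z_0/\partial x_j$. Your proposal to ``control the local field contribution so it is dominated by the linear bulk term'' goes in the wrong direction: that kind of estimate would show the derivative is close to its linear part, which is useful for upper bounds or for locating zeros, not for forcing a strictly positive variance. The paper's device is the Cram\'er--Rao inequality applied to the tilted one-dimensional density $p(\eta)\propto e^{-\phi(\eta)}$ with $\phi(\eta)=\tfrac12\eta^2-\xi_{j_\ell}\log\tilde Z_{j_\ell+1}$ and $h(\eta)=\langle\sigma\rangle^{(j_\ell+1)}$:
\[
\mathrm{Var}(h)\ge \big[\EE_\eta h'\big]^2\big[\EE_\eta\phi''\big]^{-1}.
\]
A direct computation gives $\phi''=1-\xi_{j_\ell}c_i\,h'$ with $0\le h'\le c_i$ and $c_i^2=2(\gamma_{\ell+1}^2x_{j_\ell+1}-\gamma_\ell^2x_{j_\ell})\le 2\gamma_{\ell+1}^2$; hence $\zeta_\ell\gamma_{\ell+1}^2<1/2$ is precisely what makes $\phi''>0$ (log-concavity), so that Cram\'er--Rao applies and moreover $\EE_\eta\phi''\le 1$. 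This is the actual role of the hypothesis --- not a smallness bound on the field term. The squared integral in \eqref{iceplatthem} then comes from $[\EE_\eta h']^2$: one computes $h'=c_i\sum_{p>j_\ell}\xi_{p-1}(a_p-a_{p-1})$ with $a_p=\EE^{<p}_{\ge i}\prod f_s(\langle\sigma\rangle^{(p)})^2$, takes the full expectation, uses stationarity ($\EE\prod f_s\,a_p=x_p$) and Jensen to pull the square outside, obtaining $c_i^2\big(\sum_{p>j_\ell}\xi_{p-1}(x_p-x_{p-1})\big)^2$. The prefactor $c_i^2\ge 2(\gamma_{\ell+1}^2-\gamma_\ell^2)x_{j_\ell}$ supplies both remaining factors. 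Your description of the square as arising from ``two differentiations being compared'' is not the mechanism; it is one derivative, squared by Cram\'er--Rao.

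For the limit step, be aware that quantiles at fixed points need not be $W_1$-continuous; the paper handles this by smoothing $\nu^{-1}_{\pm\delta,\ell}=\delta^{-1}\int_{\zeta_\ell}^{\zeta_\ell\pm\delta}\nu^{-1}$ and showing $\Delta_\ell(\mu^*)\ge\limsup_n\Delta_\ell(\mu_n)$ and $\limsup_n\mu_n^{-1}(\zeta_\ell^+)\ge\mu_*^{-1}(\zeta_\ell)$, which is why \eqref{iceplatthem} carries $\mu_*^{-1}(\zeta_\ell)$ rather than $\mu_*^{-1}(\zeta_\ell^+)$ in that slot.
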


Theorem \ref{thm:plateau} can be interpreted as a result for the support of the conditional overlap distribution $\rho_{\mu^*,\ell}$. In fact it implies that there is a gap  between the support of $\rho_{\mu^*,\ell}$ and that of $\rho_{\mu^*,\ell-1}$. Note that the result is informative only outside the annealed region, provided that $\mu^{*-1}(\zeta_{\ell})>0$ and {$\mu^{*-1}(\zeta^+_{\ell})<1$.} Notice also that if $\zeta_{\ell}\,\gamma^2_{\ell+1}\,< 1/2$ for some $\ell$ then $\zeta_{\ell'}\,\gamma^2_{\ell'+1}\,< 1/2$  for all $\ell'<\ell$. 

\section{Proofs}\label{sec:proof}

\subsection{Preliminaries}
Let us start with some basic properties of the Parisi functional \eqref{eq:Parisi_functional}.

\vskip 0.2cm

\begin{definition}\label{def:stationary_pair} The pair of sequences $(\bar{x},\bar{\xi})\in\mathcal{M}_k$ is called $k$-stationary pair if
\be
\inf_{(x,\xi)\in\mathcal{M}_k}
\mathcal{P} (x,\xi) = \mathcal{P}(\bar{x},\bar{\xi}).
\ee
\end{definition}

In analogy with \eqref{eq:tilts_ell} we introduce the random probability weights
\begin{align}\label{eq:tilts}
    f_j:=\frac{\tilde Z_j^{\xi_{j-1}}}{\EE_{j-1}\tilde Z_j^{\xi_{j-1}}}\,,\quad 1\leq j\leq k+1\,,
\end{align}
where the $\tilde Z_j$'s are defined in \eqref{recurtilde}.
Recall that $f_j$ depends on the remaining randomness in $\eta_1,\dots,\eta_{j}$. We denote the $j$-th scale average as 
\begin{align}\label{eq:level-bracket}
    \langle A\rangle^{(j)}=\EE_{j}\EE_{j+1}\dots\EE_{k} \,f_{j +1}\dots f_{k+1}\langle A\rangle\,,\quad 0\leq j\leq k+1
\end{align}where 
\begin{align}
    \langle A \rangle=\frac{1}{\tilde{Z}_{k+1}}\sum_{\sigma=\pm 1}A(\sigma,(g^{(\ell')})_{\ell<\ell'\leq r})) e^{-\tilde H(\sigma)}\,.
\end{align}
Hence $\langle\cdot\rangle^{(j)}$ is still random through $\eta_1,\dots,\eta_{j}$, but $\eta_{j+1},\dots,\eta_{k+1}$ have been averaged out. By definition \eqref{eq:level-bracket} one has 
\begin{align}\label{eq:bracket-recursion}
    \langle\cdot\rangle^{(j-1)}=\EE_{j-1} f_{j}\langle\cdot\rangle^{(j)}\,.
\end{align}
Moreover the recursion \eqref{recurtilde} implies the following
\vspace{3pt}

\begin{proposition}\label{prop:guerra}
Denote by $\partial$ a generic derivative w.r.t. a variable in $\tilde Z_{k+1}$, be it $\tilde\gamma$ or $x$, and by $\EE$ the expectation over all the disorder $\eta_1,\dots,\eta_{k+1}$. Then
    \begin{align}
        \partial\log \tilde Z_0 =\EE\prod_{j=1}^{k+1} f_j\frac{1}{\tilde Z_{k+1}}\partial \tilde Z_{k+1}\,.
    \end{align}
    Furthermore, consider $p>j$:
    \begin{align}\label{eq:f-derivative1}
        &\partial_{\eta_j} f_j=\sqrt{2(\tilde\gamma_{j}^2x_j-\tilde\gamma_{j-1}^2x_{j-1})}f_j\langle\sigma \rangle^{(j)}\xi_{j-1}\\
        \label{eq:f-derivative2}
        &\partial_{\eta_j} f_p=\sqrt{2(\tilde\gamma_{j}^2x_j-\tilde\gamma_{j-1}^2x_{j-1})}f_p\xi_{p-1}\big(\langle\sigma\rangle^{(p)}-\langle\sigma\rangle^{(p-1)}\big)\,.
    \end{align}
\end{proposition}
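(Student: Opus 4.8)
The plan is to read everything off the backwards recursion \eqref{recurtilde} by differentiating its logarithm, iterating the resulting one-step identity, and keeping careful track of which Gaussians $\eta_1,\dots,\eta_{k+1}$ enter each $\tilde Z_m$ and each tilt $f_m$ (namely $\eta_1,\dots,\eta_m$).

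\emph{The Guerra-type identity.} For a derivative $\partial=\partial_{\tilde\gamma_i}$ or $\partial_{x_i}$ — which enters $\tilde Z_0,\dots,\tilde Z_k$ only through $\tilde Z_{k+1}$ and never through the exponents $\xi$ — I would take logarithms in \eqref{recurtilde}, writing $\xi_{j-1}\log\tilde Z_{j-1}=\log\mathbb{E}_{j-1}\tilde Z_j^{\xi_{j-1}}$, and differentiate; dividing through by $\mathbb{E}_{j-1}\tilde Z_j^{\xi_{j-1}}$ and using \eqref{eq:tilts} gives the one-step relation $\partial\log\tilde Z_{j-1}=\mathbb{E}_{j-1}f_j\,\partial\log\tilde Z_j$. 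Iterating this from $j=1$ to $j=k+1$ produces $\partial\log\tilde Z_0=\mathbb{E}_0 f_1\mathbb{E}_1 f_2\cdots\mathbb{E}_k f_{k+1}\,\partial\log\tilde Z_{k+1}$. Since $\mathbb{E}_{j-1}=\mathbb{E}_{\eta_j}$, the $\eta_j$'s are independent, and $f_m$ depends on $\eta_1,\dots,\eta_m$ only, Fubini lets me collapse the nested expectations into a single $\mathbb{E}$ over all the disorder and factor out $\prod_{j=1}^{k+1}f_j$; writing $\partial\log\tilde Z_{k+1}=\tilde Z_{k+1}^{-1}\partial\tilde Z_{k+1}$ then gives the stated formula.

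\emph{The derivative of the tilts.} For $\partial_{\eta_j}$ the key observation is that, by \eqref{recurtilde}, the denominator of $f_j$ is $\mathbb{E}_{j-1}\tilde Z_j^{\xi_{j-1}}=\tilde Z_{j-1}^{\xi_{j-1}}$, which depends on $\eta_1,\dots,\eta_{j-1}$ only; hence $\partial_{\eta_j}f_j=\xi_{j-1}f_j\,\partial_{\eta_j}\log\tilde Z_j$. I would then evaluate $\partial_{\eta_j}\log\tilde Z_j$ by the very same one-step relation as above (valid for any parameter, like $\eta_j$, that does not enter the exponents $\xi$), iterated from level $k+1$ down to $j$: $\partial_{\eta_j}\log\tilde Z_j=\mathbb{E}_j f_{j+1}\cdots\mathbb{E}_k f_{k+1}\,\partial_{\eta_j}\log\tilde Z_{k+1}$. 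At the top level $\tilde Z_{k+1}=2\cosh X$ with $X=\sqrt2\sum_i\eta_i\sqrt{\tilde\gamma_i^2 x_i-\tilde\gamma_{i-1}^2 x_{i-1}}+h$, so $\langle\sigma\rangle=\tanh X$ and $\partial_{\eta_j}\log\tilde Z_{k+1}=\sqrt{2(\tilde\gamma_j^2 x_j-\tilde\gamma_{j-1}^2 x_{j-1})}\,\langle\sigma\rangle$; pulling the deterministic prefactor out and recognizing the remaining nested average as $\langle\sigma\rangle^{(j)}$ via \eqref{eq:level-bracket} yields \eqref{eq:f-derivative1}. For $p>j$ I would instead write $f_p=\tilde Z_p^{\xi_{p-1}}/\tilde Z_{p-1}^{\xi_{p-1}}$, now with both numerator and denominator depending on $\eta_j$ (since $j\leq p-1$), so that the logarithmic derivative gives $\partial_{\eta_j}f_p=\xi_{p-1}f_p\big(\partial_{\eta_j}\log\tilde Z_p-\partial_{\eta_j}\log\tilde Z_{p-1}\big)$; applying the computation just described at levels $p$ and $p-1$ (both $\geq j$) and subtracting yields \eqref{eq:f-derivative2}.

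The calculations themselves are routine; the one place that needs care — and essentially the only possible source of error — is the dependency bookkeeping together with the accompanying Fubini step: one must check that $\mathbb{E}_{m-1}=\mathbb{E}_{\eta_m}$ acts on $\tilde Z_m$ but on none of $\tilde Z_{m-1},\dots,\tilde Z_0$ nor on $f_1,\dots,f_{m-1}$, that the differentiation variable ($\tilde\gamma$, $x$, or $\eta_j$) never touches the exponents $\xi$, and that, after collapsing the nested conditional expectations, the surviving objects are correctly identified with the scale averages $\langle\cdot\rangle^{(j)}$ of \eqref{eq:level-bracket}.
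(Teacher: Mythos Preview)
Your proposal is correct and is exactly the standard recursive-differentiation argument that the paper defers to by citing \cite{guerra2003broken} rather than spelling out a proof. The one-step identity $\partial\log\tilde Z_{j-1}=\mathbb{E}_{j-1}f_j\,\partial\log\tilde Z_j$, its iteration, and the dependency bookkeeping you describe are precisely what is needed, and your derivation of \eqref{eq:f-derivative1}--\eqref{eq:f-derivative2} via $\partial_{\eta_j}\log\tilde Z_m=\sqrt{2(\tilde\gamma_j^2 x_j-\tilde\gamma_{j-1}^2 x_{j-1})}\,\langle\sigma\rangle^{(m)}$ for $m\geq j$ is the expected route.
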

\noindent For its proof we refer to \cite{guerra2003broken}. Here we prove a specialization of the above for the Parisi functional.

\vspace{3pt}

\begin{lemma}\label{lem:derivative_Parisi_x}
The gradient components of the Parisi potential w.r.t. $x$ read:
    \begin{align}\label{eq:Parisi-derivative}
        \partial_{x_j}\mathcal{P}(x,\xi)=\tilde\gamma_j^2(\xi_j-\xi_{j-1})\big[x_j-\EE\prod_{p=1}^j f_p\cdot(\langle \sigma\rangle^{(j)})^2\big]
    \end{align}where
    \begin{align}
        \langle \sigma\rangle^{(j)}=\EE_{j}\dots\EE_{k} f_{j+1}\dots f_{k+1}\tanh z(\eta)\,,\quad z(\eta):=\sum_{j=1}^{k+1}\eta_j\sqrt{2\big(\tilde\gamma_{j}^2 x_{j}-\tilde\gamma_{j-1}^2x_{j-1}\big)}+h\,.
    \end{align}
\end{lemma}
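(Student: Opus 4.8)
The plan is to differentiate the two pieces of
\begin{equation}
\mathcal{P}(x,\xi)=\log\tilde Z_0-\frac12\sum_{j'=0}^{k}\xi_{j'}\big((\tilde\gamma_{j'+1}x_{j'+1})^2-(\tilde\gamma_{j'}x_{j'})^2\big)
\end{equation}
separately. The explicit quadratic term is elementary: the variable $x_j$ appears only in the summands $j'=j-1$ and $j'=j$, where it contributes $(\xi_{j-1}-\xi_j)(\tilde\gamma_jx_j)^2$, so its $x_j$-derivative is $(\xi_j-\xi_{j-1})\tilde\gamma_j^2x_j$, which is exactly the first term of \eqref{eq:Parisi-derivative}. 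Hence all the content lies in computing $\partial_{x_j}\log\tilde Z_0$.

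To this end I will write $a_m:=\sqrt{2(\tilde\gamma_m^2x_m-\tilde\gamma_{m-1}^2x_{m-1})}$, so that $z(\eta)=\sum_{m=1}^{k+1}\eta_ma_m+h$ and $\tilde Z_{k+1}=2\cosh z$. Since $x_j$ enters $z$ only through $a_j$ and $a_{j+1}$, with $\partial_{x_j}a_j=\tilde\gamma_j^2/a_j$ and $\partial_{x_j}a_{j+1}=-\tilde\gamma_j^2/a_{j+1}$, Proposition~\ref{prop:guerra} together with the chain rule and $\frac{1}{\tilde Z_{k+1}}\partial_{x_j}\tilde Z_{k+1}=\tanh z\,\partial_{x_j}z$ gives
\begin{equation}
\partial_{x_j}\log\tilde Z_0=\tilde\gamma_j^2\,\EE\Big[\prod_{p=1}^{k+1}f_p\,\tanh z\,\Big(\frac{\eta_j}{a_j}-\frac{\eta_{j+1}}{a_{j+1}}\Big)\Big].
\end{equation}
The next step is Gaussian integration by parts, carried out separately in $\eta_j$ and in $\eta_{j+1}$, using $\partial_{\eta_m}\tanh z=(1-\tanh^2z)\,a_m$ together with the derivative formulas \eqref{eq:f-derivative1}–\eqref{eq:f-derivative2} for $\partial_{\eta_m}f_p$; the weights $f_p$ with $p<m$ are independent of $\eta_m$ and contribute nothing. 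The $(1-\tanh^2 z)$ pieces produced by the two terms coincide and therefore cancel in the difference, while among the remaining contributions the telescoping sums $\sum_{p>m}\xi_{p-1}(\langle\sigma\rangle^{(p)}-\langle\sigma\rangle^{(p-1)})$ for $m=j$ and $m=j+1$ differ only by the single $p=j+1$ summand; combined with the $\xi_{m-1}\langle\sigma\rangle^{(m)}$ pieces, the whole bracket collapses to $(\xi_{j-1}-\xi_j)\langle\sigma\rangle^{(j)}$. This yields
\begin{equation}
\partial_{x_j}\log\tilde Z_0=-\,\tilde\gamma_j^2(\xi_j-\xi_{j-1})\,\EE\Big[\prod_{p=1}^{k+1}f_p\,\tanh z\,\langle\sigma\rangle^{(j)}\Big].
\end{equation}

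Finally I will simplify the last expectation. Since $\langle\sigma\rangle^{(j)}$ depends only on $\eta_1,\dots,\eta_j$ while $\tanh z=\langle\sigma\rangle^{(k+1)}$, one integrates out $\eta_{k+1},\eta_k,\dots,\eta_{j+1}$ one at a time; at each step the identity $\EE_{p-1}f_p\langle\sigma\rangle^{(p)}=\langle\sigma\rangle^{(p-1)}$, which is \eqref{eq:bracket-recursion} applied to $A=\sigma$, reduces $\langle\sigma\rangle^{(p)}$ to $\langle\sigma\rangle^{(p-1)}$ and absorbs $f_p$, so that the expectation becomes $\EE\prod_{p=1}^{j}f_p\,(\langle\sigma\rangle^{(j)})^2$. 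Adding back the quadratic contribution $(\xi_j-\xi_{j-1})\tilde\gamma_j^2x_j$ produces \eqref{eq:Parisi-derivative}. The step that demands the most care is the integration by parts: one must correctly track which $f_p$ and which $\langle\sigma\rangle^{(p)}$ depend on $\eta_j$ respectively $\eta_{j+1}$, check that the telescoping cancellation (including that of the $(1-\tanh^2 z)$ terms) is exact, and deal with the boundary indices — the empty telescoping sum when $j=k$, the index $j=0$ which does not occur in $z$, and the degenerate cases $\tilde\gamma_j=0$, i.e.\ $a_j=0$, for which both sides of \eqref{eq:Parisi-derivative} vanish. None of this is deep, but it is where a sign or bookkeeping slip would most naturally occur.
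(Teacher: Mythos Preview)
Your proposal is correct and follows essentially the same route as the paper: differentiate the quadratic piece directly, apply Proposition~\ref{prop:guerra} to reduce $\partial_{x_j}\log\tilde Z_0$ to an expectation involving $\eta_j/a_j-\eta_{j+1}/a_{j+1}$, integrate by parts in $\eta_j$ and $\eta_{j+1}$ using \eqref{eq:f-derivative1}--\eqref{eq:f-derivative2}, observe that the $(1-\tanh^2 z)$ contributions cancel and the telescoping sums differ only in the $p=j+1$ term, and finally collapse $\EE\prod_{p\le k+1}f_p\,\langle\sigma\rangle\,\langle\sigma\rangle^{(j)}$ to $\EE\prod_{p\le j}f_p\,(\langle\sigma\rangle^{(j)})^2$ via the recursion \eqref{eq:bracket-recursion}. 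Your explicit mention of the boundary cases $j=k$, and $\tilde\gamma_j=0$ is a welcome addition that the paper leaves implicit.
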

\begin{proof}
    Let us start with the second term in \eqref{eq:Parisi_functional}:
    \begin{align}
        \partial_{x_j}&\frac{1}{2}\sum_{0\leq p\leq k}\xi_p\big((\tilde\gamma_{p+1}x_{p+1})^2-(\tilde\gamma_p x_p)^2\big)=(\xi_{j-1}-\xi_j)\tilde\gamma_{j}^2x_j.
    \end{align}
    Using Proposition \ref{prop:guerra}, we can derive the functional $\log \tilde{Z}_0$:
    \begin{align}
        \partial_{x_j}\log \tilde Z_0&=\EE\prod_{p=1}^{k+1} f_p\tanh (z(\eta))\Big[\frac{\tilde\gamma^2_{j}\eta_{j}}{\sqrt{2\big(\tilde\gamma_{j}^2 x_{j}-\tilde\gamma_{j-1}^2x_{j-1}\big)}}-\frac{\tilde\gamma^2_{j}\eta_{j+1}}{\sqrt{2\big(\tilde\gamma_{j+1}^2 x_{j+1}-\tilde\gamma_j^2x_j\big)}}\Big]\nonumber\\
        &=\tilde\gamma_j^2\EE\prod_{p=1}^{k+1} f_p\big[1-\tanh^2 (z(\eta))\big]-\tilde\gamma_j^2\EE\prod_{p=1}^{k+1} f_p\big[1-\tanh^2 (z(\eta))\big]\nonumber\\
        &\qquad+\frac{\tilde\gamma_j^2}{\sqrt{2\big(\tilde\gamma_{j}^2 x_{j}-\tilde\gamma_{j-1}^2x_{j-1}\big)}} \sum_{p\geq j}^{k+1}\EE f_1\dots \partial_{\eta_j}f_p\dots f_{k+1}\cdot \tanh (z(\eta))\nonumber\\
        &\qquad\qquad-\frac{\tilde\gamma_j^2}{\sqrt{2\big(\tilde\gamma_{j+1}^2 x_{j+1}-\tilde\gamma_{j}^2x_{j}\big)}}\sum_{p\geq j+1}^{k+1}\EE f_1\dots \partial_{\eta_{j+1}}f_p\dots f_{k+1}\cdot \tanh (z(\eta))\Big]\,.
    \end{align}
    Using the formulae for derivatives \eqref{eq:f-derivative1}-\eqref{eq:f-derivative2} we get
    \begin{align}\label{xderivative}
        \partial_{x_j}\log \tilde Z_0&=\tilde\gamma_j^2\EE f_1\dots f_{k+1}\langle\sigma\rangle\sum_{p\geq j}^{k+1}\langle\sigma\rangle^{(p)}(\xi_{p-1}-\xi_p)-\tilde\gamma_j^2\EE f_1\dots f_{k+1}\langle\sigma\rangle\sum_{p\geq j+1}^{k+1}\langle\sigma\rangle^{(p)}(\xi_{p-1}-\xi_p)\nonumber\\
        &=-\tilde\gamma_j^2(\xi_j-\xi_{j-1})\EE f_1\dots f_{k+1}\langle\sigma\rangle\langle\sigma\rangle^{(j)}=
        -\tilde\gamma_j^2(\xi_j-\xi_{j-1})\EE f_1\dots f_j(\langle\sigma\rangle^{(j)})^2.
    \end{align}
    In the last step we used the fact that $\langle\sigma\rangle^{(j)}$ is independent on the r.v.'s $\eta_{j+1},\dots,\eta_{k+1}$. Putting the two contributions together the proof is complete.
\end{proof}

The lemma that follows characterizes stationary pairs.

\vspace{3pt}
\begin{lemma}\label{lem:KKT_stationary}
    Consider $(\bar x,\bar\xi)\in\mathcal{M}_k$ a $k$-stationary pair, such that all the $\bar\xi_j$'s are different without loss of generality. Then the following consistency equations hold
    \begin{align}
        \bar x_j=\EE\prod_{p=1}^jf_p(\langle\sigma\rangle^{(j)})^2
    \end{align}for all $j=1,\dots,k$.
\end{lemma}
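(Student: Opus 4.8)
The plan is to read off the consistency equations as the first-order stationarity conditions for the minimization of $x\mapsto\mathcal{P}(x,\bar\xi)$, with $\bar\xi$ held fixed, over the ordered simplex $D=\{x:0=x_0\le x_1\le\dots\le x_k\le x_{k+1}=1\}$; by the definition of a $k$-stationary pair, $\bar x$ is precisely such a minimizer. I would first assume, without loss of generality (Remark~\ref{rem:redundacy_gammatilde_ell0}), that $\tilde\gamma_j>0$ for every $j\in\{1,\dots,k\}$, since the variables $x_j$ with $\tilde\gamma_j=0$ do not appear in $\mathcal{P}$ at all. Since the $\bar\xi_j$'s are strictly increasing one has $c_j:=\tilde\gamma_j^2(\bar\xi_j-\bar\xi_{j-1})>0$, and Lemma~\ref{lem:derivative_Parisi_x} then reads $\partial_{x_j}\mathcal{P}(\bar x,\bar\xi)=c_j\big(\bar x_j-G_j\big)$, where $G_j:=\EE\prod_{p=1}^jf_p\,(\langle\sigma\rangle^{(j)})^2$; so the claim is exactly $\bar x_j=G_j$ for $j=1,\dots,k$.

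The key structural ingredient is the chain of inequalities $0\le G_1\le G_2\le\dots\le G_k\le1$. The bounds $0\le G_j\le1$ follow at once from $0\le(\langle\sigma\rangle^{(j)})^2\le1$ and $\EE\prod_{p=1}^jf_p=1$ (a consequence of $\EE_{j-1}f_j=1$). The monotonicity is the same Jensen step that produces the ordering of the overlap moments noted after Theorem~\ref{thm:overlap_moments}: using the recursion \eqref{eq:bracket-recursion}, $\langle\sigma\rangle^{(j-1)}=\EE_{j-1}f_j\langle\sigma\rangle^{(j)}$, together with $f_j\ge0$ and $\EE_{j-1}f_j=1$, Jensen's inequality gives $(\langle\sigma\rangle^{(j-1)})^2\le\EE_{j-1}f_j\,(\langle\sigma\rangle^{(j)})^2$; multiplying by $\prod_{p=1}^{j-1}f_p$ (which is independent of $\eta_j$) and taking the full expectation yields $G_{j-1}\le G_j$.

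Finally I would run the stationarity argument block by block, partitioning $\{1,\dots,k\}$ into maximal blocks on which $\bar x$ is constant. For an interior block $B=\{j,\dots,j+m\}$ of common value $a\in(0,1)$ (so $\bar x_{j-1}<a<\bar x_{j+m+1}$), the perturbations $\bar x+\varepsilon e_{j+m}$ and $\bar x-\varepsilon e_{j}$ remain in $D$ for small $\varepsilon>0$, hence minimality forces $\partial_{x_{j+m}}\mathcal{P}(\bar x,\bar\xi)\ge0$ and $\partial_{x_j}\mathcal{P}(\bar x,\bar\xi)\le0$, that is $G_{j+m}\le a\le G_j$; combined with $G_j\le G_{j+m}$ this forces $G_i=a=\bar x_i$ for all $i\in B$ (the case $m=0$ of an isolated point even gives $\partial_{x_j}\mathcal{P}=0$ directly). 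For a block pinned at the lower endpoint, $\bar x_1=\dots=\bar x_{m'}=0$, only the upward perturbation $\bar x+\varepsilon(e_1+\dots+e_{m'})$ is admissible, giving $\sum_{i\le m'}c_iG_i\le0$ and hence $G_i=0=\bar x_i$ because $G_i\ge0$; symmetrically, a block pinned at the upper endpoint, $\bar x_j=\dots=\bar x_k=1$, gives $\sum_{i=j}^kc_i(1-G_i)\le0$ and hence $G_i=1=\bar x_i$ because $G_i\le1$. This exhausts every $j\in\{1,\dots,k\}$ and proves the lemma. The one genuine obstacle is the possible presence of active ordering constraints $\bar x_j=\bar x_{j+1}$ strictly inside $(0,1)$ — which can really occur at the optimizer, for instance when some $\bar\xi_j\in\zeta$ is skipped by the support of the associated measure — and it is exactly the monotonicity of the $G_j$ that resolves the block analysis there; a minor technical point, already implicit in Lemma~\ref{lem:derivative_Parisi_x}, is that at boundary points of $D$ one works with the one-sided directional derivatives of $\mathcal{P}(\cdot,\bar\xi)$, which agree with the continuous gradient given there.
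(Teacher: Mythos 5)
Your proof is correct and follows essentially the same route as the paper: one-sided first-order (KKT) conditions at the constrained minimizer combined with the Jensen monotonicity $G_{j-1}\le G_j$ coming from $\langle\sigma\rangle^{(j-1)}=\EE_{j-1}f_j\langle\sigma\rangle^{(j)}$. Your block decomposition and the explicit treatment of the endpoint constraints $x_1\ge 0$, $x_k\le 1$ make the argument somewhat more complete than the paper's, which only spells out the case of an adjacent collapsed pair, but the underlying mechanism is identical.
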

\begin{proof}
    The boundary of the optimization set of the $x$'s is identified by the various possible matchings $x_j=x_{j-1}$. Select a $j=2,\dots,k$. If $\bar x_j$ realizes an infimum point, it must satisfy the following conditions:
    \begin{align}\label{eq:KKT_x}
    \begin{cases}
        \left.\partial_{x_j}\mathcal{P}(x,\xi)\right|_{(\bar x,\bar\xi):\bar x_{j}=\bar x_{j-1}}\geq 0\\
        (\bar x_{j}-\bar x_{j-1})\partial_{x_j}\mathcal{P}(\bar x,\bar\xi)=0
    \end{cases}\,.
    \end{align}
    In the hypothesis $\bar x_{j-1}$ has collapsed on $\bar x_j$ one must also add
    \begin{align}
        \left.\partial_{x_{j-1}}\mathcal{P}(x,\xi)\right|_{(\bar x,\bar\xi):\bar x_{j-1}=\bar x_{j}}\leq 0\,.
    \end{align}
    If $\bar x_{j}>\bar x_{j-1}$ then the statement is automatically proved. 
    
    Say instead $0=\bar x_{j}-\bar x_{j-1}$. From the third of the above equations, using \eqref{eq:Parisi-derivative}, we infer that
    \begin{align}
        \bar x_{j-1}=\bar x_{j}\leq \EE\prod_{p=1}^{j-1}f_p(\langle\sigma\rangle^{(j-1)})^2=\EE\prod_{p=1}^{j-1}f_p(\EE_{j-1} f_j\langle\sigma\rangle^{(j)})^2\leq \EE\prod_{p=1}^{j}f_p(\langle\sigma\rangle^{(j)})^2
    \end{align}where we used \eqref{eq:bracket-recursion} and Jensen inequality. The only way for the above to be compatible with the first condition in \eqref{eq:KKT_x} is to have $\left.\partial_{x_j}\mathcal{P}(x,\xi)\right|_{(\bar x,\bar\xi):\bar x_{j}=\bar x_{j-1}}= 0$. Therefore, even if $\bar x_j$ is an extremal point on the boundary, it must still satisfy the fixed point equations.
\end{proof}

\begin{lemma}\label{lem:gammader}
Let $\mu\in\mathrm{Pr}_{k'}$ and assume that its associated pair $(x^{\mu},\xi^{\mu})\in\mathcal{M}_{k}$ is $k$-stationary, then one has 
\be\label{derivativegamma1}
\dfrac{\partial}{\partial{\gamma_\ell}} \mathcal{P}(x^\mu,\xi^\mu)=\begin{cases}-\gamma_{\ell}\sum_{j\in K_{\ell}}(\xi^{\mu}_j-\xi^{\mu}_{j-1})\, (x_j^\mu)^2\,,\quad\mathrm{if}\quad \ell=1,\ldots,r-1\\
\gamma_r\Big(1-\sum_{j\in K_r}(\xi^{\mu}_{j}-\xi^\mu_{j-1})(x_j^{\mu})^2\Big)
\,,\quad\mathrm{if}\quad \ell=r
\end{cases}\,.
\ee

Equivalently denoting by $\rho_{\mu}$ the law of the the synchronized pair of random vector defined in \eqref{def:varC} one has 
\be\label{derivativegamma2}
\dfrac{\partial}{\partial{\gamma_\ell}} \bar{\mathcal{P}}(\mu)=
\begin{cases}-\gamma_{\ell}(\zeta_\ell-\zeta_{\ell-1})\int x^2 \rho_\mu(dx\, |\,\Gamma=\gamma_{\ell})\,,\quad\mathrm{if}\quad \ell=1,\ldots,r-1\\
\gamma_r\Big(1-(1-\zeta_{r-1})\int x^2 \rho_\mu(dx\, |\,\Gamma=\gamma_{r})\Big)
\,,\quad\mathrm{if}\quad \ell=r
\end{cases}\,.
\ee

\end{lemma}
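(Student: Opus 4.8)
The plan is to compute $\partial_{\gamma_\ell}\mathcal{P}(x^\mu,\xi^\mu)$ directly from the definition \eqref{eq:Parisi_functional}, exploiting the fact that at a $k$-stationary pair the derivatives with respect to the $x_j$'s vanish (Lemma~\ref{lem:KKT_stationary}), so only the explicit $\gamma$-dependence through $\tilde\gamma$ survives. First I would note that, by \eqref{eq:tilde_gamma}, $\tilde\gamma_j = \gamma_\ell$ precisely for $j\in K_\ell$, so $\gamma_\ell$ enters $\mathcal{P}$ both through the one-body Hamiltonian $\tilde H$ inside $\log\tilde Z_0$ (via the terms $\tilde\gamma_j^2 x_j - \tilde\gamma_{j-1}^2 x_{j-1}$) and through the explicit sum $-\tfrac12\sum_j \xi_j((\tilde\gamma_{j+1}x_{j+1})^2 - (\tilde\gamma_j x_j)^2)$. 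Since a change in $\gamma_\ell$ moves all $\tilde\gamma_j$ with $j\in K_\ell$ simultaneously, I would write the total derivative as a chain-rule sum: $\partial_{\gamma_\ell}\mathcal{P} = \sum_{j\in K_\ell} \partial_{\tilde\gamma_j}\mathcal{P} \cdot \frac{\partial \tilde\gamma_j}{\partial\gamma_\ell} = \sum_{j\in K_\ell}\partial_{\tilde\gamma_j}\mathcal{P}$, because $\partial\tilde\gamma_j/\partial\gamma_\ell = 1$ for $j\in K_\ell$.

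Next I would compute $\partial_{\tilde\gamma_j}\mathcal{P}$ treating the $\tilde\gamma_j$ as independent variables. This is essentially the same computation as in Lemma~\ref{lem:derivative_Parisi_x} but differentiating with respect to $\tilde\gamma_j$ instead of $x_j$; by the structure $\tilde\gamma_j^2 x_j$ appearing everywhere, one expects $\partial_{\tilde\gamma_j}\mathcal{P} = \tilde\gamma_j x_j \cdot \tfrac{2}{?}\,(\text{same bracket as in }\eqref{eq:Parisi-derivative})$ up to the obvious Jacobian factor — more precisely, since $x_j$ enters through $\tilde\gamma_j^2 x_j$ and $\tilde\gamma_j$ enters through $\tilde\gamma_j^2 x_j$ and $\tilde\gamma_j^2 x_{j}$ in the neighbouring term with index $j$ taking the role of "$j$" and "$j-1$" alternately, a clean way is to use the identity $\partial_{\tilde\gamma_j} = \frac{x_j}{?}$ relating it to $\partial_{x_j}$ where it acts on $x_j$-dependence, but one must be careful that $\tilde\gamma_j$ appears in the two terms indexed by $j$ (as $\tilde\gamma_j x_j$) whereas $x_j$ appears in the terms indexed by $j$ and $j+1$. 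I would instead just carry out the Gaussian integration-by-parts via Proposition~\ref{prop:guerra} exactly as in \eqref{xderivative}, collect the telescoping sum $\sum_{p\geq j}\langle\sigma\rangle^{(p)}(\xi_{p-1}-\xi_p)$, and obtain $\partial_{\tilde\gamma_j}\mathcal{P}(x^\mu,\xi^\mu) = \tilde\gamma_j x_j^2(\xi_j-\xi_{j-1})\big[1 - (x_j^{\mu})^{-1}\EE\prod_{p\le j}f_p(\langle\sigma\rangle^{(j)})^2\big]$ — but at a $k$-stationary pair, Lemma~\ref{lem:KKT_stationary} gives $\EE\prod_{p\le j}f_p(\langle\sigma\rangle^{(j)})^2 = x_j^\mu$ for $j=1,\dots,k$, so for $j<k+1$ the bracket with the "$1$" is replaced by the genuine value and the terms combine. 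For the boundary index $j = k+1$ one has $x_{k+1}=\xi_k = 1$ and $\tilde\gamma_{k+1} = \gamma_r$, which produces the extra "$1$" in the $\ell = r$ case (since $\xi_k = 1$ is not matched by a stationarity equation; the last layer is frozen at $x_{k+1}=1$). Summing over $j\in K_\ell$ then yields the stated formula \eqref{derivativegamma1}.

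To pass from \eqref{derivativegamma1} to \eqref{derivativegamma2} I would simply invoke the identity \eqref{eq:explicit_conditional}, which says $\sum_{j\in K_\ell}(\xi_j-\xi_{j-1})f(x_j) = (\zeta_\ell - \zeta_{\ell-1})\int f(x)\,\rho_\mu(dx\mid\Gamma=\gamma_\ell)$; applying it with $f(x) = x^2$ and recalling that for $\ell = r$ one has $\sum_{j\in K_r}(\xi_j-\xi_{j-1}) = \zeta_r - \zeta_{r-1} = 1 - \zeta_{r-1}$ gives the second display immediately. The main obstacle I anticipate is bookkeeping the boundary term at $j = k+1$ correctly: one must track that $x^\mu_{k+1} = 1$ is fixed by the constraint $x_{k+1}=1$ rather than by a stationarity condition, so that the "$1-$" does not get cancelled as it does for the interior indices, and that the index $k+1$ belongs to $K_r$ (since $\xi_{k+1} = 1 = \zeta_r$). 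One must also verify the chain-rule step $\partial\tilde\gamma_j/\partial\gamma_\ell = \mathbbm{1}(j\in K_\ell)$ is legitimate, i.e.\ that small perturbations of $\gamma_\ell$ do not change the index set $K_\ell$ — this holds because the $\zeta_\ell$ are strictly ordered and the $\xi_j$ are held fixed, so for $\gamma_\ell$ in an open neighbourhood the partition $K_\ell$ is unchanged, and the derivative is one-sided-consistent. Everything else is the routine integration-by-parts already performed in Lemma~\ref{lem:derivative_Parisi_x}.
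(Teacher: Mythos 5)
Your overall strategy coincides with the paper's: write $\partial_{\gamma_\ell}\mathcal{P}=\sum_{j\in K_\ell}\partial_{\tilde\gamma_j}\mathcal{P}$, compute each $\partial_{\tilde\gamma_j}\mathcal{P}$ by the Gaussian integration by parts of Lemma~\ref{lem:derivative_Parisi_x}, use the stationarity conditions of Lemma~\ref{lem:KKT_stationary} (including the KKT argument for collapsed $x$'s), treat the boundary index $j=k+1$ separately, and pass to \eqref{derivativegamma2} via \eqref{eq:explicit_conditional}. However, the explicit formula you commit to for $\partial_{\tilde\gamma_j}\mathcal{P}$ is wrong, and the error is fatal rather than cosmetic. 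The point you leave unresolved (your own ``?'' placeholders) is that $\tilde\gamma_j$ enters the two pieces of \eqref{eq:Parisi_functional} with \emph{different} Jacobians relative to $x_j$: the term $\log\tilde Z_0$ depends on $(\tilde\gamma_j,x_j)$ only through $u_j=\tilde\gamma_j^2x_j$, so on that piece $\partial_{\tilde\gamma_j}=(2x_j/\tilde\gamma_j)\,\partial_{x_j}$, whereas the subtracted sum contains $(\tilde\gamma_jx_j)^2=u_jx_j$, on which $\partial_{\tilde\gamma_j}=(x_j/\tilde\gamma_j)\,\partial_{x_j}$. The correct derivative for $1\le j\le k$ is
\[
\partial_{\tilde\gamma_j}\mathcal{P}=\tilde\gamma_j(\xi_j-\xi_{j-1})\Big[x_j^2-2\,x_j\,\mathbb{E}\prod_{p=1}^{j}f_p\big(\langle\sigma\rangle^{(j)}\big)^2\Big],
\]
which at a stationary pair equals $-\tilde\gamma_j(\xi_j-\xi_{j-1})x_j^2$. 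Your formula carries coefficient $1$ instead of $2$ on the second term, so at stationarity your bracket reads $x_j^2-x_j\cdot x_j=0$: every interior term vanishes, and your argument would deliver $\partial_{\gamma_\ell}\mathcal{P}=0$ for $\ell<r$ (and the bare $\gamma_r$ for $\ell=r$), contradicting the statement. The entire content of the lemma is that the $\gamma$-derivative does \emph{not} vanish at stationarity: the envelope theorem kills only the implicit dependence through the optimal $x$'s, and the surviving explicit dependence is exactly this factor-of-two mismatch.

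The remaining steps of your outline are sound. The boundary term indeed gives $\partial_{\tilde\gamma_{k+1}}\mathcal{P}=2\gamma_r-\gamma_r=\gamma_r$ (the $2\gamma_r$ from $\log\tilde Z_0$, using $\xi_k=1$, $x_{k+1}=1$, so that the $\tanh^2$ contributions cancel against $1-\tanh^2$; the $-\gamma_r$ from the subtracted sum), which is the source of the ``$1$'' in the $\ell=r$ case since $k+1\in K_r$ while $\xi_{k+1}-\xi_k=0$ kills its contribution to the quadratic sum; the conversion to \eqref{derivativegamma2} via \eqref{eq:explicit_conditional} is exactly the paper's step; and your observation that small perturbations of $\gamma_\ell$ leave the sets $K_\ell$ unchanged is correct. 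Once the interior derivative is corrected as above, summing over $j\in K_\ell$ yields \eqref{derivativegamma1}.
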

\begin{proof}
The  proof  of \eqref{derivativegamma1} proceeds as the one of Lemma \ref{lem:derivative_Parisi_x}, with the difference that we exploit the fact that $(x^\mu,\xi^\mu)$ is a stationary pair. It is indeed easy to verify through Karush–Kuhn–Tucker  conditions that even if some $x$'s lie at the boundary, e.g. $x^\mu_{j+1}=x^\mu_j$ for some $j$'s, they are still stationary points for the Parisi functional, namely \eqref{eq:Parisi-derivative} always equals zero for a stationary pair.
The equivalence with \eqref{derivativegamma2} is  proved by \eqref{eq:explicit_conditional}. 
\end{proof}

\subsection{Proof of Proposition 
\ref{continuosex}} \label{sec:Lipschitz_proof}
Let us start with a simple observation.
If $(x,\xi)\in\mathcal{M}_{k}$ is such that $\xi_j=\xi_{j-1}$ for some $j\leq k$, then 
$\mathcal{P}(x,\xi)\equiv\mathcal{P}(x,\xi)^-$
where  $(x,\xi)^-=(x,\xi)\setminus(x_j,\xi_j)\in\mathcal{M}_{k-1}$. On the other hand assume that $x_j=x_{j+1}$ for some $j$, then there are two cases: if $\tilde{\gamma_j}=\tilde{\gamma}_{j+1}$ then
$\mathcal{P}(x,\xi)=\mathcal{P}(x,\xi)^-$ where $(x,\xi)^-$ is again obtained dropping $x_{j}$ and $\xi_{j}$. 

On the other hand, if $\tilde{\gamma}_{j+1}>\tilde{\gamma}_{j}$, that occurs when $\xi_{j}=\zeta_{\ell}$ for some $\ell\in\{0,\ldots,r\}$, and still $x_{j+1}=x_j$, then none of the $\xi$ can be dropped. Whenever repetitions of the $x$'s occur in correspondence of an index $j$ such that $\xi_j\in\zeta$ the pair $(x,\xi)$ cannot be simplified. We thus call a pair $(x,\xi)$ \emph{minimal} iff it has only such repetitions in the sequence $x$. As noticed in Section \ref{sec:synchro} minimal pairs are precisely the ones generated from discrete distributions $\mu\in\mathrm{Pr}_k'$.

Let us now move to the proof of the Lipschitz continuity of $\bar{\mathcal{P}}$. For $\alpha=1,2$ take $\mu_{\alpha}=(y^{\alpha},m^{\alpha})\in\mathrm{Pr}_{k_{\alpha}}$ and  denote by $(x^{\alpha},\xi^{\alpha})$ the associated sequence in $\mathcal{M}$. As mentioned earlier, such sequences $(x^{\alpha},\xi^{\alpha})$ are minimal.

Let $\xi=(\xi_j)_{j\leq k^*+1}$ be the strictly increasing sequence obtained by ordering $\xi^{1}\cup\xi^{2}$  where $k^*+1=|\xi^{1}\cup\xi^{2}|$. For $\alpha=1,2$ define also an increasing sequence $\tilde{x}^{\alpha}=(\tilde{x}^{\alpha}_j)_{j\leq k^*+1}$ where
\be
\tilde{x}^{\alpha}_j=\mu_{\alpha}^{-1}(\xi_j)\,,\quad \tilde x^\alpha_{k^*+1}=1\,.
\ee
From the above definition, it is clear that the image of $\mu_{\alpha}^{-1}$ must be contained in $x^\alpha$. Therefore the above operation introduces yet again other repetitions in the $\tilde x_j^\alpha$. In particular $\tilde{x}^{\alpha}_{j+1}=\tilde{x}^{\alpha}_{j}$ { implies} $\xi_{j}\in \xi^\alpha$. We also denote by $\bar{\gamma}$ the associated sequence \eqref{eq:tilde_gamma} through $\mu_\Gamma^{-1}(\xi)$.

For any $t\in[0,1]$ consider  $x(t)=(x_j(t))_{j\leq k^*+1}$  by
\be
x_j(t)=t \,\tilde{x}^1_j+(1-t)\,\tilde{x}^2_j\,.
\ee
Therefore $(x(t),\xi)\in\mathcal{M}_{k^*}$ and we can define $\phi(t)=\mathcal{P}(x(t),\xi)$. It is not difficult to check that 
\begin{align}
\phi(1)=\mathcal{P}(\tilde{x}^1,\xi)=\mathcal{P}(x^1,\xi^1)=\bar{\mathcal{P}}(\mu_1)\\
\phi(0)=\mathcal{P}(\tilde{x}^2,\xi)=\mathcal{P}(x^2,\xi^2)=\bar{\mathcal{P}}(\mu_2)\,.
\end{align}
Using formula \eqref{eq:Parisi-derivative} one obtains
\be
\phi'(t)=\sum_{j\leq k^*+1}\tilde{\gamma}^2_j(\xi_{j}-\xi_{j-1})(\tilde{x}_j^1-\tilde{x}_j^2)C_j(t)
\ee
where $|C_j(t)|\leq 2$. Recall that $\tilde{x}_j^{1}=\mu_1^{-1}(\xi_j)$ and  $\tilde{x}_j^{2}=\mu_2^{-1}(\xi_j)$ one obtains 
\be
|\phi'(t)|\leq 2\gamma^2_r\sum_{j\leq k^*+1}(\xi_{j}-\xi_{j-1})|\mu_1^{-1}(\xi_j)-\mu_2^{-1}(\xi_j)|=2\gamma^2_r\int_0^1|\mu_1^{-1}(p)-\mu_2^{-1}(p)|dp
\ee
and hence \eqref{Lipschtiz} follows. 

In order to prove \eqref{infequality} it is  enough to show that 
\be\label{infe}
\Big\{\bar{\mathcal{P}}(\mu)\,\mid\,  \mu\in\bigcup_{k\geq 1}\mathrm{Pr}_k\Big\}=\Big\{\mathcal{P}(x,\xi)\,\mid \,  (x,\xi)\in\mathcal{M}\Big\}.
\ee
Clearly because of the map   \eqref{map} we have the inclusion $\subseteq$ in \eqref{infe}. In order to prove the converse is enough to show that  for any $(x,\xi)\in\mathcal{M}_k$
there exists $k'\leq k$ and $\mu\in\mathrm{Pr}_{k'}$
such that $\mathcal{P}(x,\xi)=\bar{\mathcal{P}}(\mu)$. Clearly one can assume without loss that 
$(x,\xi)$ is minimal and then it's easy to check that the desired $\mu$ is simply the law of the random variable $Y$ identified by $\mathbb{P}(Y=x_j)=\xi_j-\xi_{j-1}$.

\subsection{Proof of Theorem \ref{thm:overlap_moments}}
In this proof we will use the same idea of \cite{diff_parisi}. We want to exploit convexity properties of the pressure $p_N$ in \eqref{eq:pressure_per_particle} with respect to each  component of  $\beta=(\beta_{\ell})_{\ell\leq r}$  where $\beta_{\ell}=\sqrt{\gamma^2_{\ell}-\gamma^2_{\ell-1}}$. Indeed $\beta_{\ell}$ plays the role of an inverse temperature in the definition \eqref{def:hamiell} of $H_N^{\ell}$. Convexity is inherited by the function
\be
\mathcal{P}(\gamma(\beta)):=\inf_{\mu\in\mathrm{Pr}}\bar{\mathcal{P}}(\mu,\gamma(\beta))
\ee
since it is the limit of the sequence of convex functions $p_N$.
We are going to show that $\mathcal{P}(\gamma(\beta))$  is differentiable in each of the $\beta_{\ell}$.
Fix some $\ell\leq r$ and denote by $\partial_{\ell}\mathcal{P}$ the subdifferential
of $\mathcal{P}$ w.r.t. $\beta_\ell$. Thanks to  convexity  it is enough to show that $\partial_{\ell}\mathcal{P}$ is a singleton. Let  $(x^{(k)},\xi^{(k)})_{k\in\mathbb{N}}$  be a sequence of $k$-stationary pairs and denote by $\mu_k\in\mathrm{Pr}_k$ the associated probability measure, namely $\mu_k$ is the law of a random variable $Y_k$ with $\mathbb{P}(Y_k=x^{(k)}_j)=\xi^{(k)}_j-\xi^{(k)}_{j-1}$. 
By Lipschitz continuity in Proposition~\ref{continuosex} we have that $\lim_{k\to\infty}\bar{\mathcal{P}}(\mu_k)=\mathcal{P}(\gamma(\beta))$ and the limit is approached monotonically. Then there exists a sequence  $\epsilon_k \to 0$ such that 
\be
0\leq  \bar{\mathcal{P}}(\mu_k,\gamma(\beta))-\mathcal{P}(\gamma(\beta))\leq \epsilon_k.
\ee
Now if $a\in\partial_{\ell}\mathcal{P}$  repeating the proof of \cite[Theorem 1]{diff_parisi}, one obtains
\be
a=\frac{\partial \bar{\mathcal{P}}}{\partial \beta_\ell}\left(\mu_k, \gamma(\beta)\right)+\mathcal{O}\left(\sqrt{\varepsilon_{k}}\right)\,.
\ee
{The above also requires a uniform bound on the second derivative of the Parisi functional w.r.t. $\beta_\ell$, that can be dealt with as in \cite{diff_parisi}.}

Recall that if $\mu_k$ is stationary then $\frac{\partial}{\partial_{\gamma_p}}\bar{\mathcal{P}}(\mu_k,\gamma(\beta))$
is given in Lemma  \ref{lem:gammader}. Then using the relation 
\be\label{composition}
\frac{\partial}{\partial \beta_\ell}=\beta_{\ell}\sum_{p\geq \ell}^r\,\frac{1}{\gamma_p}\,\frac{\partial}{\partial \gamma_{p}}
\ee
we get 
\be
a=\beta_\ell\Big\{1- \sum_{p\geq \ell}^{r} (\zeta_p-\zeta_{p-1})\int x^2 \rho_{\mu_k}(dx\, |\,\Gamma=\gamma_{p})\Big\}+\mathcal{O}\left(\sqrt{\varepsilon_{k}}\right)\,.
\ee
Now since $\bar{\mathcal{P}}$ is continuous there exists a subsequence $(\mu_{k_n})$ of $(\mu_{k})$ such that $\mu_{k_{n}}\xrightarrow{W_1} \mu^*$ and $\mathcal{P}(\gamma(\beta))=\bar{\mathcal{P}}(\mu^*,\gamma(\beta))$, namely $\mu^*$ is some Parisi measure. By definition of $\rho_\mu$
\begin{align}
    &\Big|\int x^2 \big(\rho_{\mu_k}(dx\, ,\,\Gamma=\gamma_{p})-\rho_{\mu^*}(dx\, ,\,\Gamma=\gamma_{p})\big)\Big|\leq \EE_U\big|(\mu_{k_n}^{-1}(U))^2-(\mu^{*-1}(U))^2\big|\mathbbm{1}(\mu_{\Gamma}^{-1}(U)=\gamma_p)\nonumber\\
    &\qquad\leq2 W_1(\mu_{k_n},\mu^*)\xrightarrow[]{n\to\infty}0\,,
\end{align}where we used that $\mathbbm{1}(\dots)\leq 1$ and that $\mu_{k_n}^{-1}(U),\mu^{*-1}(U) \in[0,1]$.
The limit along such subsequence then uniquely determines $a$, proving $\bar{\mathcal{P}}$ is differentiable, and
\begin{align}
    \frac{\partial}{\partial\beta_\ell} {\mathcal{P}}(\gamma(\beta))=\beta_\ell
    \Big\{1- \sum_{p\geq \ell}^{r} (\zeta_p-\zeta_{p-1})\int x^2 \rho_{\mu^*}(dx\, |\,\Gamma=\gamma_{p})\Big\}\,.
\end{align}
Using the chain rule we can finally write
\begin{align}
     \frac{\partial}{\partial\gamma_\ell} {\mathcal{P}}(\gamma(\beta))=\begin{cases}
         -\gamma_\ell(\zeta_\ell-\zeta_{\ell-1})\int x^2\rho_{\mu^*}(dx\mid\Gamma=\gamma_\ell)\,,\quad \text{if }\ell=1,\dots,r-1\\
         \gamma_r\Big(1-(1-\zeta_{r-1})\int x^2\rho_{\mu^*}(dx\mid\Gamma=\gamma_r)\Big)\,,\quad \text{if }\ell=r
     \end{cases}
\end{align}
which proves the statement when matched with the limit of \eqref{eq:gamma_der_pN}.

\subsection{Proof of Theorem \ref{thm:annealed}}\label{sec:proofthm_annealed}

Recall that in this proof we assume  $h=0$ almost surely. Let us start noticing that

\be\label{Jensenbound}
p_N\leq \log 2 +\dfrac{\gamma^2_r}{2}
\ee
uniformly in $N$. Indeed one has
\be
\log 2 +\dfrac{\gamma_r^2}{2}\,= \,\left.p_N\right|_{\zeta=\boldsymbol{1}}
\ee
for any integer $N$. Hence the inequality \eqref{Jensenbound} easily follows from a repeated applications of Jensen inequality and  the fact that $x\mapsto x^{t}$ is concave for any $t\in[0,1]$. Moreover again by Jensen inequality one has that 
\be
p_N\geq \frac{1}{N}\E\log Z_{r,N}
\ee
uniformly in $N$, where $\EE$ denotes expectation w.r.t.\ all the disorder. Now observe that the quantity  
$\frac{1}{N}\E\log Z_{r,N}$ coincides with quenched pressure on a Sherringhton-Kirkpatrick model at inverse temperature $\beta= \sqrt{2}\gamma_r$ and then \cite{Tala_vol1} one has
\be
\lim_{N\to\infty}\frac{1}{N}\E\log Z_{r,N}= \log 2 +\frac{\beta^2}{4}\,\,\iff\,\, \beta\leq 1 \,.
\ee
This implies that  if $\gamma_r^2\leq \frac{1}{2}$ then 
\be\label{quenchedannealed}
\lim_{N\to\infty} p_N= \log 2 +\dfrac{\gamma_r^2}{2}\,.
\ee
In order to prove that $\gamma_r^2\leq \frac{1}{2}$ is also a necessary condition for the equality \eqref{quenchedannealed} we start by noticing that $\log 2 +\dfrac{\gamma^2_r}{2}$ is obtained as a limiting value of the Parisi functional $\mathcal{P}(x,\xi)$ . More precisely let us define
\be
f(x_1,\ldots,x_r)= \mathcal{P}(x,\xi^*) 
\ee
where $\bxi^*=(\xi^*)_{j\leq r+1}$ is fixed trough the choice
\be\label{xi-choice}
\xi^*_0=\zeta_0<\ldots<\xi^*_{r-1}=\zeta_{r-1}<\xi^*_r=\xi^*_{r+1}=1\,.
\ee
Then it's easy to check that
\be
\lim_{\substack{x_j\,\to\, 0^+\\j\leq r}}\, f(x_1,\ldots,x_r)\, = \,\log 2 +\dfrac{\gamma^2_r}{2}\,.
\ee
Let us consider the quantity
\begin{align}
&f(x_{r})=\lim_{\substack{x_j\,\to\, 0^+\\j\leq r-1}}
f(x_1,\ldots,x_r)\\
&=\log 2 +\frac{1}{\zeta_{r-1}}\log\mathbb{E}_{\eta}\cosh^{\zeta_{r-1}}\big(\eta\gamma_r\sqrt{2x_r}\big)+\frac{\gamma_r^{2}}{2}\Big(1-2x_{r}+(1-\zeta_{r-1})x_r^2\Big)\,.
\end{align}
Recall that the RSB bound (Proposition 3.1 in \cite{contucci-mingione}) implies that
\be\label{Guerrbou}
\lim_{N\to\infty} p_N\leq \inf_{x_r\in[0,1]} f(x_r)\,.
\ee
We want to prove that if $\gamma^2_r>\frac{1}{2}$ then  $\inf_{x_r} f(x_r)<\lim_{x_r\to 0^+} f(x_r)=\log 2+\frac{\gamma^2_r}{2}$. Equation \eqref{eq:Parisi-derivative} yields
\be
\frac{d}{dx_r}f(x_r)=(1-\zeta_{r-1})\frac{\gamma^2_r}{2}\Big(x_r-\dfrac{\mathbb{E}_{\eta} \cosh^{\zeta_{r-1}}\big(\eta\gamma_r\sqrt{2x_r}\big)\tanh^2\big(\eta\gamma_r\sqrt{2x_r}\big)}{\mathbb{E}_{\eta}\cosh^{\zeta_{r-1}}\big(\eta\gamma_r\sqrt{2x_r}\big)}\Big)\,.
\ee
This implies that $\lim_{x_r\to 0^+}\frac{d}{dx_r}f(x_r)=0$ and also that 
\be
\lim_{x_r\to 0^+}\frac{d^2}{dx_r^2}f(x_r)=(1-\zeta_{r-1})\frac{\gamma^2_r}{2}(1-2\gamma_r^2)\,.
\ee
Hence, if $\gamma_r^2>\frac{1}{2}$ one has that $\lim_{x_r\to 0^+}\frac{d^2}{dx_r^2}f(x_r)<0$.
From \eqref{Guerrbou} we can thus conclude
\be
\lim_{N\to\infty} p_N\leq \inf_{x_r\in[0,1]}f(x_r)<\lim_{x_r\to 0^+ }f(x_r)= \log 2+\frac{\gamma_r^2}{2}\,.
\ee

\subsection{Proof of Theorem \ref{thm:minRSB}}\label{sec:proofthm_minRSB}
The proof leverages on the fact that there are at least $r$ different values of $\gamma_\ell$'s, and that the stationary sequences $(x^*_j)_{j\leq k}$ of the Parisi functional are naturally ordered.

Suppose, by contradiction, that the infimum of the variational formula \eqref{eq:var_principle} is attained on a $k$-stationary pair $(x^*,\xi^*)$ such that 
\begin{align}
    0=x_0^*<x_1^*<x_2^*<\dots<x_{j_\ell}^*=x_{j_\ell+1}^*<x_{j_\ell+2}^*<\dots<x_{k}^*<x_{k+1}^*=1
\end{align}where we stress that $x_{j_\ell}^*=x_{j_\ell+1}^*$ collapsed. The index $j_\ell$ is chosen in such a way that $\xi^*_{j_\ell}=\zeta_\ell$. Hence the CDF associated to the above choice of $\mu^*$,  does not include $\zeta_\ell$ in its image, as suggested by the limiting procedure outlined in \figurename\,\ref{fig:jump}. Our assumption then entails
\begin{align}
    \inf_{(x,\xi)\in\mathcal{M}}\mathcal{P}(x,\xi)=\mathcal{P}(x^*,\xi^*)\,.
\end{align}
Thanks to Lemma~\ref{lem:KKT_stationary} we have that
\begin{align}\label{eq:gradient_x_jl}
    \left.\partial_{x_{j_\ell}}\mathcal{P}(x,\xi)\right|_{(x^*,\xi^*)}=\tilde\gamma^2_{j_\ell}(\xi^*_{j_\ell}-\xi^*_{j_\ell-1})\big[
    x_{j_\ell}^*-\EE\prod_{p=1}^{j_\ell} f_p\big(\langle\sigma\rangle^{(j_\ell)}\big)^2
    \big]=0\,.
\end{align}
\begin{figure}[h!!!]
    \centering
    \begin{tikzpicture}
\begin{axis}[
    xlabel={x},
    ylabel={$F_\mu$},
    xmin=0, xmax=1,
    ymin=0, ymax=1,
    scale=1
]
\addplot[color=blue,very thick,domain=0:0.3, 
    samples=100] {.1} node[above,pos=.5] {$\xi_0$};
    \addplot[style=dashed,domain=0:1, 
    samples=100] {.1} node[above,pos=.05] {$\zeta_0$};
\addplot[color=blue,very thick,domain=0.3:0.5, 
    samples=100] {.2} node[above,pos=.5] {$\xi_1$};
    \addplot[style=dashed,domain=0:1, 
    samples=100] {.2} node[above,pos=.05] {$\zeta_1$};
\addplot[color=red,very thick,domain=0.5:0.55, 
    samples=100] {.4} node[above,pos=.2] {$\xi_2$};
    \addplot[style=dashed,domain=0:1, 
    samples=100] {.4} node[above,pos=.05] {$\zeta_2$};
    \addplot[color=blue,very thick,domain=0.55:0.7, 
    samples=100] {.6} node[above,pos=.2] {$\xi_3$};
\addplot[color=blue,very thick,domain=0.7:0.8, 
    samples=100] {.7} node[above,pos=.2] {$\xi_4$};
    \addplot[domain=0:1,style=dashed, 
    samples=100] {.7} node[above,pos=.05] {$\zeta_3$};
\addplot[color=blue,very thick,domain=0.8:0.9, 
    samples=100] {.85} node[above,pos=.2] {$\xi_5$};
\addplot[color=blue,ultra thick,domain=0.9:1, 
    samples=100] {1} node[below,pos=.2] {$\xi_6$};

\addplot[color=red,style=dashed,domain=0:0.4, 
    samples=50] (0.5,{x}) node[pos=.1,left] {$x_2$};
\addplot[color=red,style=dashed,domain=0:0.4, 
    samples=50] (0.55,{x}) node[pos=.1,right] {$x_3$};
\draw[thick, color=red,->] (0.55,0.05)--(0.5,0.05);
\end{axis}
\end{tikzpicture}
\caption{Typical limiting situation in which one of the $\zeta_\ell$'s (in this case $\zeta_2$) disappears from the cumulative distribution function. In general, $\zeta_\ell$ is not in the final limiting distribution when $x_{j_{\ell}+1}\to x_{j_\ell}$ for $j_\ell$ s.t. $\xi_{j_\ell}=\zeta_\ell$. In this plot, $r=4$, $k=6$.}\label{fig:jump}
\end{figure}

Recall that $\xi^*_{j_\ell}=\xi^*_{j_\ell-1}$ can be discarded without loss of generality. In fact, in that case the $\inf$ would be attained on a distribution with less than $k$ distinct relevant values in the sequence $x$. It is indeed intuitive from \figurename\,\ref{fig:jump}, that in order to skip $\zeta_2$, $\xi_1\,,\xi_3$ and $\xi_2$ (which equals $\zeta_2$) must be all different. 

In order to prove Theorem \ref{thm:minRSB}, we show that, under the assumption that $(x^*,\xi^*)$ is the infimum point of $\mathcal{P}$ on $\mathcal{M}_k$, we are actually able to construct a new pair $(\bar{x},\bar{\xi})$ with finite number of atoms, whose cumulative distribution contains $\zeta_{\ell}$ in the image and such that
\be\label{eq:condition}
\mathcal{P}(x^*,\xi^*)>\mathcal{P}(\bar{x},\bar{\xi})\,.
\ee

Let us design a pair $(\bar{x},\bar{\xi})$ with $k+1$ distinct atoms as follows. We take $\bar\xi=\xi^*$ and
\begin{align}
    0=x_0^*=\bar x_0<x_1^*=\bar x_1<\dots<x_{j_\ell}^*=\bar x_{j_\ell}=x_{j_\ell+1}^*<\bar x_{j_\ell+1}<\dots<x_{k+1}^*=\bar x_{k+1}=1\,.
\end{align}In broad terms the sequence $\bar x$ is the sequence $x^*$ where $x_{j_\ell}^*$ remained detatched from $x_{j_\ell+1}^*$. With these notations, following \eqref{eq:tilde_gamma}, one has
\be
\tilde\gamma_{j_\ell}=\gamma_\ell\,<\, \tilde\gamma_{j_{\ell}+1}=\gamma_{\ell+1}\,.
\ee
It remains to show that there exists a choice of $\bar x_{j_\ell+1}$ such that \eqref{eq:condition} holds true. This is verified if we prove that
\begin{align}
    \left.\partial_{x_{j_\ell+1}}\mathcal{P}(x,\xi)\right|_{(x^*,\xi^*)}=\tilde\gamma_{j_\ell+1}^2(\xi^*_{j_\ell+1}-\xi^*_{j_\ell})\big[x_{j_\ell+1}^*-\EE\prod_{p=1}^{j_\ell+1} f_p\big(\langle\sigma\rangle^{(j_\ell+1)}\big)^2
    \big]
\end{align}
is strictly negative. Using the fact that $x_{j_\ell}^*=x_{j_\ell+1}^*$, together with \eqref{eq:gradient_x_jl} and \eqref{eq:bracket-recursion} one readily gets
\begin{align}\label{eq:variance_bracket}
    \left.\partial_{x_{j_\ell+1}}\mathcal{P}(x,\xi)\right|_{(x^*,\xi^*)}=-\tilde\gamma_{j_\ell+1}^2(\xi^*_{j_\ell+1}-\xi^*_{j_\ell})\EE \prod_{p=1}^{j_\ell} f_p \big[\EE_{j_\ell}f_{j_\ell+1}\Big( \langle\sigma\rangle^{(j_\ell+1)}-\EE_{j_\ell}f_{j_\ell+1}\langle\sigma\rangle^{(j_\ell+1)}\Big)^2\big]
\end{align}
which is non positive. Keeping in mind the example of \figurename\,\ref{fig:jump}, it is not difficult to see that, in the case we are interested in, we can assume $\xi^*_{j_\ell+1}>\xi^*_{j_\ell}$ without any loss of generality. Hence, the sign of the previous derivative is uniquely determined by the variance in \eqref{eq:variance_bracket}.

Recall that
\begin{align}\label{eq:bracket_proof}
        \langle\sigma\rangle^{(j_\ell+1)}&= \EE_{j_\ell+1}\dots\EE_{k}f_{j_\ell+2}\dots f_{k+1}\tanh \Big(\sum_{p=1}^{j_\ell}\eta_p\sqrt{2\big( \tilde \gamma_{p}^2 x^*_{p}- \tilde \gamma_{p-1}^2 x^*_{p-1}\big)}\nonumber\\
        &\qquad+\eta_{j_\ell+1}\sqrt{2\big(\gamma_{\ell+1}^2- \gamma_{\ell}^2 \big)x^*_{j_\ell}}+\sum_{p=j_\ell+2}^{k+1}\eta_p\sqrt{2\big( \tilde \gamma_{p}^2 x^*_{p}- \tilde \gamma_{p-1}^2 x^*_{p-1}\big)}+h\Big)\,,
\end{align}where we have imposed $x^*_{j_\ell+1}=x_{j_\ell}^*$ and $\tilde \gamma_{j_\ell+1}=\gamma_{\ell+1}>\tilde\gamma_{j_\ell}=\gamma_\ell$. The summation of the first $j_\ell+1$ contributions in the above, and possibly the magnetic field, is what makes $\langle\sigma\rangle^{(j_\ell+1)}$ random. From the previous formula we can see that, as long as $\eta_{j_\ell+1}$ appears, $\langle\sigma\rangle^{(j_\ell+1)}$ cannot have vanishing variance in the measure $\EE_{j_\ell}f_{j_\ell+1} (\cdot)$. This always occurs precisely because $\gamma_{\ell+1}>\gamma_\ell$, unless all the $x_j^*, \forall\, j\leq j_\ell+1$ collapse to zero. 
{ This is ruled out by the two following Lemmas.
\vspace{3pt}

\begin{lemma}
    If $\EE h^2>0$ then $x_{j}^*>0$ for all $j=1,\dots,k$.
\end{lemma}
\begin{proof}
    To prove the result, we show that at $x_{j}^*=0$ for some $j$, $\partial_{x_{j}}\mathcal{P}$ is strictly negative. Recall that
    \begin{align}
        \langle\sigma\rangle^{(j)}&= \EE_{j}\dots\EE_{k}f_{j+1}\dots f_{k+1}\tanh \Big(\sum_{p=1}^{k+1}\eta_p\sqrt{2\big( \tilde \gamma_{p}^2 x^*_{p}- \tilde \gamma_{p-1}^2 x^*_{p-1}\big)}+h\Big)\,.
    \end{align}Using a reasoning similar to that used to prove \eqref{eq:f-derivative1}-\eqref{eq:f-derivative2}, one can show
\begin{align}\label{eq:D_h_f}
    \partial_h f_j=\xi_{j-1}(\langle\sigma\rangle^{(j)}-\langle\sigma\rangle^{(j-1)}) f_j\,.
\end{align}
Then we can finally compute the derivative
\begin{align}\label{eq:Dh_sigma_j}
    \partial_h \langle\sigma&\rangle^{(j)}=\EE_{j}\dots\EE_{k}\sum_{s=j+1}^{k+1}f_{j+1}\dots\partial_hf_s\dots f_{k+1}\langle\sigma\rangle+1-\EE_{j}\dots\EE_{k}f_{j+1}\dots f_{k+1}\langle\sigma\rangle^2=\nonumber\\
    &=1-\EE_{j}\dots\EE_{k}f_{j+1}\dots f_{k+1}\langle\sigma\rangle^2+
    \sum_{s=j+1}^{k+1} \EE_{j}\dots\EE_{k}f_{j+1}\dots f_{k+1}
    \xi_{s-1}(\langle\sigma\rangle^{(s)}-\langle\sigma\rangle^{(s-1)}) \langle\sigma\rangle\,.
\end{align}The last set of terms in the above line can be rewritten using the recursive relation \eqref{eq:bracket-recursion}, and the first one is strictly positive. Hence:
\begin{align}
    \partial_h \langle\sigma&\rangle^{(j)}>\sum_{s=j+1}^{k+1}\xi_{s-1}\EE_{j}\dots\EE_{s-1}f_{j+1}\dots f_{s}
    [(\langle\sigma\rangle^{(s)})^2-(\EE_{s-1} f_s\langle\sigma\rangle^{(s)})^2]\geq 0\,.
\end{align}The above is actually true for any value of the variational parameters $\xi,x$. Suppose now, by contradiction, that $x_j^*=0$, which also entails $x_s^*=0$ for all $s\leq j$ by inequality constraints automatically satisfied by stationary points, see Lemma \ref{lem:KKT_stationary}. In that case one would also have $\langle\sigma\rangle^{(j)}|_{h=0}=0$ by symmetry of the Gaussian law of the $\eta$'s, and of the weights $f$ (recall $x^*_{s\leq j}=0$). This then means that as soon as the magnetic field is non zero $\langle\sigma\rangle^{(j)}|_{h\neq0}\neq0$. This would then make it impossible for the derivative \eqref{eq:Parisi-derivative} to vanish when evaluated at $x_j=0$. This contradiction is born from the assumption $x_j^*=0$ which is thus falsified concluding the proof.
\end{proof}

\vspace{3pt}

\begin{lemma}\label{lem:partial_annealing--}
    Consider the case $\EE h^2=0$. If $\gamma_\ell^2>\frac{1}{2\zeta_\ell^2}$ then $x^*_{j_\ell}>0$.
\end{lemma}
\begin{proof}
    The strategy is to show that $x_{j\leq j_{\ell}}^*=0$ is an unstable stationary point under the hypothesis. In order to do it, we need an expansion of $\langle\sigma\rangle^{(j_\ell)}$ around $x_{j_\ell}\to 0$, with $x_{j_\ell-1}=\dots=x_1= 0$. Assuming the first $j_\ell-1$ coordinates are $0$ is licit as for stationary points $x^*$ they are naturally ordered, as per Lemma \ref{lem:KKT_stationary}.
    
    In this selected limiting direction (which is compatible with the inequality constraints) we can write:
    \begin{align}
    \langle\sigma\rangle^{(j_\ell)}= \EE_{j_\ell}\dots\EE_{k}f_{j_\ell+1}\dots f_{k+1}\tanh \Big(\sqrt{2x_{j_\ell}}\eta_{j_\ell}\gamma_\ell +\sum_{p=j_\ell+1}^{k+1}\eta_p\sqrt{2(\tilde\gamma_p^2 x_p-
    \tilde\gamma_{p-1}^2 x_{p-1})}\Big)\,,
    \end{align}where we have kept the remaining $x$'s generic, and $\tilde\gamma_{j_\ell}=\gamma_\ell$. Even though it was not specified, a similar separation can be carried out inside the $f$'s. The term $\sqrt{2x_{j_\ell}}\eta_{j_\ell}\gamma_\ell$ behaves as a random magnetic field. Hence we can expand around it using the formula for the derivative \eqref{eq:Dh_sigma_j}:
    \begin{align}
    \langle\sigma&\rangle^{(j_\ell)}= \sqrt{2x_{j_\ell}}\eta_{j_\ell} \gamma_\ell \EE_{j_\ell}\dots\EE_{k}f^0_{j_\ell+1}\dots f^0_{k+1}\big(1-\langle\sigma\rangle^2\big) \nonumber\\
    &+\sqrt{2x_{j_\ell}} \eta_{j_\ell}\gamma_\ell\sum_{p=j_\ell+1}^{k+1}\xi_{p-1}\EE_{j_\ell}\dots\EE_{k}f^0_{j_\ell+1}\dots f^0_{k+1}\big(\langle\sigma\rangle^{(p)}-\langle\sigma\rangle^{(p-1)}\big)\langle\sigma\rangle+\eta_{j_\ell}^2O(x_{j_\ell})\,,
    \end{align}where the superscript $^0$ signals that $x_{j_\ell}=0$ in those $f$'s, and for the sake of presentation we have omitted that, inside all brackets $\langle\sigma\rangle^{(\dots)}$, $x_{j\leq j_\ell}$ have been set to $0$ too. The $O(x_{j_\ell})$ comes from the fact that the second derivative w.r.t.\ an external field is uniformly bounded, thanks to \eqref{eq:D_h_f}, and the fact that derivatives of $\tanh$ are uniformly bounded. The $\eta^2_{j_\ell}$ (and its powers) will be later averaged under a Gaussian standard measure yielding a controllable constant. Notice also that $f_j^0=1$ for all $j\leq j_\ell$. Therefore
    \begin{align}
        \EE\prod_{s=1}^{j_\ell} f_s(\langle\sigma \rangle^{(j_\ell)})^2&=2x_{j_\ell} \gamma_\ell^2\big[\EE_{j_\ell}\dots\EE_{k}f^0_{j_\ell+1}\dots f^0_{k+1}\big(1-\langle\sigma\rangle^2\big)\nonumber\\
        &+\sum_{p=j_\ell+1}^{k+1}\xi_{p-1}\EE_{j_\ell}\dots\EE_{k} f^0_{j_\ell+1}\dots f^0_{k+1}\big(\langle\sigma\rangle^{(p)}-\langle\sigma\rangle^{(p-1)}\big)\langle\sigma\rangle\big]^2+O(x_{j_\ell}^{3/2})\,.
    \end{align}
    For all the $p$'s in the summation one has $\xi_{p-1}\geq\xi_{j_\ell}=\zeta_\ell$. Furthermore, as we have already observed for \eqref{eq:Dh_sigma_j}, the last summation $\sum_{p=j_\ell+1}^{k+1}\dots$ contains only non-negative terms. Following these considerations the previous equation turns into
    \begin{align}
        \dots &=2x_{j_\ell}\gamma_\ell^2\big[\EE_{j_\ell}\dots\EE_{k}f^0_{j_\ell+1}\dots f^0_{k+1}\big(1-(1-\zeta_\ell)\langle\sigma\rangle^2+\zeta_\ell\langle\sigma\rangle^{(j_\ell)}\langle\sigma\rangle\big)\big]^2+O(x_{j_\ell}^{3/2})\,,
    \end{align}where we have solved the last telescopic sum. We stress again that $x_{j\leq j_\ell}=0$ in all the above brackets on the r.h.s. This implies that $\langle\sigma\rangle^{(j_\ell)}=0$ again by symmetry of the laws of the $\eta$'s and their tilts $f$. Hence, finally
    \begin{align}
        \EE\prod_{s=1}^{j_\ell} f_s(\langle\sigma \rangle^{(j_\ell)})^2&\geq2x_{j_\ell}\gamma_\ell^2\zeta_\ell^2(1+o(1))\,,
    \end{align}where we just used $\langle\sigma\rangle^2\leq 1$ and $o(1)$ vanishes as $x_{j_\ell}\to0$. This can be used to bound the derivative of the Parisi functional when $x_{j_\ell}\to 0$:
    \begin{align}
        \partial_{x_{j_\ell}}\mathcal{P} \leq\gamma_\ell^2(\zeta_\ell-\xi_{j_\ell-1})x_{j_\ell}\big[
        1-2\gamma_\ell^2 \zeta_\ell^2+o(1)
        \big]\,.
    \end{align}
    The above also yields a bound on the second derivative at $x_{j
    \leq j_\ell}=0$ in direction $x_{j_\ell}$:
    \begin{align}
    \partial^2_{x_{j_\ell}} \mathcal{P}|_{x_{j\leq j_\ell}=0} \leq\gamma_\ell^2(\zeta_\ell-\xi_{j_\ell-1})\big[
    1-2\gamma_\ell^2 \zeta_\ell^2
    \big]\,,
    \end{align}which is negative whenever $\gamma_\ell^2>\frac{1}{2\zeta_\ell^2}$. Therefore the stationary point $x_{j\leq j_\ell}=0$ is unstable precisely in direction $x_{j_\ell}$ proving the claim.

\end{proof}

}

To summarize, we have shown that, under the hypothesis \eqref{eq:low-T-condition} or $h\neq 0$ with positive probability
\begin{align}
    \mathcal{P}(x^*,\xi^*)>\mathcal{P}(\bar x,\bar\xi)\,.
\end{align}This concludes the proof by absurd.

\subsection{Proof of Theorem \ref{thm:plateau}}

In this section we denote by $\mu$   a solution of the variational problem \eqref{infequality} which is  is the $W_1$-limit of a sequence $(\mu_n)$ associated to some $k_n$-stationary pair. We start proving the following

\vskip 0.3cm
\begin{proposition}\label{prop:plateau}
If $\zeta_{\ell}\,\gamma^2_{\ell+1}\,< 1/2$ then 
 \be\label{iceplat}
\Delta_{\ell}(\mu)\geq 2\,\left(\gamma^2_{\ell+1}-\gamma^2_{\ell}\right)\,\limsup_{n\to\infty}\mu_n^{-1}(\zeta^+_{\ell})\, \Big(\int^1_ {\mu^{-1}(\zeta^+_{\ell})}\mu([0,x])\,dx \Big)^2 \,.
\ee
\end{proposition}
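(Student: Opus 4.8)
The plan is to prove the estimate for each approximating $k_n$-stationary pair $(x^{(n)},\xi^{(n)})\in\mathcal M_{k_n}$ (with law $\mu_n$, all $\xi^{(n)}_j$ distinct, by Lemma~\ref{lem:KKT_stationary}) and then let $n\to\infty$. Fix $\ell$ and let $j=j(n)$ be the index with $\xi^{(n)}_j=\zeta_\ell$; then $\tilde\gamma^{(n)}_j=\gamma_\ell$ and $\tilde\gamma^{(n)}_{j+1}=\gamma_{\ell+1}$ (since $\zeta_{\ell+1}\in\xi^{(n)}$ puts $j+1$ in $K_{\ell+1}$), while $\mu_n^{-1}(\zeta_\ell)=x^{(n)}_j$, $\mu_n^{-1}(\zeta_\ell^+)=x^{(n)}_{j+1}$, and $\Delta_\ell(\mu_n)=x^{(n)}_{j+1}-x^{(n)}_j$. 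Writing the stationarity equations of Lemma~\ref{lem:KKT_stationary} at levels $j$ and $j+1$ and using \eqref{eq:bracket-recursion} to express $\langle\sigma\rangle^{(j)}$ through $\langle\sigma\rangle^{(j+1)}$, the gap becomes the outer average of a conditional variance,
\[
\Delta_\ell(\mu_n)=\EE\prod_{p=1}^{j}f_p\,\Big[\EE_j f_{j+1}\big(\langle\sigma\rangle^{(j+1)}\big)^2-\big(\EE_j f_{j+1}\langle\sigma\rangle^{(j+1)}\big)^2\Big].
\]
This is the quantity whose mere positivity drove the proof of Theorem~\ref{thm:minRSB}; the point is now to bound it below quantitatively.

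Freeze the outer variables $\eta_1,\dots,\eta_j,h$, so that $\langle\sigma\rangle^{(j+1)}$ becomes a function of the single variable $w:=w_0+c_{j+1}\eta_{j+1}$, with $c_{j+1}^2=2(\gamma_{\ell+1}^2x^{(n)}_{j+1}-\gamma_\ell^2 x^{(n)}_j)\ge 2(\gamma_{\ell+1}^2-\gamma_\ell^2)x^{(n)}_{j+1}$ since $x^{(n)}_j\le x^{(n)}_{j+1}$. A sub-cascade version of Proposition~\ref{prop:guerra} gives $\langle\sigma\rangle^{(j+1)}=\partial_w\log\tilde Z_{j+1}$; telescoping the derivatives \eqref{eq:f-derivative1}--\eqref{eq:f-derivative2} then shows $\partial_w\langle\sigma\rangle^{(j+1)}\in[0,1]$, and the $f_{j+1}$-tilt of $\eta_{j+1}$ has score $\zeta_\ell c_{j+1}\langle\sigma\rangle^{(j+1)}$. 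Denoting by $\langle\cdot\rangle$ the $f_{j+1}$-tilted $\eta_{j+1}$-average, $V:=\mathrm{Var}(\langle\sigma\rangle^{(j+1)})$ and $\chi:=\langle\partial_w\langle\sigma\rangle^{(j+1)}\rangle\in[0,1]$, Gaussian integration by parts yields
\[
\mathrm{Cov}\big(\eta_{j+1},\langle\sigma\rangle^{(j+1)}\big)=c_{j+1}(\zeta_\ell V+\chi),\qquad \mathrm{Var}(\eta_{j+1})=1+\zeta_\ell^2 c_{j+1}^2 V+\zeta_\ell c_{j+1}^2\chi ;
\]
plugging these into the Cauchy--Schwarz lower bound $\mathrm{Cov}(\eta_{j+1},\langle\sigma\rangle^{(j+1)})^2\le\mathrm{Var}(\eta_{j+1})\,V$ (a reverse Poincaré inequality) and simplifying, the quadratic-in-$V$ terms cancel and one is left with $V\,(1-\zeta_\ell c_{j+1}^2\chi)\ge c_{j+1}^2\chi^2$. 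Here the hypothesis $\zeta_\ell\gamma_{\ell+1}^2<1/2$ is used exactly: together with $\chi\le1$, $x^{(n)}_{j+1}\le1$ and $c_{j+1}^2\le 2\gamma_{\ell+1}^2x^{(n)}_{j+1}$ it forces $\zeta_\ell c_{j+1}^2\chi<1$, whence $V\ge c_{j+1}^2\chi^2\ge 2(\gamma_{\ell+1}^2-\gamma_\ell^2)x^{(n)}_{j+1}\chi^2$. I expect this step — the tilted reverse-Poincaré estimate and the role of $\partial_w\langle\sigma\rangle^{(j+1)}\le1$ in making $1/2$ the right threshold — to be the main technical obstacle.

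Averaging over the outer variables, pulling out the deterministic $c_{j+1}^2$, and using Jensen's inequality ($\prod_{p\le j}f_p$ is a probability weight) together with \eqref{eq:bracket-recursion},
\[
\Delta_\ell(\mu_n)\ \ge\ 2(\gamma_{\ell+1}^2-\gamma_\ell^2)\,x^{(n)}_{j+1}\,\Big(\EE\prod_{p=1}^{j+1}f_p\,\partial_w\langle\sigma\rangle^{(j+1)}\Big)^2 ,
\]
and a telescoping of \eqref{eq:f-derivative1}--\eqref{eq:f-derivative2} with the stationarity equations identifies $\EE\prod_{p\le j+1}f_p\,\partial_w\langle\sigma\rangle^{(j+1)}=\sum_{p=j+2}^{k_n+1}\xi^{(n)}_{p-1}(x^{(n)}_p-x^{(n)}_{p-1})=\int^1_{x^{(n)}_{j+1}}\mu_n([0,x])\,dx$, i.e.\ the per-$n$ analogue of \eqref{iceplat}. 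Finally, $\mu_n\xrightarrow{W_1}\mu^*$ forces $\mu_n^{-1}\to\mu_*^{-1}$ in $L^1[0,1]$; since $\Delta_\ell(\mu_n)=\mu_n^{-1}(\zeta_\ell^+)-\mu_n^{-1}(\zeta_\ell)\le\mu_n^{-1}(\zeta_\ell+\varepsilon)-\mu_n^{-1}(\zeta_\ell)$ for every $\varepsilon>0$ one gets $\limsup_n\Delta_\ell(\mu_n)\le\Delta_\ell(\mu^*)$, and likewise $\limsup_n\mu_n^{-1}(\zeta_\ell^+)\le\mu_*^{-1}(\zeta_\ell^+)$. Rewriting $\int^1_{\mu_n^{-1}(\zeta_\ell^+)}\mu_n([0,x])\,dx=1-\zeta_\ell\mu_n^{-1}(\zeta_\ell^+)-\int^1_{\zeta_\ell}\mu_n^{-1}(p)\,dp$ (whose last term converges) and passing to the limit of the per-$n$ bound along a subsequence realizing $\limsup_n\mu_n^{-1}(\zeta_\ell^+)$, the inequality $\mu_n^{-1}(\zeta_\ell^+)\le\mu_*^{-1}(\zeta_\ell^+)$ allows the square to be replaced by $\big(\int^1_{\mu_*^{-1}(\zeta_\ell^+)}\mu^*([0,x])\,dx\big)^2$, which yields \eqref{iceplat}. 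This last semicontinuity bookkeeping is routine but is exactly what forces the $\limsup$ to appear in the statement.
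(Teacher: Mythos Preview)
Your proof is correct and follows essentially the same route as the paper: write the gap as an outer average of a conditional variance, bound that variance from below by the square of a susceptibility, identify the averaged susceptibility with $\int_{\mu_n^{-1}(\zeta_\ell^+)}^1\mu_n([0,x])\,dx$ via telescoping and stationarity, and then pass to the limit using the semicontinuity of the quantile at $\zeta_\ell$ under $W_1$-convergence. The only cosmetic difference is that what you call a ``tilted reverse--Poincar\'e inequality'' obtained from Cauchy--Schwarz between $\eta_{j+1}$ and $\langle\sigma\rangle^{(j+1)}$ is precisely the Cram\'er--Rao inequality $\mathrm{Var}(h)\ge[\EE h']^2/\EE\phi''$ the paper quotes as a lemma --- your identities for $\mathrm{Cov}$ and $\mathrm{Var}$ reproduce exactly that bound, and the hypothesis $\zeta_\ell\gamma_{\ell+1}^2<1/2$ plays the same role of keeping $\EE\phi''=1-\zeta_\ell c_{j+1}^2\chi$ positive so the inequality can be inverted and then discarded via $\EE\phi''\le 1$.
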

The above inequality still contains a limit. The latter can be removed at the expense of a slightly weaker inequality, that however offers the advantage of involving the limiting distribution $\mu$ only. Indeed one can define  the following smoothing of the quantile function:
\begin{align}
    \nu_{\delta,\ell}^{-1}:=\frac{1}{\delta}\int_{\zeta_\ell}^{\zeta_\ell+\delta}\nu^{-1}(p)dp\,,
\end{align}for any $\delta>0$. As an immediate consequence of its definition we have that
\be\label{bounds} 
\begin{aligned}
&\nu^{-1}(\zeta_{\ell}^+)\leq\nu_{\delta,\ell}^{-1}\leq \nu^{-1}(\zeta_{\ell}+\delta)\\
&\nu^{-1}(\zeta_{\ell}-\delta)\leq\nu_{-\delta,\ell}^{-1}\leq \nu^{-1}(\zeta_{\ell})\,.
\end{aligned}
\ee
Thanks to the above equalities we will be able to exploit the $W_1$ convergence of $\mu_n$ to $\mu$. For instance, we can prove the following

\vspace{3pt}
\begin{lemma}
\be\label{cristoff}
\mu_{n}\xrightarrow{W_1} \mu\,\Rightarrow\, \Delta_{\ell}(\mu)\geq \limsup_{n\to\infty}\Delta_{\ell}(\mu_n).
 \ee
\end{lemma}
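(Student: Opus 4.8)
\emph{Plan.} The idea is to sandwich $\Delta_\ell$ between the smoothed quantiles $\nu_{\delta,\ell}^{-1}$ and $\nu_{-\delta,\ell}^{-1}:=\frac1\delta\int_{\zeta_\ell-\delta}^{\zeta_\ell}\nu^{-1}(p)\,dp$ introduced just above the lemma, and to send the two limits $n\to\infty$ and $\delta\to0^+$ in that order. The three ingredients are: (i) the smoothed differences dominate $\Delta_\ell$; (ii) for fixed $\delta$ they depend continuously on the distribution in $W_1$, being mere averages of the quantile over a fixed window; (iii) as $\delta\to0^+$ they collapse onto the one-sided limits of the quantile.

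First I would record the purely deterministic bound coming from \eqref{bounds} applied to each $\mu_n$ (with $\nu=\mu_n$): the left inequality of the first line and the right inequality of the second line give, for every $\delta>0$ and every $n$,
\[
\Delta_\ell(\mu_n)=\mu_n^{-1}(\zeta_\ell^+)-\mu_n^{-1}(\zeta_\ell)\;\leq\;\mu_{n,\delta,\ell}^{-1}-\mu_{n,-\delta,\ell}^{-1}\,.
\]
Next, with $\delta>0$ held fixed, I would use that averaging over a fixed window is $W_1$-continuous: since $W_1(\mu_n,\mu)=\int_0^1|\mu_n^{-1}(p)-\mu^{-1}(p)|\,dp\to0$,
\[
\big|\mu_{n,\pm\delta,\ell}^{-1}-\mu_{\pm\delta,\ell}^{-1}\big|\;\leq\;\frac1\delta\int_0^1\big|\mu_n^{-1}(p)-\mu^{-1}(p)\big|\,dp\;\xrightarrow[]{n\to\infty}\;0\,,
\]
so that $\limsup_{n\to\infty}\Delta_\ell(\mu_n)\leq\mu_{\delta,\ell}^{-1}-\mu_{-\delta,\ell}^{-1}$ for every $\delta>0$. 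Finally I would let $\delta\to0^+$: since $\mu^{-1}$ is non-decreasing, bounded and left-continuous, $\delta\mapsto\mu_{\delta,\ell}^{-1}$ decreases to $\mu^{-1}(\zeta_\ell^+)$ and $\delta\mapsto\mu_{-\delta,\ell}^{-1}$ increases to $\mu^{-1}(\zeta_\ell)$, whence $\mu_{\delta,\ell}^{-1}-\mu_{-\delta,\ell}^{-1}\downarrow\Delta_\ell(\mu)$ and \eqref{cristoff} follows.

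The argument is essentially bookkeeping; the only point that needs care is the order of the limits — $\delta$ must stay fixed while $n\to\infty$ and only then be sent to $0$, because the bound in the second display degrades like $1/\delta$ and is not uniform in $\delta$. Since $\delta\mapsto\mu_{\delta,\ell}^{-1}-\mu_{-\delta,\ell}^{-1}$ is monotone, the last passage is merely an infimum over $\delta>0$, so no joint uniformity is required. The same three-step scheme — now bounding $\Delta_\ell(\mu_n)$ from below and feeding in the $k_n$-stationarity of the $\mu_n$'s exactly as in Proposition~\ref{prop:plateau} — is what will upgrade Proposition~\ref{prop:plateau} to Theorem~\ref{thm:plateau}.
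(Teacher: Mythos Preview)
Your proof is correct and follows essentially the same approach as the paper: both sandwich $\Delta_\ell(\mu_n)$ by the smoothed quantiles $\mu_{n,\pm\delta,\ell}^{-1}$, pass to the limit in $n$ using that these window averages are $W_1$-continuous (with a $1/\delta$ Lipschitz constant), and then send $\delta\to0^+$ via \eqref{bounds} and left-continuity of the quantile. The only cosmetic difference is that the paper inserts the extra outer bound $\mu^{-1}(\zeta_\ell+\delta)-\mu^{-1}(\zeta_\ell-\delta)$ before taking $\delta\to0$, whereas you take the limit of $\mu_{\delta,\ell}^{-1}-\mu_{-\delta,\ell}^{-1}$ directly; both converge to $\Delta_\ell(\mu)$.
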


\begin{proof}
For any non empty interval $A\subseteq[0,1]$ one has 
\be\label{W1integral}
\mu_{n}\xrightarrow{W_1}\mu\,\Rightarrow\,\lim_{n\to\infty}\Big|\int_A \left(\mu^{-1}(p)-\mu_n^{-1}(p)\right)\, dp\Big|=0.
\ee
The above implies
\be\label{integralaverages}
\mu_{\delta,\ell}^{-1}=(\mu_n^{-1})_{\delta,\ell}+o_n(1)\,,\quad\mu_{-\delta,\ell}^{-1}=(\mu_n^{-1})_{-\delta,\ell}+o_n(1)
\ee for any positive $\delta$.
Therefore, using \eqref{bounds}, we get
\be
\mu^{-1}(\zeta_{\ell}+\delta)-\mu^{-1}(\zeta_{\ell}-\delta)\geq 
 \mu_{\delta,\ell}^{-1}-
\mu_{-\delta,\ell}^{-1}=
(\mu_n^{-1})_{\delta,\ell}-
(\mu_n^{-1})_{-\delta,\ell}+o_n(1)\geq\mu^{-1}_n(\zeta^+_{\ell})-\mu^{-1}_n(\zeta_{\ell})+o_n(1).
\ee
Take first $n\to\infty$ and then $\delta\to0^+$ one get \eqref{cristoff}.
\end{proof}

    \label{rem:limiting_inequality}
    From \eqref{bounds} we observe that
    \begin{align}
        \mu^{-1}(\zeta_\ell+\delta)\geq \mu^{-1}_{\delta,\ell}=(\mu^{-1}_n)_{\delta,\ell}+o_n(1)\geq \mu^{-1}_n(\zeta_\ell^+)+o_n(1)\,.
    \end{align}Hence, by sending $n\to\infty$ and then $\delta\to 0$ we obtain
    \begin{align}\label{eq:ineq_limsup_quantile}
        \mu^{-1}(\zeta^+_\ell)\geq \limsup_{n\to\infty}\mu^{-1}_n(\zeta_\ell^+)\,.
    \end{align}    
    On the other hand, if one is willing to sacrifice the right limit,
    \begin{align}
        \limsup_{n\to\infty}\mu^{-1}_n (\zeta_\ell^+)\geq \limsup_{n\to\infty}(\mu^{-1}_n)_{-\delta,\ell}=\mu^{-1}_{-\delta,\ell}
    \end{align}for any $\delta>0$. By sending $\delta\to0$, and left continuity of the quantile, we get
    \begin{align}\limsup_{n\to\infty}\mu^{-1}_n (\zeta_\ell^+)\geq \mu^{-1}(\zeta_\ell).
    \end{align} Used in \eqref{iceplat} the above leads to \eqref{iceplatthem}. Let's go back to the proof of Proposition \eqref{prop:plateau}. We will prove at first the following finite $k$ version of it:

\vspace{3pt}
\begin{proposition}\label{superestimate}
Given $k\in \mathbb{N}$, let $\mu_k$ be a measure associated to a $k$-stationary pair. If $\,\zeta_{\ell}\,\gamma^2_{\ell+1}\,< 1/2$ then 
\be\label{finitekappa}
\mu_k^{-1}(\zeta_{\ell}^+)-\mu_k^{-1}(\zeta_{\ell}) \geq 2\,\mu_k^{-1}(\zeta^+_{\ell})\big(\gamma^2_{\ell+1}-\gamma^2_{\ell}\big)\,\left(\int^1_{\mu^{-1}_k(\zeta_{\ell}^+)} \mu_k([0,x])\,dx \right)^2 \,.
\ee
\end{proposition}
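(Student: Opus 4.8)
The plan is to turn $\Delta_\ell(\mu_k)=\mu_k^{-1}(\zeta_\ell^+)-\mu_k^{-1}(\zeta_\ell)$ into a conditional variance by means of stationarity, bound that variance from below by a single Cauchy--Schwarz step, and then recover the square in \eqref{finitekappa} by convexity; the hypothesis $\zeta_\ell\gamma_{\ell+1}^2<1/2$ is used only at the end, to keep a denominator positive. Since $\xi\supseteq\zeta$ there is an index $j_\ell$ with $\xi_{j_\ell}=\zeta_\ell$, so that $\tilde\gamma_{j_\ell}=\gamma_\ell<\tilde\gamma_{j_\ell+1}=\gamma_{\ell+1}$; assuming as in Lemma~\ref{lem:KKT_stationary} that the $\xi_j$'s are distinct, \eqref{map} together with left-continuity of the quantile gives $\mu_k^{-1}(\zeta_\ell)=x_{j_\ell}$ and $\mu_k^{-1}(\zeta_\ell^+)=x_{j_\ell+1}$, hence $\Delta_\ell(\mu_k)=x_{j_\ell+1}-x_{j_\ell}$. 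Combining the stationarity equations of Lemma~\ref{lem:KKT_stationary} at $j=j_\ell$ and $j=j_\ell+1$ with the recursion \eqref{eq:bracket-recursion}, exactly as in the derivation of \eqref{eq:variance_bracket}, I would first establish
\[
\Delta_\ell(\mu_k)=\EE\prod_{p=1}^{j_\ell}f_p\,\mathrm{Var}_{j_\ell}\!\big(\langle\sigma\rangle^{(j_\ell+1)}\big),
\]
where $\mathrm{Var}_{j_\ell}(\cdot)$ is the variance in the probability measure $\EE_{j_\ell}f_{j_\ell+1}(\cdot)$ (the Gaussian average over $\eta_{j_\ell+1}$ tilted by $f_{j_\ell+1}$, with $\eta_1,\dots,\eta_{j_\ell}$ and $h$ frozen).

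The core step is a lower bound on this conditional variance. The crucial remark is that $\eta_{j_\ell+1}$ and $h$ enter $\langle\sigma\rangle^{(j_\ell+1)}$ (and $f_{j_\ell+2},\dots,f_{k+1}$) only through the combination $\eta_{j_\ell+1}a+h$ with $a:=\sqrt{2(\gamma_{\ell+1}^2x_{j_\ell+1}-\gamma_\ell^2x_{j_\ell})}$, so that $\partial_{\eta_{j_\ell+1}}\langle\sigma\rangle^{(j_\ell+1)}=a\,\partial_h\langle\sigma\rangle^{(j_\ell+1)}$, while $\partial_{\eta_{j_\ell+1}}f_{j_\ell+1}=a\,\zeta_\ell f_{j_\ell+1}\langle\sigma\rangle^{(j_\ell+1)}$ by \eqref{eq:f-derivative1}. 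I would then integrate by parts in the Gaussian variable $\eta_{j_\ell+1}$ and simplify using \eqref{eq:bracket-recursion} and the $h$-differentiation rule $\partial_hf_j=\xi_{j-1}(\langle\sigma\rangle^{(j)}-\langle\sigma\rangle^{(j-1)})f_j$ already established in the proof of Theorem~\ref{thm:minRSB}; a direct computation should then collapse everything to the two identities
\[
\mathrm{Cov}_{j_\ell}\!\big(\langle\sigma\rangle^{(j_\ell+1)},\eta_{j_\ell+1}\big)=a\,\partial_h\langle\sigma\rangle^{(j_\ell)},\qquad
\mathrm{Var}_{j_\ell}(\eta_{j_\ell+1})=1+a^2\zeta_\ell\,\partial_h\langle\sigma\rangle^{(j_\ell)},
\]
where it is used that $\langle\sigma\rangle^{(j_\ell)}$, and hence $\partial_h\langle\sigma\rangle^{(j_\ell)}\ge 0$, does not depend on $\eta_{j_\ell+1}$. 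Applying Cauchy--Schwarz, $\mathrm{Var}_{j_\ell}(G)\ge\mathrm{Cov}_{j_\ell}(G,\eta_{j_\ell+1})^2/\mathrm{Var}_{j_\ell}(\eta_{j_\ell+1})$ with $G=\langle\sigma\rangle^{(j_\ell+1)}$, then yields, pointwise in the frozen variables,
\[
\mathrm{Var}_{j_\ell}\!\big(\langle\sigma\rangle^{(j_\ell+1)}\big)\ \ge\ \frac{a^2\big(\partial_h\langle\sigma\rangle^{(j_\ell)}\big)^2}{1+a^2\zeta_\ell\,\partial_h\langle\sigma\rangle^{(j_\ell)}}.
\]

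To conclude, I would average this pointwise bound against $\EE\prod_{p\le j_\ell}f_p$. Since $t\mapsto t^2/(1+a^2\zeta_\ell t)$ is convex on $[0,\infty)$, Jensen's inequality gives $\Delta_\ell(\mu_k)\ge a^2T^2/(1+a^2\zeta_\ell T)$ with $T:=\EE\prod_{p\le j_\ell}f_p\,\partial_h\langle\sigma\rangle^{(j_\ell)}$. Next I would identify $T$: inserting the expression for $\partial_h\langle\sigma\rangle^{(j_\ell)}$ from the proof of Theorem~\ref{thm:minRSB} and running the same telescoping/tower manipulation, now with the stationarity equations $x_j=\EE\prod_{p\le j}f_p(\langle\sigma\rangle^{(j)})^2$ for $j\le k$ and the boundary values $\xi_k=x_{k+1}=1$ (the level $j=k+1$, where the stationarity equation fails, produces precisely the boundary term that rebuilds the tail integral), one should find
\[
T=\sum_{p=j_\ell+1}^{k+1}\xi_{p-1}(x_p-x_{p-1})=\zeta_\ell\,\Delta_\ell(\mu_k)+\int_{x_{j_\ell+1}}^{1}\mu_k([0,x])\,dx.
\]
Writing $I$ for that integral and substituting $T=\zeta_\ell\Delta_\ell(\mu_k)+I$ into the Jensen bound, the quadratic terms cancel and one is left with $\Delta_\ell(\mu_k)(1-a^2\zeta_\ell I)\ge a^2I^2$. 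Here the hypothesis enters: $a^2\le 2\gamma_{\ell+1}^2x_{j_\ell+1}\le 2\gamma_{\ell+1}^2$ gives $a^2\zeta_\ell<1$, and $I\le 1-x_{j_\ell+1}\le 1$, so $1-a^2\zeta_\ell I>0$; dropping this factor and using $a^2\ge 2(\gamma_{\ell+1}^2-\gamma_\ell^2)x_{j_\ell+1}$ (which follows from $x_{j_\ell}\le x_{j_\ell+1}$) gives $\Delta_\ell(\mu_k)\ge a^2I^2\ge 2(\gamma_{\ell+1}^2-\gamma_\ell^2)x_{j_\ell+1}I^2$, which is \eqref{finitekappa} since $x_{j_\ell+1}=\mu_k^{-1}(\zeta_\ell^+)$.

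I expect the main obstacle to be the bookkeeping behind the two collapsed identities and the formula for $T$: both require repeated and careful use of the Ruelle-cascade derivative formulas \eqref{eq:f-derivative1}--\eqref{eq:f-derivative2}, of the tower/martingale property of the brackets $\langle\cdot\rangle^{(j)}$, and of exactly which $\eta$'s each $f_p$ and each $\langle\sigma\rangle^{(j)}$ depends on --- in particular the failure of the stationarity equation at the top level $j=k+1$, which is what supplies the boundary contribution that makes the full tail integral appear in the formula for $T$. The remaining ingredients (one Cauchy--Schwarz, one Jensen via convexity, and the final quadratic manipulation) are routine.
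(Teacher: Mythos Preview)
Your proposal is correct and follows the same overall architecture as the paper's proof: rewrite $\Delta_\ell(\mu_k)$ as the outer average of a conditional variance, bound that variance from below pointwise by a Cauchy--Schwarz type inequality in the variable $\eta_{j_\ell+1}$, apply Jensen, and identify the resulting expectation as the tail integral $\int_{x_{j_\ell+1}}^{1}\mu_k([0,x])\,dx$.

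The only genuine difference is the choice of test function in the Cauchy--Schwarz step. The paper invokes the Cram\'er--Rao inequality $\mathrm{Var}(h)\ge[\EE_\eta h']^2/\EE_\eta\phi''$, which is Cauchy--Schwarz against the score $\phi'(\eta)=\eta-\zeta_\ell a\,\langle\sigma\rangle^{(j_\ell+1)}$; bounding $\EE_\eta\phi''\le 1$ and then applying Jensen produces directly $\Delta\ge a^2 I^2$ with $I=\sum_{p\ge j_\ell+2}\xi_{p-1}(x_p-x_{p-1})$. You instead take Cauchy--Schwarz against $\eta$ itself, which yields the two clean identities $\mathrm{Cov}=a\,\partial_h\langle\sigma\rangle^{(j_\ell)}$ and $\mathrm{Var}(\eta)=1+a^2\zeta_\ell\,\partial_h\langle\sigma\rangle^{(j_\ell)}$; after Jensen this produces $T=\sum_{p\ge j_\ell+1}\xi_{p-1}(x_p-x_{p-1})=\zeta_\ell\Delta+I$ rather than $I$, and the extra $\zeta_\ell\Delta$ is then absorbed in the quadratic manipulation $\Delta(1-a^2\zeta_\ell I)\ge a^2I^2$. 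Both routes land on exactly the same final inequality (your intermediate bound is in fact marginally sharper before the factor $1-a^2\zeta_\ell I<1$ is dropped), and both use the hypothesis $\zeta_\ell\gamma_{\ell+1}^2<1/2$ at the same place --- to guarantee a positive denominator. The bookkeeping you flag (dependence structure of the $f_p$'s and $\langle\sigma\rangle^{(j)}$'s, stationarity at $j\le k$ versus the boundary $x_{k+1}=1$) is precisely what the paper carries out in Lemma~\ref{estimatesstaircase}.
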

\noindent Assume for the moment that the above proposition holds, then  \eqref{iceplat} also holds. In fact, it suffices to use \eqref{finitekappa} and \eqref{cristoff} together with the continuity of the integral w.r.t. its integration extremes and \eqref{eq:ineq_limsup_quantile}.

Let us start recalling a classical result. 

\vspace{3pt}

\begin{lemma}(Cramer-Rao bound)\label{Cramerrao}
Let $\phi\in C^2(\mathbb{R},\mathbb{R})$ and consider the probability density $p(x) \propto e^{-\phi(x)}$ then for $h$ locally Lipscthiz one has

\be\label{eq:cramerrao}
\mathrm{Var}(h)\geq \left[\mathbb{E} (h')\right]^2 \left[\mathbb{E} \left(\phi''\right) \right]^{-1}.
\ee
\end{lemma}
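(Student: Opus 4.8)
The statement is the one--dimensional Cram\'er--Rao inequality, and the plan is to derive it from the Cauchy--Schwarz inequality applied to $h$ and the score function $\phi'$ of the density $p(x)\propto e^{-\phi(x)}$. Throughout, I will assume (as holds automatically in the applications) that $p$ decays quickly enough at the edges of its support that the boundary terms in all the integrations by parts below vanish.

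First I would record three identities obtained by integration by parts. Integrating $p'(x)=-\phi'(x)p(x)$ over the support gives $\mathbb{E}[\phi']=0$. Writing $\phi'(x)^2 e^{-\phi(x)}=-\phi'(x)\tfrac{d}{dx}e^{-\phi(x)}$ and integrating by parts gives the Fisher--information identity $\mathbb{E}[(\phi')^2]=\mathbb{E}[\phi'']$, hence $\mathrm{Var}(\phi')=\mathbb{E}[\phi'']$. Since $h$ is locally Lipschitz it is absolutely continuous with $h(b)-h(a)=\int_a^b h'$, so the same manipulation applied to $h(x)\phi'(x)e^{-\phi(x)}=-h(x)\tfrac{d}{dx}e^{-\phi(x)}$ gives $\mathbb{E}[h\phi']=\mathbb{E}[h']$.

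Combining the first and third identities, $\mathrm{Cov}(h,-\phi')=-\mathbb{E}[h\phi']+\mathbb{E}[h]\mathbb{E}[\phi']=-\mathbb{E}[h']$, and Cauchy--Schwarz in $L^2(p)$ yields
\begin{align}
\big(\mathbb{E}[h']\big)^2=\mathrm{Cov}(h,-\phi')^2\le \mathrm{Var}(h)\,\mathrm{Var}(\phi')=\mathrm{Var}(h)\,\mathbb{E}[\phi'']\,,
\end{align}
so dividing by $\mathbb{E}[\phi'']$ gives \eqref{eq:cramerrao}. The cases $\mathrm{Var}(h)=+\infty$ or $\mathbb{E}[\phi'']=+\infty$ are trivial, and $\mathbb{E}[\phi'']>0$ for any non--degenerate density, so no division issue arises.

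The only genuinely delicate point is justifying the vanishing of the boundary terms; in the regime where this lemma is invoked $\phi$ is a strictly convex quadratic plus a bounded $\log\cosh$ perturbation, so $e^{-\phi}$, $\phi'e^{-\phi}$ and $h\,e^{-\phi}$ all have Gaussian tails and the boundary contributions vanish. I would state this decay as a hypothesis rather than carry out the estimate, since the lemma is used only in that setting, and that is the step that would need the most care for a fully general statement.
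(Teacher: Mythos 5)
Your proof is correct. The paper itself does not prove this lemma --- it simply defers to a citation --- so there is no in-paper argument to compare against; your derivation is the standard one: the identities $\mathbb{E}[\phi']=0$, $\mathbb{E}[(\phi')^2]=\mathbb{E}[\phi'']$ and $\mathbb{E}[h\phi']=\mathbb{E}[h']$ obtained by integration by parts, followed by Cauchy--Schwarz applied to $h$ and the score $\phi'$. Your remark on the boundary terms is the right caveat, and indeed in the setting where the lemma is invoked (see \eqref{prob density}, where $\phi(\eta)=\tfrac12\eta^2-\xi_{j_\ell}\log\tilde Z_{j_\ell+1}$ with the second term Lipschitz and $h$ bounded) the Gaussian decay makes all boundary contributions vanish, so the argument closes.
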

\noindent For a proof see \cite{Cramer_Rao_Bound}.

In order to use the above, we pick a $k$-stationary  pair $(x,\xi)$ and assume without loss that it is minimal. Fix some $\ell< r$ and let $j_{\ell}\in\{1,\ldots, k\}$ be the unique index such that $\tilde\gamma_{j_\ell}=\gamma_\ell\,<\, \tilde\gamma_{j_{\ell}+1}=\gamma_{\ell+1}$ or equivalently $\xi_{j_\ell}=\zeta_{\ell}$. By Lemma \ref{lem:KKT_stationary} one has
\be\label{difference}
x_{j_{\ell+1}}-x_{j_{\ell}}= \E \prod_{p=1}^{j_{\ell}} f_p \,\left[\mathrm{Var}_{j_{\ell}+1}\left(\langle\sigma \rangle^{(j_\ell+1)} \right)\right]
\ee
where, conditionally on $(\eta_j)_{j\leq j_{\ell}}$, the quantity $\mathrm{Var}_{j_{\ell}+1}$ is the variance w.r.t. the 
average $\E_{j_\ell} f_{j_\ell+1}$, that is associated with the probability density
\be\label{prob density}
p(\eta_{j_{\ell}+1})\propto \exp\left[-\frac{1}{2}\eta^2_{j_\ell+1}+\xi_{j_\ell} \log \tilde{Z}_{j_\ell+1} \right].
\ee
The idea is now to use Lemma~\ref{Cramerrao} to bound $\mathrm{Var}_{j_{\ell}+1}\left(\langle\sigma \rangle^{(j_\ell+1)}\right)$ conditionally on $(\eta_j)_{j\leq j_{\ell}}$. In order to lighten the notation we set $j_{\ell}+1\equiv i$ and $\eta_{j_\ell+1}\equiv \eta$. Therefore by \eqref{eq:cramerrao} we have
that
\be\label{eq:137}
\mathrm{Var}_i(\langle \sigma\rangle^{(i)}) \geq \left[\mathbb{E}_{\eta} \left(\frac{\partial h}{\partial \eta}\right)\right]^2 \left[\mathbb{E}_{\eta} \left(\frac{\partial^2 \phi}{\partial \eta^2}\right) \right]^{-1}
\ee
where 
\be\label{defphiacca}
\begin{aligned}
\eta&\mapsto \phi(\eta)= \frac{1}{2}\eta^2-\xi_{i-1} \log \tilde{Z}_{i}(\eta)\\
\eta &\mapsto h(\eta)=\langle \sigma \rangle^{(i)}  \end{aligned}
\ee
and $\E_{\eta}$ is the average w.r.t. the probability density \eqref{prob density}. Recall that since $\xi_k=\xi_{k+1}=1$ we have the following simplification:
\begin{align}
    \langle\sigma\rangle^{(i)}=\EE_{\geq i}\prod_{p=i+1}^kf_p\tanh\big( \sum_{j=1}^k\eta_j\sqrt{2(\tilde\gamma_j^2x_j-\tilde\gamma_{j-1}^2x_{j-1})}
    \big)
\end{align}namely $f_{k+1}$ can be explicitly integrated away.

\vskip 0.3cm

\begin{lemma}\label{estimatesstaircase}
Let $h(\eta)$ and $\phi(\eta)$ the random functions defined in \eqref{defphiacca}. If $\zeta_{\ell}\,\gamma^2_{\ell+1}\,< 1/2$ then  
\be\label{hessianestima}
0<\mathbb{E}_{\eta} \left(\frac{\partial^2}{\partial \eta^2}\phi\right) \leq 1\,.
\ee
Moreover,
\be\label{derivativeacca}
\E_{\eta}\left(\frac{\partial h}{\partial \eta} \right)= \sqrt{2(\gamma^2_{\ell+1}x_i-\gamma^2_{\ell}x_{i-1})}\,\E_{\eta}\left(\,\sum_{p=i+1}^{k+1} \xi_{p-1} (a_p-a_{p-1})\right)
\ee
where, for any $p\leq k$
\be\label{eq:a_p_def}
a_p\,:=\,\E^{<p}_{\geq i}\prod_{s=i+1}^p f_s \left(\langle \sigma \rangle^{(p)}\right)^2\,,\text{ and }a_{k+1}=1\,.
\ee
\end{lemma}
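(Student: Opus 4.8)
The plan is to reduce both assertions to one Guerra-type recursion computation. Write $i:=j_\ell+1$ and $c_i:=\sqrt{2(\tilde\gamma_i^2x_i-\tilde\gamma_{i-1}^2x_{i-1})}=\sqrt{2(\gamma_{\ell+1}^2x_i-\gamma_\ell^2x_{i-1})}$, and recall $\xi_{i-1}=\xi_{j_\ell}=\zeta_\ell$ and $i\le k$ (since $\ell<r$). Because the density \eqref{prob density} satisfies $p(\eta)\,d\eta=f_i\cdot(\text{standard Gaussian in }\eta_i)$, one has $\E_\eta[\,\cdot\,]=\EE_{i-1}[f_i\,\cdot\,]$, conditionally on $\eta_1,\dots,\eta_{i-1}$. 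First I would establish $\partial_\eta\log\tilde Z_i=c_i\langle\sigma\rangle^{(i)}$: differentiating the recursion \eqref{recurtilde} along the chain $\tilde Z_i,\tilde Z_{i+1},\dots,\tilde Z_{k+1}$ exactly as in Proposition~\ref{prop:guerra} gives $\partial_\eta\log\tilde Z_i=\EE_i\cdots\EE_k[f_{i+1}\cdots f_{k+1}\,\partial_\eta\log\tilde Z_{k+1}]$ and $\partial_\eta\log\tilde Z_{k+1}=c_i\tanh z$. Since $c_i$ is $\eta$-independent, this gives $\partial^2_\eta\phi=1-\xi_{i-1}c_i\,\partial_\eta\langle\sigma\rangle^{(i)}$, so both claims reduce to understanding $\partial_\eta\langle\sigma\rangle^{(i)}$.

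Next I would compute $\partial_\eta\langle\sigma\rangle^{(i)}$ from $\langle\sigma\rangle^{(i)}=\EE_i\cdots\EE_k f_{i+1}\cdots f_{k+1}\tanh z$. Using $\partial_{\eta_i}f_p=c_i f_p\xi_{p-1}(\langle\sigma\rangle^{(p)}-\langle\sigma\rangle^{(p-1)})$ for $p>i$ from \eqref{eq:f-derivative2} and $\partial_{\eta_i}\tanh z=c_i(1-\tanh^2z)$,
\[
\partial_\eta\langle\sigma\rangle^{(i)}=c_i\sum_{p=i+1}^{k+1}\xi_{p-1}\,\EE_i\cdots\EE_k\big[f_{i+1}\cdots f_{k+1}(\langle\sigma\rangle^{(p)}-\langle\sigma\rangle^{(p-1)})\tanh z\big]+c_i\big(1-\langle\tanh^2z\rangle^{(i)}\big).
\]
The crucial step is the identity $\EE_i\cdots\EE_k[f_{i+1}\cdots f_{k+1}\langle\sigma\rangle^{(q)}\tanh z]=a_q$ for $i\le q\le k$ (and $=\langle\tanh^2z\rangle^{(i)}$ for $q=k+1$), with $a_q$ as in \eqref{eq:a_p_def}: since $\langle\sigma\rangle^{(q)}$ depends only on $\eta_1,\dots,\eta_q$ it factors out of $\EE_q\cdots\EE_k$, and $\EE_q\cdots\EE_k[f_{q+1}\cdots f_{k+1}\tanh z]=\langle\sigma\rangle^{(q)}$, so the bracket collapses to $\EE_i\cdots\EE_{q-1}[f_{i+1}\cdots f_q(\langle\sigma\rangle^{(q)})^2]=a_q$; the case $q=k+1$ is immediate from $\langle\sigma\rangle^{(k+1)}=\tanh z$. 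Feeding this back and using $\xi_k=1$, the level-$(k+1)$ contributions reorganize so that $\partial_\eta\langle\sigma\rangle^{(i)}=c_i\sum_{p=i+1}^{k+1}\xi_{p-1}(a_p-a_{p-1})$ with $a_{k+1}:=1$; applying $\E_\eta=\EE_{i-1}[f_i\,\cdot\,]$ is precisely \eqref{derivativeacca}.

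Finally, for \eqref{hessianestima} I would substitute this into $\E_\eta[\partial^2_\eta\phi]=1-\xi_{i-1}c_i^2\,\E_\eta\big[\sum_{p=i+1}^{k+1}\xi_{p-1}(a_p-a_{p-1})\big]$. An Abel summation, using $a_{k+1}=1$, $\xi_k=1$ and $a_i=(\langle\sigma\rangle^{(i)})^2$, rewrites the inner sum as $1-\xi_i a_i-\sum_{p=i+1}^{k}(\xi_p-\xi_{p-1})a_p$, i.e.\ $1$ minus a convex combination (the weights $\xi_i,\xi_{i+1}-\xi_i,\dots,\xi_k-\xi_{k-1}$ sum to $\xi_k=1$) of the numbers $a_i,\dots,a_k\in[0,1]$; hence it lies in $[0,1]$ pointwise, and so does its $\E_\eta$-average. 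On the other hand $\xi_{i-1}c_i^2=2\zeta_\ell(\gamma_{\ell+1}^2x_i-\gamma_\ell^2x_{i-1})\le2\zeta_\ell\gamma_{\ell+1}^2x_i\le2\zeta_\ell\gamma_{\ell+1}^2<1$ under the hypothesis $\zeta_\ell\gamma_{\ell+1}^2<1/2$, so the subtracted quantity lies in $[0,\xi_{i-1}c_i^2)\subset[0,1)$ and therefore $0<\E_\eta[\partial^2_\eta\phi]\le1$. Most of the work is routine bookkeeping around the recursion; the step requiring genuine care, and the one I expect to be the main obstacle in writing it out cleanly, is the bracket identity producing the $a_q$'s: it rests on $\langle\sigma\rangle^{(q)}$ depending only on $\eta_1,\dots,\eta_q$, so that the higher levels can be integrated out first and the martingale relation \eqref{eq:bracket-recursion} can be invoked.
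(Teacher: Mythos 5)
Your proposal is correct and follows essentially the same route as the paper: the Guerra derivative formulas give $\partial_\eta^2\phi=1-\xi_{i-1}c_i\,\partial_\eta\langle\sigma\rangle^{(i)}$, the bracket identity $\EE_i\cdots\EE_k[f_{i+1}\cdots f_{k+1}\langle\sigma\rangle^{(q)}\tanh z]=a_q$ is exactly the paper's key step, and the final bound $\xi_{i-1}c_i^2\le 2\zeta_\ell\gamma_{\ell+1}^2<1$ is identical. The only (harmless) variation is that you bound the sum $\sum_p\xi_{p-1}(a_p-a_{p-1})\in[0,1]$ by Abel summation and convexity of the weights, whereas the paper uses the monotonicity $a_{p-1}\le a_p$ directly.
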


\begin{proof}
Let us start recalling that $\tilde{Z}_{i}(\eta)$ is obtained recursively starting from $\tilde{Z}_{k+1}=2\cosh(z)$ where $z=\sum_{j\leq k+1} \eta_{j}c_j$ and  $c_j={\sqrt{2(\tilde{\gamma}^2_{j} x_{j}-\tilde{\gamma}^2_{j-1} x_{j-1})}}$.

Therefore

\be
\begin{aligned}
\frac{\partial}{\partial \eta}{\phi}(\eta)&= \eta-\xi_{i-1} \frac{\partial}{\partial \eta}\log \tilde{Z}_{i}(\eta)\\
&=\eta-\xi_{i-1} \E_{\geq i}\prod_{p=i+1}^{k+1} f_p  \frac{\partial}{\partial \eta}\log \tilde{Z}_{k+1}(\eta)\\
&=\eta-\xi_{i-1} c_i \langle \sigma \rangle^{(i)}=  \eta-\xi_{i-1} c_i h(\eta)
\end{aligned}
\ee
and 
\be
\frac{\partial^2}{\partial \eta^2}{\phi}(\eta)=1-\xi_{i-1} c_i \frac{\partial}{\partial \eta}h(\eta).
\ee
Now $\frac{\partial}{\partial \eta}h(\eta)$ can be computed as follows
\be\label{AplusB}
\begin{aligned}
\frac{\partial}{\partial \eta}h(\eta)&= \E_{\geq i} \frac{\partial}{\partial \eta}\Big[\prod_{p=i+1}^{k}f_p \tanh\big(\sum_{j=1}^k\eta_jc_j\big)\Big]= \E_{\geq i} \sum_{p=i+1}^{k} \frac{\partial}{\partial \eta} [f_p] \prod_{s=i+1,s\neq p}^{k}f_s \tanh \big(\sum_{j=1}^k\eta_jc_j\big)\\
&+ \E_{\geq i}  \prod_{p=i+1}^{k}f_p \frac{\partial}{\partial \eta}[\tanh\big(\sum_{j=1}^k\eta_jc_j\big)]=A+B
\end{aligned}
\ee 
where $A$ and $B$ are defined as the first and second term respectively on the left hand side of the above.

Recall that $\eta\equiv\eta_i$ then one can use relation \eqref{eq:f-derivative2} to obtain
\be\label{stepover}
A=c_i \,\sum_{p=i+1}^{k} \xi_{p-1} \, \E_{\geq i} \,\prod_{s=i+1}^{k} f_s \,\tanh \big(\sum_{j=1}^k\eta_jc_j\big)\,\left(\langle \sigma \rangle^{(p)}-\langle \sigma \rangle^{(p-1)}\right).
\ee

Keeping in mind that  $\langle \sigma \rangle^{(p)}=\E_{\geq p}\prod_{s=p+1}^{k+1} f_s \tanh (z)=\E_{\geq p}\prod_{s=p+1}^{k} f_s \tanh (\sum_{j=1}^k\eta_jc_j)$ depends on $(\eta_{j})_{j\leq p}$ and $f_s$ depends on $(\eta_{j})_{j\leq s}$ one can write for any $p$-term in the previous line
\be\label{stepunder}
\begin{aligned}
&\E_{\geq i} \,\prod_{s=i+1}^{k} f_s \,\tanh \big(\sum_{j=1}^k\eta_jc_j\big)\,\left(\langle \sigma \rangle^{(p)}-\langle \sigma \rangle^{(p-1)}\right)=
\E^{<p}_{\geq i} \,\prod_{s=i+1}^{p} f_s\, 
(\langle \sigma \rangle^{(p)})^2-\E^{<p-1}_{\geq i} \,\prod_{s=i+1}^{p-1} f_s 
(\langle \sigma \rangle^{(p-1)})^2.
\end{aligned}
\ee
Using \eqref{stepunder} in \eqref{stepover} one gets
\be
A=c_i \,\sum_{p=i+1}^{k} \xi_{p-1} (a_p-a_{p-1})
\ee with $a_p$ as in \eqref{eq:a_p_def}.

The second term in \eqref{AplusB} is 
\be
B=c_i(1-\E_{\geq i}\prod_{s=i+1}^{k} f_s \tanh^2\big(\sum_{j=1}^k\eta_jc_j\big)=c_i(1-a_{k}).
\ee
Therefore from \eqref{AplusB} one gets
\be
\frac{\partial}{\partial \eta}h(\eta)=A+B=c_i \,\sum_{p=i+1}^{k+1} \xi_{p-1} (a_p-a_{p-1}) 
\ee
where we used $a_{k+1}=1, \,\xi_k=1$. Taking the expectation on both sides we prove \eqref{derivativeacca}.

Therefore, since $1\geq a_p\geq a_{p-1}\geq 0$ for all $p$'s, one has
\begin{align}
    0\leq \frac{\partial}{\partial \eta}h(\eta)\leq c_i\sum_{p=i+1}^{k+1}(a_{p}-a_{p-1})=c_i(1-a_i)\leq c_i\,.
\end{align}
and hence
\be\label{bound derivatives}
1- \xi_{i-1}c^2_i\leq \frac{\partial^2}{\partial \eta^2}{\phi}(\eta)\leq 1
\ee
which is uniform in the length $k$ of the stationary pair $(x,\xi)$ . In particular \eqref{bound derivatives} implies that
\be\label{porcazozza}
1-\xi_{i-1}c^2_i\leq
\mathbb{E}_{\eta} \left(\frac{\partial^2}{\partial \eta^2}\phi\right)\leq  1.
\ee

Keeping in mind that $c_i^2=2(\gamma^2_{\ell+1}x_{j_{\ell+1}}-\gamma^2_{\ell}x_{j_{\ell}})$, we have that $c_i^2\leq 2\gamma^2_{\ell+1}$ and  $1- \xi_{i-1}c^2_i\geq 1-2\xi_{i-1}\gamma^2_{\ell+1}$. Therefore the condition $1-2\xi_{i-1}\gamma^2_{\ell+1}>0$
implies that
\be\label{porcamiseria}
0<\mathbb{E}_{\eta} \left(\frac{\partial^2}{\partial \eta^2}\phi\right) \leq 1\,.
\ee

\end{proof}

\begin{proof}[Proof of Proposition~\ref{superestimate}]
Thanks to the previous Lemma if $1-2\zeta_{\ell}\gamma^2_{\ell+1}>0$  one can use \eqref{hessianestima}, equation \eqref{derivativeacca} and Jensen inequality  in the relations \eqref{difference} {and \eqref{eq:137}} obtaining
\be
x_{i}-x_{i-1}\geq 2\left(\gamma^2_{\ell+1}x_i-\gamma^2_{\ell}x_{i-1}\right)\,\Big[\E\,\prod_{s=1}^{i-1} f_s \, \E_{\eta}\Big(\,\sum_{p=i+1}^{k+1} \xi_{p-1} (a_p-a_{p-1}) \Big ) \Big]^2.
\ee
Now by stationary conditions in Lemma \ref{lem:KKT_stationary}  one has for all $p$'s

\be\
\E\,\prod_{s=1}^{i-1} f_s \, \E_{\eta} \,a_p=\E \,\prod_{s=1}^p f_s(\langle\sigma\rangle^{(p)})^2=x_p\ .
\ee
Keeping in mind that $x_{i}\equiv x_{j_\ell+1}=\mu_k^{-1}(\zeta_{\ell}^+)$ and $x_{i-1}=\mu_k^{-1}(\zeta_{\ell})$ one has  
\be
\mu_k^{-1}(\zeta^+_{\ell})-\mu_k^{-1}(\zeta_{\ell}) \geq 2\,\left(\gamma^2_{\ell+1}\mu_k^{-1}(\zeta^+_{\ell})-\gamma^2_{\ell}\mu_k^{-1}(\zeta_{\ell})\right)\,\left(\int^1_{\mu^{-1}_k(\zeta_{\ell}^+)} \mu_k([0,x])\,dx \right)^2 \ .
\ee
Since $\mu_k^{-1}(\zeta_{\ell})\leq \mu_k^{-1}(\zeta_{\ell}^+)$ the statement holds true.
\end{proof}

\section{Conclusions and perspectives}\label{sec:conclusions}

In this paper we have investigated the phase diagram of the multiscale Sherrington-Kirkpatrick model and proved some new features. First of all, we clarify the underlying synchronization property of the solution of the variational problem \eqref{eq:var_principle}. 
Following that, we computed the limiting overlap moments in the various multiscale measures in Theorem~\ref{thm:overlap_moments}, whose ordering is shown to reflect consistently the synchronization property mentioned earlier. In Theorem~\ref{thm:annealed} we exhibit a high-temperature, { or weak-coupling}, sufficient condition for an annealed solution to hold, which entails also the vanishing of all the overlap second moments. Theorem~\ref{thm:minRSB} instead pinpoints a {strong coupling regime} in which, even in absence of an external magnetic field, the solution of the variational principle \eqref{eq:var_principle} must present at least $r$ (i.e. the number of scales originally planted in the model) distinct points in its support.

\begin{figure}[h!]
    \centering
    \begin{tikzpicture}[scale=2, every node/.style={font=\small}]
    \draw[->, thick] (0,0) -- (4,0) node[right] {$\gamma^2$};

    \fill[red!10] (0,0.1) rectangle (1,-0.1);
    \draw[red, ultra thick] (0,0) -- (1,0);
    \fill[blue!10] (1,0.1) rectangle (3.95,-0.1);
    \draw[blue, ultra thick] (2.2,0) -- (3.95,0);
    \fill[shading = axis, left color=red!10, right color=blue!10] (1,0.1) rectangle (2.2,-0.1);
    \draw[purple, ultra thick] (1,0) -- (2.2,0);


    \draw[thick] (0,0.1) -- (0,-0.1);
    \draw[thick] (1,0.1) -- (1,-0.1);
    \draw[thick] (2.2,0.1) -- (2.2,-0.1);
    \node[right] at (0.0,-0.25) {$0$};
    \node[right] at (2.2,-0.25){$\gamma_1^2>\frac{1}{2\zeta^2_1}$};

    \draw[dashed] (0,0.6) -- (0,-0.6);
    \draw[dashed] (1,0.6) -- (1,-0.6) ;
    \node[left] at (1,-0.25) {$\gamma_r^2\leq\frac{1}{2}$};
    \draw[dashed] (2.2,0.6) -- (2.2,-0.6) ;
\end{tikzpicture}
    \caption{Synthetic representation of our findings.}
    \label{fig:phase_diag}
\end{figure}

{
\figurename~\ref{fig:phase_diag} summarizes our results until Theorem \ref{thm:minRSB}, in absence of external magnetic field. All $\gamma$'s have been collapsed on a 1-D line for the sake of visualization. The red part stands for the annealed region, where $\gamma_r^2\leq 1/2$. Afterwards there is an intermediate phase where a \emph{partial annealing} may take place, namely the conditional probabilities $\rho_{\mu,\ell}$ have all the mass in $0$ up to a certain $\ell$. In this shaded region, some of the $\gamma^2_\ell$'s can be greater than $1/(2\zeta_\ell^2)$, and some smaller. Following that, we have the strong coupling region highlighted in blue, where partial annealing together with finite RSB is forbidden by Theorem
\ref{thm:minRSB}.  We believe moreover that partial annealing is not possible in this region in any cases. We plan to return to this matter in a subsequent work.

Finally, in  Theorem~\ref{thm:plateau} we proved that, under suitable conditions, the CDF associated to a solution of the variational problem \eqref{eq:var_principle},   must have plateaus in correspondence of the values $\zeta_\ell$. In said Theorem we give a lower bound on the size of such plateaus, which also holds under a local (i.e.\ for every $\ell$) weak-coupling condition. The existence of at least one plateau is ensured only if 
all the mass is not concentrated at zero. This for sure happens when the external magnetic field is present, i.e. $\EE h^2>0$, as proved in Theorem \ref{thm:minRSB}, or when $\gamma_r^2>1/2$ as proved in Theorem \ref{thm:annealed}. 
We also note that for ``very strong'' couplings full replica symmetry breaking can occur without the Parisi measure manifesting gaps in the support.

Let us now discuss some possible perspectives on this model. It would be interesting to further investigate the shaded region. More precisely, we were able to prove a sufficient condition for an atom to detach from zero at a certain time scale, which is the content of Lemma \ref{lem:partial_annealing--}. It is thus natural to wonder if it is also necessary, thus leading to a condition similar to that for the total annealing of Theorem \ref{thm:annealed}. This in turn would give us a fine control on the possible partial annealing phenomenon. Furthermore, it is not clear if the condition $\gamma^2_{\ell+1}<1/(2\zeta_\ell)$ of Theorem \ref{thm:plateau} is fundamental, or it is just a technical artifact. Finally, in presence of external fields instead, it would be interesting to establish a type of de Almeida-Thouless condition for the multiscale SK model, namely, to check if there exists a range of the parameters where each or part of the conditional measures $\rho_{\mu^*,\ell}$ are Dirac deltas centered away from zero. In more physical terms, this would amount to check replica symmetry for each of the thermodynamic scales.}

To conclude, we plan to investigate the effect of the multiscale structure combined with other mean-field models, such as multispecies models \cite{barra2015multi,panchenko_multi-SK} where different types of synchronization (species and scales) must coexist.  

\section{Acknowledgements}
FC, PC and EM were supported by the EU H2020 ICT48 project Humane AI Net contract number 952026; by the Italian Extended Partnership PE01 - FAIR Future Artificial Intelligence Research - Proposal code PE00000013 under the MUR National Recovery and Resilience Plan; by the project PRIN22CONTUCCI, 2022B5LF52,  Boltzmann Machines beyond the independent identically distributed Paradigm: a Mathematical Physics Approach, CUP J53D23003690006.\\ 
DT was supported by project SERICS (PE00000014) under the MUR National Re-
covery and Resilience Plan funded by the European Union - NextGenerationEU; by the project PRIN22TANTARI, 20229T9EAT, Statistical Mechanics of Learning Machines: from algorithmic and information-theoretical limits to new biologically inspired paradigms, 
CUP J53D23003640001. The work is partially supported by GNFM (Indam). PC, EM, and DT are affiliated to GNFM-INdAM.

\bibliography{AHP_revised}

\end{document}